\def\draft{1}  
\renewcommand{\yihan}[1]{{}}
\renewcommand{\laxman}[1]{{}}
\renewcommand{\yan}[1]{{}}
\renewcommand{\guy}[1]{{}}
\newcommand{\myfunc}[1]{{\texttt{#1}}}
\newcommand{\emp}[1]{{\textbf{\textit{#1}}}}
\newcommand{\ourtree}{PaC-tree\xspace}
\newcommand{\ourname}{PaC\xspace}
\newcommand{\ctree}{C-tree\xspace}
\newcommand{\aspen}{Aspen\xspace}
\newcommand{\pam}{PAM\xspace}
\newcommand{\cpam}{CPAM\xspace}
\newcommand{\ptree}{P-tree\xspace}
\newcommand{\block}{block\xspace}
\newcommand{\B}{B}
\newcommand{\lc}{\mathit{lc}}
\newcommand{\rc}{\mathit{rc}}
\newcommand{\nil}{\textit{nil}}
\newcommand{\composite}{complex}
\newcommand{\simple}{simplex}
\newcommand{\exploded}{expanded}
\newcommand{\explode}{expand}
\newcommand{\ourgraph}{\ourtree{} (Diff)}
\newcommand{\blockednode}{flat node}
\newcommand{\compressednode}{\blockednode}
\newcommand{\regularnode}{regular node}
\newcommand{\internalnode}{\regularnode}
\newcommand{\leafnode}{leaf node}
\newcommand{\extraptr}{extra pointer}
\newcommand{\tsplit}{\myfunc{split}}
\newcommand{\join}{\myfunc{join}}
\newcommand{\union}{\myfunc{union}}
\newcommand{\intersection}{\myfunc{intersection}}
\newcommand{\difference}{\myfunc{difference}}
\newcommand{\multiinsert}{\myfunc{multi\_insert}}
\newcommand{\multidelete}{\myfunc{multi\_delete}}
\newcommand{\joinTwo}{\myfunc{join2}}
\newcommand{\expose}{\myfunc{expose}}
\newcommand{\fold}{\myfunc{fold}}
\newcommand{\unfold}{\myfunc{unfold}}
\newcommand{\refold}{\myfunc{refold}}
\newcommand{\op}{\myfunc{op}}
\newcommand{\opbase}{\myfunc{op\_base}}
\newcommand{\node}{\myfunc{node}}
\newcommand{\filter}{\myfunc{filter}}
\newcommand{\mapreduce}{\myfunc{map\_reduce}}
\newcommand{\range}{\myfunc{range}}
\newcommand{\find}{\myfunc{find}}
\newcommand{\append}{\myfunc{append}}
\newcommand{\Tl}{T_L}
\newcommand{\Tr}{T_R}
\newcommand{\true}{\mb{true}}
\newcommand{\false}{\mb{false}}
\newcommand{\decomposedtree}{decomposed tree}
\newcommand{\subsettree}{subset tree}
\newcommand{\dealswith}{processes}
\newcommand{\dealwith}{process}
\def\StartLineAt#1{\lstset{firstnumber=#1}}
\definecolor{best}{rgb}{0.0, 0.5, 0.0}
\newcommand{\best}[1]{\color{best}{\underline{#1}}}
\definecolor{mygreen}{rgb}{0.0, 0.5, 0.0}
\newcommand{\revised}[1]{{#1}}
\begin{document}

\title{\ourtree{}s: Supporting Parallel and Compressed Purely-Functional Collections}         



  \author{Laxman Dhulipala}
  \affiliation{\institution{University of Maryland}}
  \email{laxman@umd.edu}
  \author{Guy Blelloch}
  \affiliation{\institution{Carnegie Mellon University}}
  \email{guyb@cs.cmu.edu}
  \author{Yan Gu}
  \affiliation{\institution{UC Riverside}}
  \email{ygu@cs.ucr.edu}
  \author{Yihan Sun}
  \affiliation{\institution{UC Riverside}}
  \email{yihans@cs.ucr.edu}

\begin{abstract}
Many modern programming languages are shifting toward a functional
style for collection interfaces such as sets, maps, and sequences.
Functional interfaces offer many advantages, including being safe for
parallelism and providing simple and lightweight snapshots.
However, existing high-performance functional interfaces such as \pam{}, which are
based on balanced purely-functional trees, incur large space overheads for
large-scale data analysis due to storing every element in a separate
node in a tree.

This paper presents \ourtree{s}, a purely-functional data structure
supporting functional interfaces for sets, maps, and sequences that
provides a significant reduction in space over existing approaches.
A \ourtree{} is a balanced binary search tree which blocks the leaves
and compresses the blocks using arrays.
We provide novel techniques for compressing and uncompressing the
blocks which yield practical parallel functional algorithms for a
broad set of operations on \ourtree{s} such as union, intersection, filter, reduction, and range
queries which are both theoretically and practically efficient.

Using \ourtree{s} we designed \cpam{}, a C++ library that implements
the full functionality of \pam{}, while offering significant extra
functionality for compression.
\cpam{} consistently matches or outperforms \pam{} on a set of
microbenchmarks on sets, maps, and sequences while using about a
quarter of the space.
On applications including inverted indices, 2D range queries, and 1D
interval queries, \cpam{} is competitive with or faster than PAM,
while using 2.1--7.8x less space. 
For static and streaming graph processing, \cpam{} offers 1.6x faster
batch updates while using 1.3--2.6x less space than the
state-of-the-art graph processing system Aspen.

\end{abstract}

\begin{CCSXML}
<ccs2012>
<concept>
<concept_id>10011007.10011006.10011008</concept_id>
<concept_desc>Software and its engineering~General programming languages</concept_desc>
<concept_significance>500</concept_significance>
</concept>
<concept>
<concept_id>10003456.10003457.10003521.10003525</concept_id>
<concept_desc>Social and professional topics~History of programming languages</concept_desc>
<concept_significance>300</concept_significance>
</concept>
</ccs2012>
\end{CCSXML}

\ccsdesc[500]{Software and its engineering~General programming languages}
\ccsdesc[300]{Social and professional topics~History of programming languages}


\maketitle

\section{Introduction}

Almost all modern programming languages include extensive support for
collections, such as sets, maps, and sequences either as libraries or
built-in data types.
Support for such collections has become the cornerstone of large-scale
data processing, as exemplified by systems such as Apache
Spark~\cite{ZahariaXWDADMRV16}.
Among the interfaces for collections, there has been a trend towards a
functional style, shying away from mutation (e.g., Spark is functional).
Functional interfaces have several advantages over mutating ones,
including being safe for parallelism, allowing safe composition,
permitting flexible implementations (e.g., using copies when helpful),
and supporting snapshots.
Supporting snapshots is particularly useful in scenarios in which a
stream of updates is being made to a collection which is concurrently being
analyzed~\cite{cheng2012kineograph,macko2015llama,Kemper0FFLMMR13,dhulipala2019low}.

Recent work~\cite{pam} has developed a purely functional library,
\pam{}, for representing sequences, ordered sets, ordered maps, and
augmented maps (defined in \cite{pam}) using balanced trees, called
\emph{P-trees}.
P-trees use path copying to perform updates, supporting functional
updates at a reasonably low cost (e.g., $O(\log n)$ per point update).
However they come at a cost of high space usage---every element
requires a node in the tree.
This is particularly problematic for large-scale data analysis, since
in large-systems memory is often the dominating cost.

\begin{figure}[!t]
  \hspace*{-0.25cm}\includegraphics[width=0.4\textwidth]{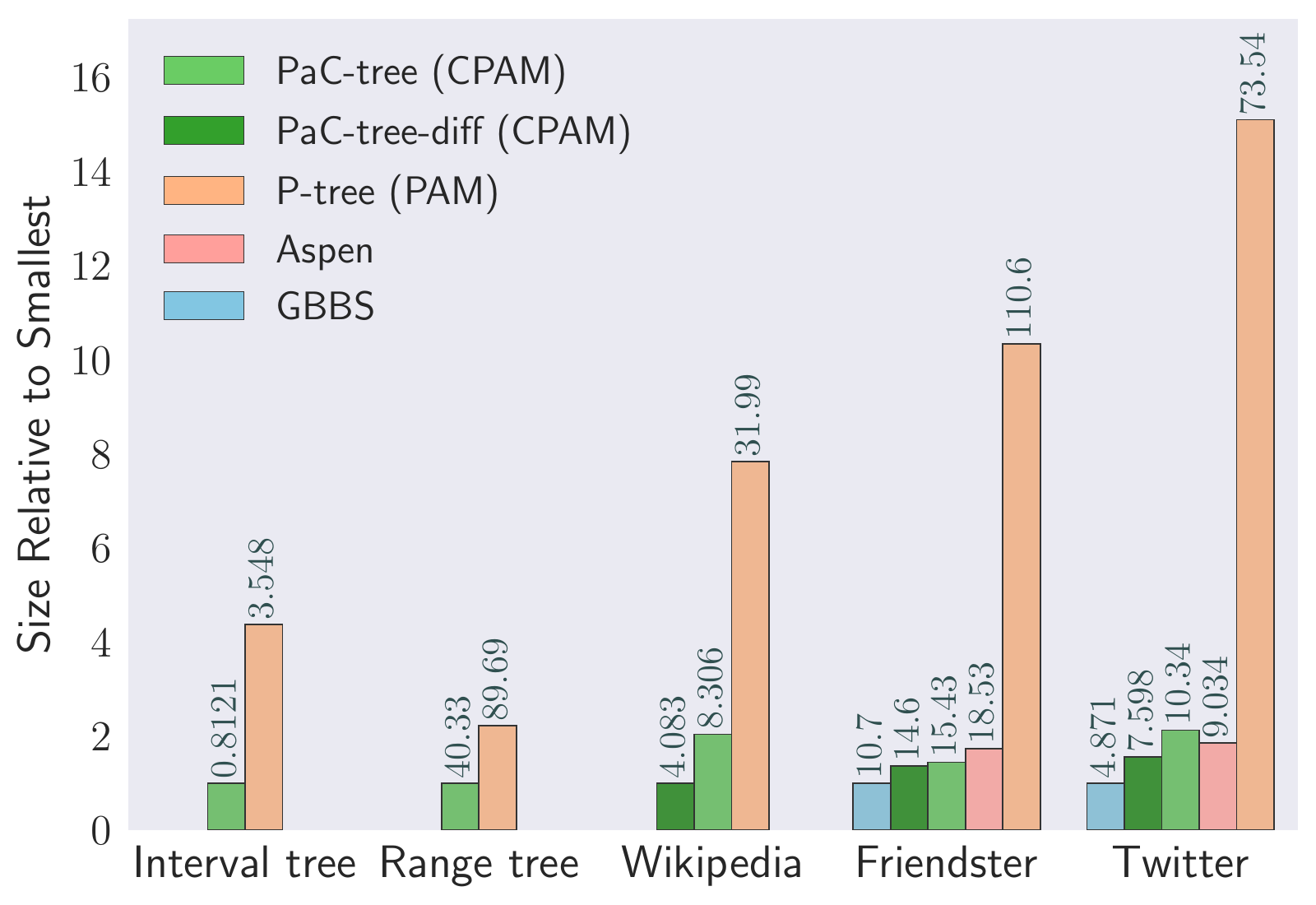}
    \caption{\small Relative sizes of the interval tree, range tree,
    inverted index (Wikipedia corpus), and graph representations
    (Twitter, Friendster) studied in this paper using \ourtree{}s from
    CPAM (using $B=128$) and other systems. Lower is better.
    The numbers shown on top of the bars are the sizes of each
    representation in GiB.
    \ourtree{}-diff compresses integer keys using difference encoding.
    The \ctree{}s from Aspen~\cite{dhulipala2019low} also support
    difference encoding.  GBBS is the \emph{static} compressed graph
    representation from the Graph Based Benchmark
    Suite~\cite{DhulipalaBS21} which uses difference
    encoding, and serves as a baseline for the tree-based graph
    representations.
      }
    \label{fig:size-comparison-intro} \vspace{-1em}
\end{figure}

In this paper we present \defn{Parallel Compressed trees
(\ourtree{}s)}: a purely-functional data structure for supporting a
similar functionality as P-trees but with significant reduction in
space---up to an order of magnitude (see
\cref{fig:size-comparison-intro}).  Our approach is based on
blocking the leaves and compressing the blocks using arrays (see \cref{fig:ourtree}).
We present innovative techniques for compressing and uncompressing the
blocks without needing to re-implement the full functionality of
P-trees.
Importantly, in the paper we analyze the cost of all the operations as
a function of the block size $B$ as well as the collection size.  
This is analyzed both in terms of the work (runtime
sequentially) and span (longest dependent path in parallel).  The
costs for a sample of the supported functions are given in
\cref{table:algorithmcosts}.  These costs can help the user
decide on a block size for their particular application---a parameter
that can be specified when creating a collection.

\begin{table}[tbp]
  \footnotesize
  \centering

\hspace*{-0.5em}
\begin{tabular}[!t]{llcc}
    \toprule
    & \textbf{Primitive} & Work & Span \\
    \midrule
    \parbox[t]{2.5mm}{\multirow{9}{*}{\rotatebox[origin=c]{90}{\bf Sequence}}}
    & \emph{Build}  &  $O(n)$  &  $O(\log n)$    \\
    & \emph{Map}  &  $O(n)$ &  $O(\log n)$  \\
    & \emph{Filter}  &  $O(n)$ &  $O(\log n)$ \\
    & \emph{Reduce}  &  $O(n)$ &  $O(\log n)$ \\
    & \emph{Take}    &  $O(\log n + B)$  &  $O(\log n)$ \\
    & \emph{$n$-th}  &  $O(\log n + B)$  &  $O(\log n)$ \\
    & \emph{FindFirst}  &  $O(k)$ &  $O(\log n)$        \\
    & \emph{Append}$^{\dagger}$   & $O(\log n + B)$  & $O(\log n)$  \\
    & \emph{Reverse}$^{\dagger}$  & $O(n)$  &  $O(\log n)$   \\
    \midrule
    \parbox[t]{2.5mm}{\multirow{8}{*}{\rotatebox[origin=c]{90}{\bf Set and Map}}}
    & \emph{Build}  &  $O(n \log n)$  &  $O(\log n)$   \\
    & \emph{Next/Previous}  &  $O(\log n + B)$ &  $O(\log n)$  \\
    & \emph{Rank}  &  $O(\log n + B)$  &  $O(\log n)$  \\
    & \emph{Range}  &  $O(\log n + B)$ &  $O(\log n)$\\
    & \emph{Insert}  &  $O(\log n + B)$ &  $O(\log n)$ \\
    & \emph{Union}  &  $O(m\log\frac{n}{m} + \min(mB, n))$ &  $O(\log n \log m)$ \\
    & \emph{Intersect}  &  $O(m\log\frac{n}{m} + \min(mB, n))$ &  $O(\log n \log m)$  \\
    & \emph{Difference}  &  $O(m\log\frac{n}{m} + \min(mB, n))$ &  $O(\log n \log m)$  \\
    \bottomrule
  \end{tabular}
  \par
    \caption{\small \textbf{Primitives from the Sequence, Set, and Map
    interfaces in CPAM, including the work and span bounds.}
    Note that primitives marked with $^{\dagger}$ are specific to
    Sequences, and Set and Map primitives cannot be applied to
    Sequences.
    $m,n$ are defined to be the size of the smaller and larger sets,
    respectively. $B$ is the block size (the size of a \block{ed} leaf
    in a \ourtree{}). We assume a parallelizable encoding for the
    span bounds.
    }
    \label{table:algorithmcosts}
\end{table}

Using \ourtree{}s we have implemented \cpam{}: a C++ library which
implements the full functionality of \pam{}, along with significant
extra functionality involving compression.
By default \cpam{} supports difference (or delta)
encoding~\cite{Compression} within the blocked leaves.  In such an
encoding, each element is encoded based on the value of the previous
element in the collection.
This can greatly reduce space when elements that are close in the
ordering of the collection are related.  For example, if a graph is
numbered so that neighboring vertices have similar indices, then the
neighbors in a neighbor list will have small differences.
These small numbers can then be encoded in a handful of bits
each~\cite{shun2015smaller}.
Similarly in an inverted index where each word points to a sequence of
documents it appears in, if the documents are sorted, the differences
between adjacent document identifiers can be small.
This is especially true for common words, which take up the bulk of
the space.  In the paper we bound the extra space needed (due to the
index using the tree structure) for \ourtree{s} compared to a static
representation of the data (i.e., an array) directly using difference
encoding (see \cref{thm:space}).

In our default blocked representation, the first element of a block is
represented uncompressed, and the rest of the elements are compressed
relative to the previous element.
In addition to delta-encoding, \cpam{} also supplies an interface for
the user to define their own form of compression for each block.
For example, they can quantize values, or use other variable length
codes when keys are known to be small.
%
\cpam{} uses a reference counting garbage collector to manage the
memory for both the internal nodes and the compressed leaf nodes,
which can be of variable size due to compression.

\cpam{} supports augmentation in which each tree node maintains
an aggregate of the values of its subtree (see more details in
\cref{sec:prelim}).
The aggregation function is declared as part of the type of the tree.
Augmentation is useful in many applications, and indeed we use it in
all of the applications we describe later.
\ourtree{}s store an augmented value per internal node, and one for
each block at the leaves. Storing one value per block significantly
reduces space relative to \ptree{}s in \pam, which store a value for
every element.

\begin{figure}[!t]
  \hspace*{-0.25cm}\includegraphics[width=0.5\textwidth]{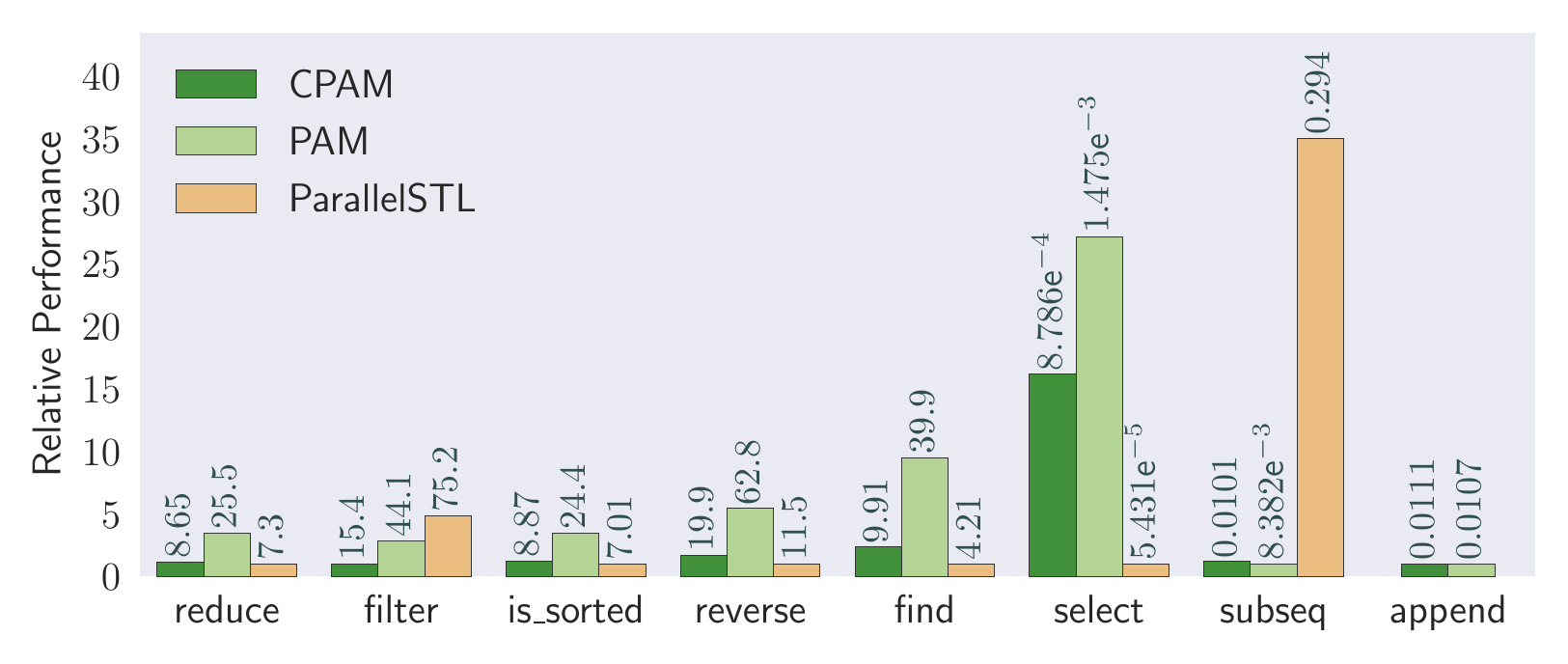}
    \caption{\small Relative performance of sequence primitives in
    CPAM (using $B=128$), PAM, and
    ParallelSTL~\cite{ISOCPPparallelism} on a 72-core machine with
    2-way hyper-threading enabled. The numbers shown on top of the
    bars are the parallel (144-thread) running times in milliseconds. Lower is
    better. All benchmarks are run on sequences of length $10^{8}$
    containing 8-byte elements. For \append{},
    ParallelSTL takes 17.7 milliseconds on average (1594x larger than
    \append{} in CPAM). CPAM and PAM represent sequences using
    purely-functional trees, whereas ParallelSTL uses arrays (hence
    static).
  }
\label{fig:comp}
  \vspace{-1em}
\end{figure}

To demonstrate the effectiveness of \ourtree{}s, and their
implementation in \cpam{}, we measure performance and space usage on
(1) a collection of microbenchmarks that directly use some of the
functions supported by the library, and (2) a handful of real-world
applications.

For the microbenchmarks, we compare the performance of \cpam{} to
\pam{}, and for sequences to the Intel implementation of the C++17
parallel STL library~\cite{ISOCPPparallelism} (ParallelSTL).
ParallelSTL is a highly optimized library supporting only sequences
based on arrays.
A summary of the results for sequences is given \cref{fig:comp}, and
details including performance of ordered maps, and augmented maps are
given in \cref{sec:exps}.
Compared to \pam{}, \cpam{} achieves significantly better performance
due to the reduced memory footprint, and hence reduced number of cache
misses, while only requiring about 1/4-th as much space even without
compression.
Compared to ParallelSTL, \cpam{} has similar performance on operations that
visit the whole sequence, like \texttt{reduce}, but is significantly
slower on \texttt{nth} since it requires $O(\log n + B)$ work as
opposed to $O(1)$ for a random array access for ParallelSTL.
On \texttt{append} \cpam{} is significantly faster since it requires
$O(\log n +B)$ work to join to trees instead of $O(n)$ required by
ParallelSTL to copy the input arrays into the output array.

We consider four applications: graphs, inverted indices, 2D range
queries and 1D interval queries.
For inverted indices, 2D range query and 1D interval query, \cpam{}
achieves competitive performance to \pam{} while using 2.1x--7.8x less
space.
For graph processing, we compare to an existing system
\aspen{}~\cite{dhulipala2019low} that represents graphs using trees.
\cpam{} uses 1.3--2.6x less space compared to \aspen{}, and is almost
always faster than \aspen{} in all tested graph algorithms.

The main contributions of this paper are:
\setlength{\itemsep}{0pt}
\begin{itemize}[topsep=1.5pt, partopsep=0pt,leftmargin=*]
  \item A new functional data structure, \ourtree{}s, and associated
    parallel algorithms that support compression for sequences, sets,
    maps and augmented maps.
  \item Theoretical bounds on the costs (work and span)
    and the space of the data structure and associated algorithms.
  \item An implementation of \ourtree{}s as a library, \cpam,
   supporting the full functionality of \pam{} in addition to
    supporting default and user defined compression
    schemes.\footnote{We have made CPAM publicly
    available: \url{https://github.com/ParAlg/CPAM}.}
 \item An experimental evaluation of the ideas and implementation on
  microbenchmarks and non-trivial applications.
\end{itemize}

\section{Related Work}
\label{sec:related}

Our work extends \ptree{}s and their C++ implementation in \pam{}~\cite{pam}.   Our
key contribution is the ability to compress the trees achieving up to
an order-of-magnitude reduction in space.  This is achieved while
being able to present cost bounds both in terms of time and space.
These bounds are a function of a block size the user can select.

B-trees~\cite{Bayer1972} and their variants block not just the leaves
but all nodes of a tree, such that internal nodes can have a high
fan-out.  They are widely used in practice, especially for disk based
data structures since nodes are on the scale of a page on disk and can
be retrieved efficiently.
However they are less relevant in the context of purely functional
in-memory trees.
In particular, path copying requires that an update copy all nodes on
the path from the root to the leaf.  If the nodes are large (e.g. 128+
elements each, as in our leaves) this copying would be very expensive
both in terms of space and time.
\revised{Various work has suggested blocking the leaves of a binary tree to
represent sequences~\cite{Acar14,Fluet08,Boehm95,yi08,kmett10}.
The idea is to reduce the cost of operations such as append or
subsequence relative to array representations.  As far as we know,
these ideas have never been applied to ordered sets or ordered maps.\footnote{We note that the design of the chunked sequence datatype~\cite{Acar14}
could in principle be extended to support sets, maps, and augmented
maps, although the implementation is specialized for ephemeral
sequences.}
We also do not know of work that then compresses within the blocks.
}

\aspen{}~\cite{dhulipala2019low} is a system for graph processing,
based on purely functional trees and uses compression for the neighbor
lists.  At a high-level, our goals are shared with \aspen{} (e.g.,
non-mutating updates), but \aspen{} has several limitations.
Importantly it is only designed for graphs, supporting only a small
part of the functionality of \cpam{}.
The tree representation in \aspen{} is also very different.  It
randomly selects elements from the collection to be \emph{heads}.  It
then attaches a block of nodes to each head corresponding to the keys
between the head and the next head, and puts the heads into a binary
tree.
\revised{\ourtree{}s do not require randomization, and have stronger
theoretical bounds for primitive operations such as \union{} than the
bounds provided by \ctree{}s in \aspen{}.  We use \cpam{} to implement
the full functionality of \aspen{} and compare to \aspen{} in
\cref{sec:graphs}.}

\begin{figure}
  \centering
  \includegraphics[width=\columnwidth]{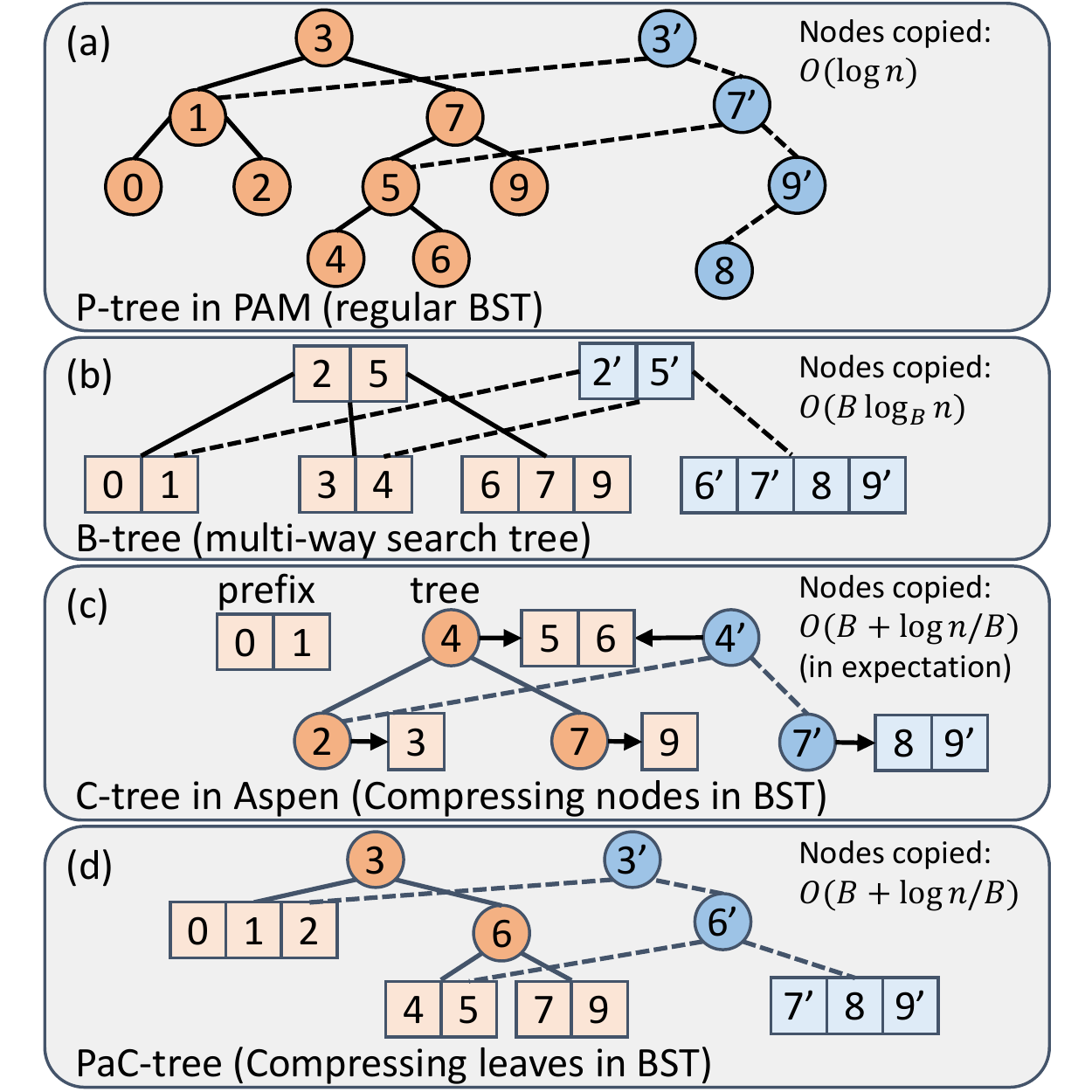}
  \caption{\small {An illustration of (a) \ptree{} in \pam \cite{blelloch2016just,pam} (regular BST), (b) B-tree (multi-way search tree), (c) \ctree \cite{dhulipala2019low} in \aspen{} (compressing all nodes in a BST) and (d) our \ourtree{} (compressing all leaves in a BST) in \cpam{}.} The orange nodes show a tree with keys 0-7 and 9. We then consider inserting a key 8. Blue nodes are what we need to create (copy or new) due to path-copying. Round nodes are tree nodes each storing a single key, and square nodes are organized in blocks of size $O(B)$ (expected for C-trees).
  Let $n$ be the tree size, an insertion needs to copy $O(\log n)$ nodes in \ptree{}, $O(B \log_Bn)$ in B-tree, and $O(B+\log(n/B))$ in \ctree{} (in expectation) or \ourtree{}.
  }\label{fig:alltrees}
\end{figure}

\cref{fig:alltrees} compares \ptree{}s from \pam, functional B-trees,
C-trees from \aspen, and \ourtree{}s.  The comparison illustrates how
they differ when inserting a new key.

Like \cpam{}, the Apache Spark~\cite{ZahariaXWDADMRV16} system
supports a functional interface for collections.   However it has
significant differences.   Firstly it only supports unordered sets.
Secondly although it has a shared-memory parallel implementation, it
is primarily designed for a distributed setting.  This means its
shared-memory implementation is not ideal.\footnote{Their shared-memory
implementation is between 3.2--4.9x slower than \cpam{} for a
map, reduce, and group-by style example taken from their user guide.
For primitives such as map and reduce, their implementation performs
up to 2 orders of magnitude worse than \cpam{} 
\ifx\confversion\undefined
(see \cref{sec:spark})}
\else
(see the full version).
\fi

\revised{There is extensive research on concurrent tree data
structures~\cite{ellen2010non,braginsky2012lock,Brown18,KungL80,Natarajan14,bronson10,aksenov2017concurrency}.
This work is mostly orthogonal to our work.
Such trees support a fraction of the functionality of \cpam{},
typically just supporting linearizable inserts, deletes, updates and
finds.   Some recent work support range
queries~\cite{fatourou2019persistent,basin2017kiwi}, or arbitrary
queries on a snapshot~\cite{WeiBBFR021}.  On the other hand concurrent
trees support asynchronous updates, which \ourtree{}s do not---such
updates are inherently non-functional.    To support multiple
concurrent updates, \ourtree{}s would require batching the update and
applying as a batch in parallel (fairly comparing concurrent and
batched structures like \ourtree{} seems challenging for this reason).
We expect the use cases would be quite different.}

Blandford and Blelloch developed tree structures for ordered sets that support compression~\cite{BB04}.  They present space bounds that are similar to ours, in terms of relating the space of a difference encoded sequence to the space of the data structure.  However they support a small fraction of the functionality described in our work.

Functional trees using path-copying date back to at least the early 1990s~\cite{adams1993functional}, and in the sequential
setting have been studied by Kaplan and Tarjan~\cite{KaplanT96} and Okasaki~\cite{okasaki1999purely}. 
\section{Preliminaries}
\label{sec:prelim}
\myparagraph{Binary search trees.} A \emph{binary search tree} (BST) is either an empty node, denoted as \nil{}, or a
node consisting of a \emph{left} BST $T_L$, a key $k$ (or with an associated value), and
a \emph{right} BST $T_R$, denoted \node$(\Tl,k,\Tr)$, where $k$ is larger than all keys in $T_L$
and smaller than all keys in $T_R$.
We use $\lc(T)$ and $\rc(T)$ to extract the left and right subtrees of $T$, respectively, and use $k(T)$
to denote the key stored at $T$'s root.
The \emph{size} of a BST $T$, or $|T|$, is the number of nodes in $T$.
The \emph{weight} of a BST $T$, or $w(T)$, is $1+|T|$.
The \emph{height} of a BST $T$, or $h(T)$,
is $0$ for \nil{}, and $\max(h(\lc(T)), h(\rc(T))) + 1$ otherwise.
\hide{\emph{Parent}, \emph{child}, \emph{ancestor} and
\emph{descendant} are defined as usual (ancestor and descendant are
inclusive).}
A tree node is a leaf if it has no children, and a \internalnode{} otherwise.
The \emph{left (right) spine} of a binary tree is the path of nodes from the root to a \nil{} node, always
following the left (right) tree.
\hide{
A \emph{binary tree} is either an empty node, denoted as \nil{}, or a
node consisting of a \emph{left} binary tree $T_L$, an entry $e$, and
a \emph{right} binary tree $T_R$, denoted \node$(\Tl,e,\Tr)$.
The \emph{size} of a binary tree, or $|T|$, is the number of nodes in $T$.
The \emph{weight} of a binary tree $T$, denoted as $w(T)$, is $1+|T|$.
The \emph{height} of a binary tree, or $h(T)$,
is $0$ for a \nil{} node, and $\max(h(\Tl), h(\Tr)) + 1$ for a
\node$(\Tl,v,\Tr)$.
\emph{Parent}, \emph{child}, \emph{ancestor} and
\emph{descendant} are defined as usual (ancestor and descendant are
inclusive).
The \emph{left (right) spine} of a binary tree is the path of nodes from the root to a \nil{} node, always
following the left (right) tree.
The \emph{in-order values} of a binary tree
is the sequence of values returned by an in-order traversal of the
tree.
A tree is a \emph{binary search tree} when the entries $e$ contains a key, noted as $k(e)$, and the in-order values of the tree is a sorted sequence with respect to a total ordering on keys. In this work, we always assume the binary trees are binary search trees maintaining ordered sets or maps.
}

A \emph{weight-balanced tree}, or BB$[\alpha]$ trees~\cite{nievergelt1973binary} is a BST where for every
$T=\node(\Tl,v,\Tr)$, $\alpha \le \frac{w(\Tl)}{w(T)}\le 1-\alpha$.
We omit the parameter $\alpha$ with clear context.
\hide{We say two
weight-balanced trees $T_1$ and $T_2$ have \emph{balanced} weights if
$\node(T_1,e,T_2)$ is weight balanced.  }
A weight-balanced
tree $T$ has height at most $\log_{\frac{1}{1-\alpha}}w(T)$.
\hide{
For
$\frac{2}{11} < \alpha \le 1-\frac{1}{\sqrt{2}}$ insertion and
deletion can be implemented on weight balanced trees using just single
and double rotations~\cite{weightbalanced,blum1980average}.  We
require the same condition for our implementation of \join{}, and in
particular use $\alpha=0.29$ in experiments.}

\myparagraph{Parallelism.}
Our implementation of \ourtree{s} is based on nested fork-join parallelism~\cite{CLRS,frigo1998implementation,Java-fork-join}.
We analyze our algorithms use work-span model based on binary-forking~\cite{BlellochF0020}.
The \emph{work} $W$
of a parallel algorithm is the total number of operations, while
the \emph{span} is the critical path length of
its computational DAG.
We use $s_1~||~s_2$ to indicate
that statements $s_1$ and $s_2$ can run in parallel.
Almost all algorithms use divide-and-conquer to enable parallelism.
Any computation
with $W$ work and $S$ span will run in time $T < \frac{W}{P} + S$ on
$P$ processors assuming shared memory and a greedy scheduler
\cite{Brent74,blumofe1999scheduling}.
We use $\log n$ to denote $\log_2(n+1)$ in the cost bounds.

\myparagraph{Encoding schemes.}
We use \defn{Difference Encoding} (DE) to encode integer keys.
Given a sorted set of keys, $K$, the
difference encoding scheme stores the differences between consecutive
keys using an integer code, such as byte or $\gamma$ codes. We only
consider byte codes in this paper since they are cheap to encode and
decode and do not waste much space compared to using $\gamma$
codes~\cite{shun2015smaller}.


\hide{
\section{Background and Technical Overview}
\label{sec:overview}

Here we discuss some previous work and an overview of \ourtree{s}.
We note that there are a rich literature on parallel and concurrent tree
structures \cite{agrawal2018parallel,basin2017kiwi,hassan2015transactional,tian2019transforming,aksenov2017concurrency,brown2020non}, but they did not consider compression
or SI with batch updates.
There are also many parallel graph processing systems~\cite{gilbert2021performance,kabir2017shared,ediger2012stinger, feng2015distinger,
green2016custinger,Yin2018,cheng2012kineograph,macko2015llama},
but they did not consider supporting a system allowing for general ordered map interface.
Therefore, we focus on the two most relevant previous work,
which are \ptree{s}
in \pam{} and \ctree{s} in \aspen{}.
In 2016, Blelloch et al.
proposed \emp{\ptree{s}} for implementing parallel ordered
maps, which is later
implemented in a library PAM~\cite{pam,sun2019supporting}.
In 2019, Dhulipala et al. \cite{dhulipala2019low} extended
\ptree{s} to allow for compression (see details below), called \emp{\ctree{s}},
which are implemented in a graph streaming system \aspen{}.


\hide{
\myparagraph{\ptree{s} and \ctree{s}.}
\hide{In 2016, Blelloch et al.
proposed an algorithmic framework for implementing parallel ordered
maps using balanced binary trees.
Such data structures are later referred to as \ptree{s} and are
implemented in a library PAM~\cite{pam,sun2019supporting}.
}
In 2016, Blelloch et al.
proposed \ptree{} for implementing parallel ordered
maps, which is later
implemented in a library PAM~\cite{pam,sun2019supporting}.
\hide{\ptree{} supports algorithms including insertion and deletion,
union/insersection/difference on keys in two trees, other aggregate
functions like \filter{}, \range{}, and \mapreduce{}, as well as
supporting augmentation.
}
\hide{These algorithms are based on two primitives
\join{} and \expose{}, which extend to multiple balancing schemes
including weight balanced trees.  \ptree{s} are purely functional, and
support multi-versioning and SI using path-copying (see details
below).}  In 2019, Dhulipala et al. \cite{dhulipala2019low} extended
the data structure to allow for compression (see details below), called \emp{\ctree{s}},
which is implemented in a graph streaming system \aspen{}.

\hide{
To extend a regular binary
tree to a \ctree{}, one selects a random subset of entries from all
the entries, which are called \emph{heads}. All entries between two
heads $k_1$ and $k_2$ are maintained in an array (possibly encoded)
associated with $k_1$, called \emph{chunks}.  A \ctree{} is then
represented as a \emph{prefix}, which is the chunk before the first
head, and a balanced binary tree, where each node contains a head and
its consecutive chunk (See \cref{fig:alltrees} (c)).  This data
structure has been implemented in a library \aspen{}, and has been
shown to result in a state-of-the-art graph streaming system.
}
}
}

\myparagraph{Functional data structures.}
\ourtree{s} are purely functional data structures.
In functional data structures values are immutable, so updates
must be made by copying parts of the structure.  For search
trees, only the path to the update location needs to be copied.  Hence
for balanced trees of size $n$, single point updates such as inserts
and deletes involve copying $O(\log n)$ nodes (\cref{fig:alltrees}(a)).
This also applies to multi-point updates.
For example, if a \filter{} ends up removing a single element, only
$O(\log n)$ nodes need to be copied.  
Functional trees can also easily support multiversioning with low time
and space overhead~\cite{Ben-DavidBFRSW21,sun2019supporting}. Because
the data are immutable, any operation accesses the tree in an isolated
version.  Updates can be applied in \emph{batches} in parallel and yield a
new version. This enables all read-only queries to be performed at the
same time without being affected by ongoing (concurrent) updates.  In
addition to multiversioning, functional data structures also allow for
multiple histories.

\myparagraph{Join-based algorithms.}
\ourtree{s} are implemented using the \emph{join-based} approach~\cite{blelloch2016just,pam,sun2019parallel,sun2019supporting,GSB18,BGSS18} first
implemented in \pam{}~\cite{pam}.
In the framework, a variety of tree algorithms are implemented based
on two primitives, \join{} and \expose{}.\footnote{\pam{} did not
explicitly use \expose{} as a primitive, but only conceptually
  treated it as a primitive.}
Given a balancing scheme $\mathcal{S}$, the \join$(\Tl,e,\Tr)$
function returns a balanced tree $T$ satisfying $\mathcal{S}$ which
has the same in-order values as \node$(\Tl,e,\Tr)$.
In other words, it concatenates $\Tl$ and $\Tr$ by an entry $e$ in the
middle while preserving the balancing invariants (see
    \cref{fig:primitive}
    as an example of joining two \ourtree{s}).
The \expose{}$(\Tl)$ function returns a triple $(\Tl,e,\Tr)$, where
$e\in T$ is an entry, $\Tl$ and $\Tr$ are two binary trees such that
both $\Tl$ and $\Tr$ satisfy $\mathcal{S}$, are balanced with each
other under $\mathcal{S}$, and $\Tl$ ($\Tr$) contains all keys in $T$
that go before (after) $e$ in $T$'s in-order value.
It has been shown that on weight-balance trees with $\alpha\le
1- 1/\sqrt{2}$, a \join{} operation can be done in $O\left(\log
\frac{n}{m}\right)$ work~\cite{blelloch2016just}, where
$n=\max(|\Tl|,|\Tr|)$ and $m=\min(|\Tl|,|\Tr|)$.

Based on \join{} and \expose{}, many parallel tree algorithms can be
expressed in a simple and elegant recursive style 
\ifx\confversion\undefined
(see \cref{fig:mainalgo}, \cref{fig:otheralgotext}, and
\cref{fig:setcode} for examples).
\else
(see \cref{fig:mainalgo} and \cref{fig:otheralgotext} for examples).
\fi
We adopt the \join-based approach in our implementation, and in
particular carefully designed \join{} and \expose{} functions for
\ourtree{s}. This greatly simplifies the implementation and
correctness arguments of our algorithms. We give more details in \cref{sec:algo,sec:theory}.

\hide{
\emp{Multi-version concurrency control (MVCC) using pure functional trees.} Purely-functional (mutation-free) data structures preserve previous versions of themselves when modified, and yield a new version of the data structure reflecting the update \cite{okasaki1999purely}. Both P-trees and C-trees are purely-functional using \emph{path-copying}. The advantage of making trees purely-functional is that, one can easily support multi-version concurrency control with low time and space overhead. Because the tree is purely-functional, any operation accesses the tree in an isolated version.
Updates are applied in \emph{batches} in parallel and yield a new version. All read-only queries can be performed at the same time without being affected by ongoing updates.
}

\myparagraph{Augmentation.} An \emph{augmented tree} is a search tree where each node maintains an aggregated value (called \defn{augmented values}) of all entries in its subtree.
Typical examples would be a weighted sum, minimum or maximum of values, where we can obtain the augmented value in a node by combining augmented values of the children and itself.
This generalizes to all associative operations.
\ourtree{s} support generic user-defined augmentation for any associative operations. An example of \ourtree{} with augmentation is shown in \cref{fig:ourtree}.


\hide{
A lot of the complication comes in making this representation work
with PAM.  PAM supports several dozen operations on sequences, sets,
maps and augmented maps, and it would be significant work to write
special case implementations for all of them.  Instead, we only re-implemented
the \join{} and \expose{} functions, and all the high-level algorithms remain the same.
This leaves the implementations
mostly unchanged.
We found that the overhead is not large, but for
the most important and frequently-used operations,
we implemented special base cases for the blocked
leaves.  Some of theoretical results also require
special base cases.
}

\section{PaC-Trees}\label{sec:tree}

In this paper, we propose \ourtree{s} to support purely functional collections, which
support \emph{parallelism, determinism, compression, augmentation, strong theoretical bounds, and multi-versioning}.
\ourtree{s} are purely functional.
The base
data structure of a \ourtree{} is a weight-balanced BST.
The internal nodes remain binary so they are
cheap to copy.
The leaves in a \ourtree{} are organized in \block{s} of size $B$ to $2B$ for some parameter $B$.
An illustration is shown in \cref{fig:alltrees}.
If the \block{s} grow too large, they
are split, and if they become too small they are merged with a
neighboring node.

\begin{definition}[\ourtree{}]
A \ourtree{} $\mathcal{\mb{\ourname}}(\alpha, B,\mathcal{C})$,
parameterized by the balancing factor $\alpha$, \block{} size $B$,
and encoding scheme $\mathcal{C}$ satisfies the following invariants:

\setlength{\itemsep}{0pt}
\begin{itemize}[topsep=1.5pt, partopsep=0pt,leftmargin=*]
  \item (\textbf{Weight Balance}) For any tree node $v\in T$, $\alpha \le
  \frac{w(v_{*})}{w(v)}\le 1-\alpha$, where $\alpha\le
  1-\frac{1}{\sqrt{2}}$ is a constant, and $v_{*}$ is either $\lc(v)$
  or $\rc(v)$. Unless mentioned otherwise, we use $\alpha=0.29$.

  \item (\textbf{Blocked Leaves})\hide{If $|T|\ge B$, each \leafnode{} $u\in T$ maintains $B$ to $2B$
  entries using encoding scheme $\mathcal{C}$ in an array, as well as
  the size of the array as meta-data. When $\mathcal{C}$ is empty, it means
  the entries are directly stored in the array without compression.}
  If $|T|\ge B$, each leaf $u\in T$ maintains $B$ to $2B$
  entries in an array (called a \defn{\block{}}) using the encoding scheme $\mathcal{C}$.
  Unless mentioned otherwise, we assume $\mathcal{C}$ is empty, which means the entries are blocked without
additional compression of the entries.
\end{itemize}
\end{definition}

When the context is clear, we omit $\alpha$, $B$ and $\mathcal{C}$ in the
definition and simply call it a \emp{\ourtree}.
We call a leaf node containing multiple entries in a \ourtree{} a
\defn{\blockednode{}}, and a node
containing a single entry a \defn{\regularnode{}}.
We say a \ourtree{} (or a subtree) $T$ is a \defn{\simple{}} tree if
$|T|<B$, and thus $T$ only contains \regularnode{s}. We say a
\ourtree{} (or a subtree) $T$  is a \defn{\composite{}} tree if $T$ contains both
\regularnode{s} and \blockednode{s}.
We define the \emp{\exploded} version of a \ourtree{} $T$ (or a \blockednode{} $v$) to be a
regular binary tree (without \blockednode{s}), where
all \blockednode{s} in $T$ (or $v$ itself) are fully expanded as perfectly-balanced
binary trees. In \cref{fig:ourtree}, we show an example of an expanded tree.

\begin{figure}
  \centering
  \includegraphics[width=\columnwidth]{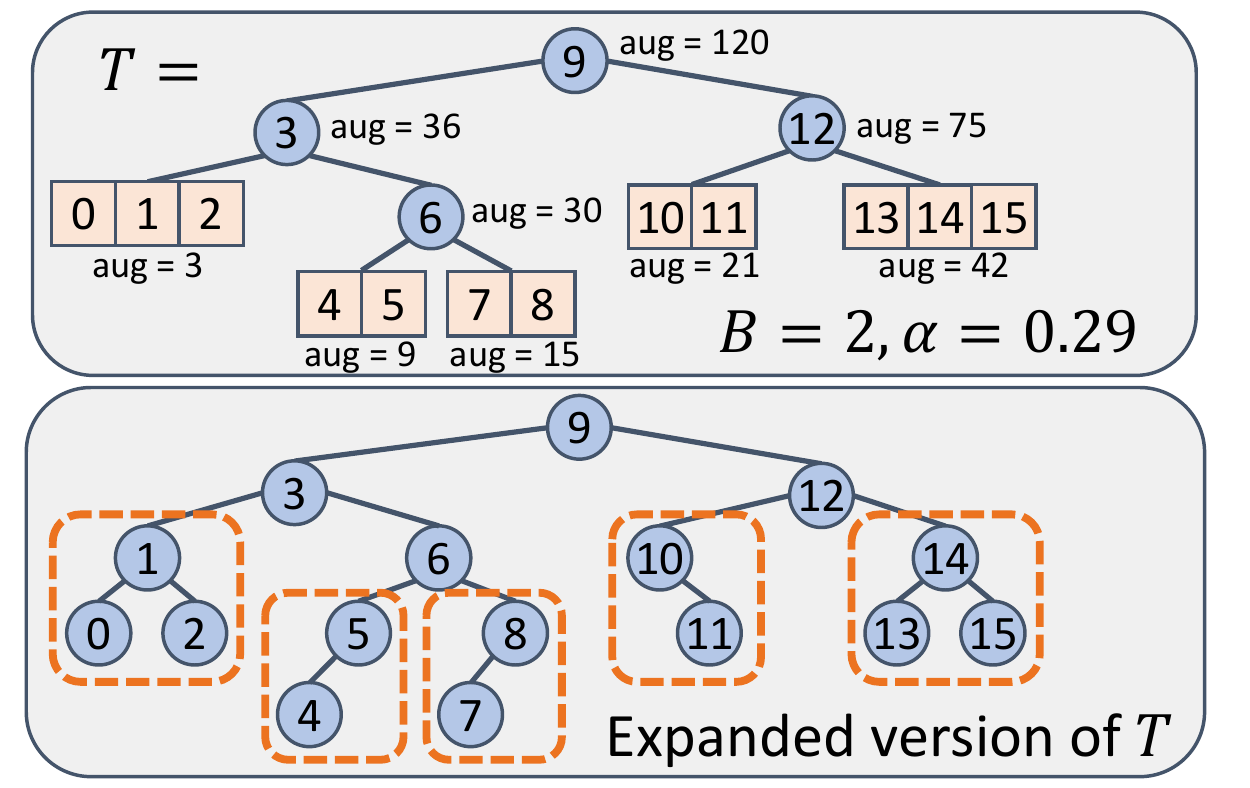}
  \caption{\small (a). An illustration of a \ourtree{} with keys $\{0,1,\dots,15\}$, and augmentation as sum of keys.
  All nodes are weight-balanced. All leaves are \block{ed} as arrays of size $B$ to $2B$.
   (b) The \exploded{} version of the \ourtree{} in (a).
   }\label{fig:ourtree}
\end{figure}

\hide{In our proofs and implementation, we assume all \blockednode{s}
maintains a sorted order of all entries. In this case, one can
build a perfectly balanced tree from a \blockednode{}, or flatten a \ourtree{} into a \block{}
in $O(B)$ work and $O(\log B)$ span. Later in the appendix, we relax \ourtree{s} to support
unsorted \blockednode{s}, which is useful for certain special
queries.
}

\hide{
In \ourtree{}, the definitions of \emph{weight}, \emph{size}, etc. are
defined similarly as on a regular weight-balanced trees (as in
\cref{sec:prelim}).
}

\hide{
We first present some useful observations about \ourtree{}.

\begin{observation}
\label{thm:leafcompress}
  For a \ourtree{} $T$ where $|T|\ge \B$, all leaves are \compressednode{s}.
\end{observation}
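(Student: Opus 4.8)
The plan is to read the statement off the definition of a \ourtree{} directly: it is an unpacking of the \textbf{Blocked Leaves} invariant together with the dichotomy of leaf types, not a calculation.

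First I would record the two kinds of leaf that a valid \ourtree{} admits. By definition a leaf is either a \regularnode{}, which stores a single entry, or a \blockednode{} (equivalently a \compressednode{}), which stores an array --- a \block{} --- of between $B$ and $2B$ entries. I would also pin down the precise form of the \textbf{Blocked Leaves} invariant: it asserts, for \emph{every} leaf $u \in T$ once $|T| \ge B$ (not only for the leaves of large subtrees, and not only for the ``outermost'' tree), that $u$ holds between $B$ and $2B$ entries. In particular it already forbids a \regularnode{} leaf from appearing anywhere inside a \ourtree{} of size at least $B$.

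The proof then consists of a single step. Fix a \ourtree{} $T$ with $|T| \ge B$ and let $u$ be an arbitrary leaf of $T$. By \textbf{Blocked Leaves}, $u$ stores between $B$ and $2B$ entries; since $B \ge 2$ (the regime in which blocking is meaningful), $u$ holds more than one entry, and hence is a \blockednode{}, i.e.\ a \compressednode{}, by the very definitions of the two node types. Since $u$ was arbitrary, every leaf of $T$ is a \compressednode{}.

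I do not anticipate a genuine obstacle; the only care needed is bookkeeping. Specifically I would (i) fix the convention that, for a \ourtree{}, $|T|$ counts the entries stored in $T$, so that the hypothesis ``$|T| \ge B$'' is literally the hypothesis appearing in the \textbf{Blocked Leaves} invariant; and (ii) observe, for good measure, that the statement is non-vacuous and consistent with \textbf{Weight Balance} in the boundary range $B \le |T| < 2B$: two distinct \blockednode{s} would already account for at least $2B > |T|$ entries, so in that range $T$ must be a single \blockednode{}, which is indeed a \compressednode{}. Beyond that the claim is immediate from the definition.
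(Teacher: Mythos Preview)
Your proposal is correct and matches the paper's treatment: the paper states this as an Observation with no proof, precisely because it is immediate from the \textbf{Blocked Leaves} invariant and the definition of a \blockednode{} as a leaf containing multiple entries. Your unpacking of the definition is exactly the intended reading.
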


\begin{observation}
  In a \composite{} \ourtree{}, the path length from the root to any \leafnode{} is $O(\log n/B)$.
\end{observation}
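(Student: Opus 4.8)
The plan is to argue directly from the Weight Balance invariant along a single root-to-leaf path, using the fact that in a \composite{} \ourtree{} every leaf is a \blockednode{} and so carries at least $B$ entries; this means the subtree weight, which shrinks geometrically as we descend, cannot shrink all the way down to a single key, and that is exactly what saves the $\log B$ additive term. (Throughout, the bound should be read as $O(\log(n/B))$.)

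Concretely, fix any path $v_0 = T,\ v_1,\ \dots,\ v_d = u$ from the root to a leaf. Since $T$ is \composite{} it contains a \blockednode{}, so $|T|\ge B$, and then the earlier observation (every leaf of a \ourtree{} with $|T|\ge B$ is a \blockednode{}) tells us $u$ is a \blockednode{}. First I would telescope the invariant: each $v_{i+1}$ is a child of $v_i$, so $w(v_{i+1}) \le (1-\alpha)\,w(v_i)$, and hence $w(u) \le (1-\alpha)^{d}\,w(T)$. Next I would plug in the two boundary values: $w(T) = |T|+1 = n+1$, and $w(u)\ge B+1$ because the block at $u$ stores at least $B$ entries. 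Combining gives $B+1 \le (1-\alpha)^{d}(n+1)$, i.e.\ $d \le \log_{1/(1-\alpha)}\frac{n+1}{B+1}$, and since $\alpha$ is a fixed constant with $\alpha \le 1-1/\sqrt2 < 1/2$ this is $O(\log(n/B))$. The path length is $d$ in edges or $d+1$ in nodes, so either reading of "path length" gives the claim.

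The only place that needs care --- and the step I would call the main obstacle, though it is minor --- is making sure the telescoping really bottoms out at a block and not at a single entry: that is where the \composite{} hypothesis is essential, since a one-entry leaf has weight $2$ and would yield only the weaker $O(\log n)$. A related bookkeeping choice is the convention for how much a \blockednode{} contributes to $|T|$ (its full entry count, versus counting it as a single skeleton node); I would fix the entry-counting convention at the start so that $w(T)=n+1$ and $w(u)\ge B+1$ as used above, but the alternative convention also works, since then the skeleton tree has only $O(n/B)$ nodes and the standard height bound $h(T)\le \log_{1/(1-\alpha)} w(T)$ immediately gives $O(\log(n/B))$. I do not foresee any genuine difficulty beyond this.
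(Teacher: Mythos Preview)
Your argument is correct and is exactly the natural justification for this statement; the paper itself states it only as an observation without proof, and the telescoping of the weight-balance invariant down to a leaf of weight at least $B+1$ is the intended reasoning. Your discussion of the alternative convention (viewing the internal nodes as a weight-balanced skeleton of size $O(n/B)$) also matches the paper's companion observation that the \regularnode{s} form a weight-balanced tree.
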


\begin{observation}
  All \internalnode{s} in a \ourtree{} form a weight-balanced tree.
\end{observation}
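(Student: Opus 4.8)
The plan is to reduce the statement to the two defining invariants of a \ourtree{}, with essentially no computation. First I would fix the reading: deleting every \blockednode{} from $T$ leaves a rooted binary search tree $T'$ on the \regularnode{s}, carrying the weights each node already has in $T$; the claim is that $T'$ so weighted satisfies the weight-balance ($\mathrm{BB}[\alpha]$) invariant.

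The first step is to see that $T'$ is well defined, i.e., that the \regularnode{s} span a connected rooted tree. This is immediate from the \emph{Blocked Leaves} invariant: a \blockednode{} has no children, so it is a leaf of $T$; hence the non-\regularnode{s} all sit at the fringe of $T$, and removing them deletes only leaves, preserving both the ancestor/descendant relation and the in-order key order on the surviving nodes. If $|T|<B$ there is nothing to remove and $T'=T$, which disposes of the \simple{} case.

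The second step is the balance check, which is essentially a tautology. The \emph{Weight Balance} invariant of a \ourtree{} asserts $\alpha \le w(v_*)/w(v)\le 1-\alpha$ for \emph{every} node $v$ of $T$ and each child $v_*\in\{\lc(v),\rc(v)\}$, hence in particular for every \regularnode{}; and these are exactly the inherited weights of $T'$, since removing (equivalently, expanding) the \blockednode{s} changes neither the entry count below any \regularnode{} nor that node's weight. So the restriction of the invariant to the \regularnode{s} \emph{is} the $\mathrm{BB}[\alpha]$ condition for $T'$: a \regularnode{} both of whose $T$-children are \blockednode{s} becomes a leaf of $T'$, where the condition is vacuous, and a \regularnode{} with one \blockednode{} child keeps that sibling's weight in the ratio exactly as in $T$. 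This yields the claim; as the intended consequence, $T'$ then has height $O(\log w(T))=O(\log n)$, so---together with the $O(\log(n/B))$ root-to-leaf path length---all standard weight-balanced-tree facts (for instance the $O(\log(n/m))$ cost of \join{}) transfer to the skeleton of a \ourtree{}.

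I do not anticipate a real obstacle; the only subtlety is definitional. One must use the inherited weights rather than weights recomputed \emph{inside} $T'$: with recomputed weights, a \regularnode{} having a single \blockednode{} child can have a child/parent weight ratio exceeding $1-\alpha$ near the fringe, so $\mathrm{BB}[\alpha]$ would fail. Once that convention is fixed and the $|T|<B$ case is dispatched, the argument is just ``removing leaves from a tree leaves a tree'' plus a read of the Weight Balance invariant restricted to the \regularnode{s}.
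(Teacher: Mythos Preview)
Your argument is correct for the reading you fix (inherited weights), and you are right that under that reading the proof is essentially a restriction of the Weight Balance invariant to the \regularnode{s}. However, the paper takes the other route---the one you explicitly set aside. In the paper's (hidden) version of this statement, the tree $T'$ of \regularnode{s} carries its \emph{own} weights $w'(x)=s(x)+1$, where $s(x)$ is the number of \regularnode{s} in the subtree at $x$; the claim is that $T'$ is $\mathrm{BB}[\alpha']$ with the weaker factor $\alpha'=\alpha/(2-\alpha)$. The proof uses the Blocked Leaves invariant quantitatively: every leaf block has size in $[B,2B]$, so for any \regularnode{} $x$ one has $B(s(x)+1)\le w(x)\le 2B(s(x)+1)$, and combining this sandwich with the original $\mathrm{BB}[\alpha]$ ratio on $w$ yields the $\alpha/(2-\alpha)$ ratio on $w'$.

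The trade-off is exactly the one you identified. Your approach is a one-line tautology but depends on a nonstandard convention (the weights in $T'$ are not its own node counts). The paper's approach does a short computation but delivers weight-balance in the standard sense, so that off-the-shelf $\mathrm{BB}[\alpha']$ facts---height $O(\log|T'|)=O(\log(n/B))$, the $O(\log(n/m))$ \join{} cost on the skeleton, etc.---apply to $T'$ without further translation. Your closing remark that ``with recomputed weights \ldots{} $\mathrm{BB}[\alpha]$ would fail'' is precisely why the paper lands at $\alpha/(2-\alpha)$ rather than $\alpha$; it does not fail, it just needs a looser constant.
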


}

We now present the space bound of a \ourtree{}. For integer keys, we can use difference encoding to bound the space.

\begin{theorem}
\label{thm:space}
  The total space of a \ourtree{} $\mathcal{\mb{\ourname}}(\alpha, B,\mathcal{C}_{DE})$ maintaining a set $E$ of integer keys is $s(E)+O\left(|E|/B+B\right)$, where $\mathcal{C}_{DE}$ is difference encoding, and $s(E)$ is the size needed for $E$ using difference encoding.
\end{theorem}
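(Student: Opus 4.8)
The plan is to split the total space of the \ourtree{} $T$ into three parts --- the \regularnode{}s forming the skeleton, the per-block bookkeeping stored with each \blockednode{} (an array pointer, a length, a reference count, an augmented value), and the difference-encoded payload held inside the blocks --- and bound each in turn. First I dispose of the degenerate case $|E|<B$: here $T$ is a \simple{} tree consisting of $|E|$ \regularnode{}s of constant size, so its space is $O(|E|)=O(B)$, which is at most $s(E)+O(|E|/B+B)$ since $s(E)\ge 0$. So from now on assume $|E|\ge B$.

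Second, I bound the number of nodes. By the \emph{Blocked Leaves} invariant every leaf of $T$ is a \blockednode{} holding between $B$ and $2B$ keys, so $T$ has $L\le |E|/B$ leaves. I claim $T$ has only $O(L)$ \regularnode{}s. By \emph{Weight Balance}, any node with a \nil{} child has weight at most $1/\alpha=O(1)$ and hence spans only $O(1)$ nodes, so every \regularnode{} lies within $O(1)$ tree edges of a leaf; charging each \regularnode{} to a nearby \blockednode{} gives $O(L)$ of them. (Equivalently, the \regularnode{}s of a \ourtree{} form a weight-balanced binary tree whose leaves are the $O(|E|/B)$ blocks, so there are $O(|E|/B)$ of them.) Since every node --- \regularnode{} or \blockednode{} --- carries only $O(1)$ machine words of metadata, the \regularnode{}s and the per-block bookkeeping together occupy $O(|E|/B)$ space.

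Third, I bound the payload. Write the keys in sorted order as $e_1<e_2<\cdots<e_N$ with $N=|E|$, and let $H=\{s_1,\dots,s_L\}$, $s_1=1<s_2<\cdots<s_L$, be the indices at which the successive blocks begin. The $j$-th block stores $e_{s_j}$ uncompressed --- $O(1)$ space, since keys fit in a machine word --- and stores each later key $e_i$ of that block as the byte code of the difference $e_i-e_{i-1}$; write $\ell(x)$ for the byte length of the code of $x$. Summing over blocks, the payload equals $\sum_{j=1}^{L}O(1)+\sum_{i\notin H}\ell(e_i-e_{i-1})=O(|E|/B)+\sum_{i\notin H}\ell(e_i-e_{i-1})$. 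Since $s_1=1$, the index set of the last sum is contained in $\{2,\dots,N\}$ and byte lengths are nonnegative, so that sum is at most $\sum_{i=2}^{N}\ell(e_i-e_{i-1})\le s(E)$ --- indeed $s(E)$ additionally pays, in full, for $e_1$. Hence the payload is at most $s(E)+O(|E|/B)$, and adding the three parts gives the claimed bound, the $+B$ absorbing the $|E|<B$ case.

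The one step needing genuine care is the node count --- that once $|E|\ge B$ the balancing invariant keeps the $L=O(|E|/B)$ blocks from being accompanied by more than $O(L)$ \regularnode{}s rather than up to $\Theta(|E|)$. The rest is bookkeeping: that each node carries $O(1)$ overhead, and that the per-block difference codes are dominated term by term by one global difference encoding of $E$. Notably this last point needs no sub-additivity of the code, because we partition the global encoding at block boundaries rather than merge pieces of it --- which is exactly why the coefficient of $s(E)$ in the bound is precisely one.
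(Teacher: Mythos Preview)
Your proof follows essentially the same decomposition as the paper's --- handle $|E|<B$ separately, then count $O(|E|/B)$ \regularnode{}s each contributing $O(1)$ metadata, and finally observe that the per-block difference codes are dominated by a single global difference encoding of $E$ except for $O(|E|/B)$ uncompressed block-heads. Your payload argument is in fact more carefully written than the paper's, which just appeals to ``comparing the total size of all the blocks and $A$'' without the explicit index-set bookkeeping.

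There is one genuine slip, though: your first justification for the node count is wrong. The claim ``every \regularnode{} lies within $O(1)$ tree edges of a leaf'' is false --- the root of a \composite{} tree is $\Theta(\log(|E|/B))$ edges from any \blockednode{}, so the charging argument that follows does not go through. What the weight-balance observation \emph{does} buy you is that in a \composite{} tree no \regularnode{} can have a \nil{} child (else its weight would be at most $1/\alpha$, contradicting that its subtree contains a block of weight $\ge B+1$); hence every \regularnode{} has two non-\nil{} children, the \regularnode{}s are exactly the internal nodes of a full binary tree on $L\le |E|/B$ leaves, and there are $L-1=O(|E|/B)$ of them. That is precisely your parenthetical ``Equivalently'' argument, which is correct and is what the paper (tersely) relies on. Drop the first formulation and promote the parenthetical to the main text.
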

\begin{proof}
  The space needed for a \ourtree{} includes the \internalnode{s} and the \leafnode{s}. First of all, when $|E|<B$, all entries are maintained in a \simple{} tree, taking $O(B)$ space. When $|E|\ge B$, there are $O(|E|/B)$ \internalnode{s}, each taking $O(1)$ space for meta-data (pointers, size, etc.). The total space used by \internalnode{s} is $O(|E|/B)$. All the \leafnode{s} are organized in blocks. Let $A$ be an array that stores all keys in $E$ using difference encoding.
  Comparing the total size of all the blocks and $A$, the only extra space is the first element of each block (which cannot be compressed). There are $O(|E|/B)$ such blocks, and thus the extra space used is $O(|E|/B)$.
\end{proof}

We note that this bound is deterministic, as opposed to the bound for
\ctree{s} (which only holds in expectation). Furthermore, using known facts
about difference encoding yields the following result, showing that
\ourtree{s} yield a compact parallel representation of ordered sets~\cite{BB04}.

\begin{corollary}
Given any set from $U = \{0, \ldots, m-1\}$ with $|S| = n$, the total
space of a \ourtree{} $\mathcal{\mb{\ourname}}(\alpha,
B,\mathcal{C}_{DE})$  maintaining $S$ is $O(n \log \frac{n+m}{n})$
bits for $B = \Omega(\log n)$.
\end{corollary}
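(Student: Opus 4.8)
The plan is to combine the deterministic space bound of \cref{thm:space} with the classical information-theoretic estimate for difference encoding. By \cref{thm:space}, the total space of $\mathcal{\mb{\ourname}}(\alpha, B,\mathcal{C}_{DE})$ maintaining $S$ is $s(S) + O(|S|/B + B)$ words (or bits, after scaling), where $s(S)$ is the size of the difference-encoded array for $S$. So the corollary reduces to two sub-claims: (1) bounding $s(S)$ by $O(n\log\frac{n+m}{n})$ bits, and (2) checking that the additive overhead $O(n/B + B)$ is absorbed into this bound under the hypothesis $B = \Omega(\log n)$.

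For sub-claim (1), I would recall the standard fact (this is essentially the bound used by Blandford and Blelloch~\cite{BB04}) that a sorted sequence of $n$ integers drawn from $\{0,\dots,m-1\}$, encoded by storing the $n$ consecutive gaps $g_1,\dots,g_n$ (with $\sum_i g_i \le m$) using a byte/$\gamma$-style self-delimiting code, takes $O(\sum_{i=1}^{n} \log(g_i + 1))$ bits plus $O(n)$ bits of framing overhead. Since $\log(\cdot+1)$ is concave, Jensen's inequality gives $\sum_i \log(g_i+1) \le n\log\!\big(\tfrac{1}{n}\sum_i(g_i+1)\big) \le n\log\!\big(\tfrac{m+n}{n}\big)$. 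Adding the $O(n)$ framing term is harmless because $\log\frac{n+m}{n} \ge \log 2 = \Omega(1)$ (as $m \ge n$, or more precisely $m \ge $ the number of distinct elements which is $n$), so $s(S) = O(n\log\frac{n+m}{n})$ bits.

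For sub-claim (2), the overhead from \cref{thm:space} is $O(n/B + B)$ words, i.e. $O((n/B + B)\log m)$ bits if we account for word size; but more simply, in the bit-complexity accounting the $O(n/B)$ term counts the uncompressed first element of each block, costing $O((n/B)\cdot w)$ bits for machine word size $w = O(\log m)$, and the $O(B)$ term is the cost of the small-tree case. With $B = \Omega(\log n)$, the term $O((n/B)\log m)$ is $O(n)$ when $\log m = O(\log n)$ — which holds in the interesting regime — and in any case is dominated by $n \log\frac{n+m}{n}$ whenever $B$ is chosen proportional to $\log\frac{n+m}{n}\cdot$(word-scaling); the $O(B)$ additive term is lower order. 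I would state the scaling assumption ($B$ at least logarithmic, words of $O(\log m)$ bits) explicitly and conclude the $O(n\log\frac{n+m}{n})$-bit bound.

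The main obstacle is bookkeeping the units carefully: \cref{thm:space} is phrased in "space" (words), while the corollary is in bits, and the role of $B = \Omega(\log n)$ is precisely to make the per-block uncompressed-head overhead $O((n/B)\log m)$ not dominate. I would be careful to argue that $n/B \cdot \log m$ is $O(n)$ (hence absorbed) under the stated hypothesis, rather than waving it through; the concavity/Jensen step and the framing-overhead step are routine. Everything else follows directly from \cref{thm:space} and the cited properties of difference encoding.
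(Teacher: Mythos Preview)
Your approach is correct and is exactly what the paper intends: the paper gives no detailed proof, simply invoking \cref{thm:space} together with the known $O(n\log\frac{n+m}{n})$-bit bound for difference-encoded sorted sets from~\cite{BB04}, which is precisely your Jensen argument in sub-claim~(1). One small tightening for sub-claim~(2): your hedging is unnecessary, since $(n/B)\log m = O(n\log\frac{n+m}{n})$ holds unconditionally for $B=\Omega(\log n)$ and $m\geq n$ by splitting into the cases $m\leq n^2$ (where $\log m/\log n = O(1)$ and $\log\frac{n+m}{n}\geq 1$) and $m>n^2$ (where $\log\frac{n+m}{n}>\tfrac{1}{2}\log m$).
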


\section{Algorithms}
\label{sec:algo}

\newdimen\zzsize
\zzsize=9pt
\newdimen\kwsize
\kwsize=9pt

\newcommand{\basicstyle}{\fontsize{\zzsize}{1\zzsize}\ttfamily}
\newcommand{\keywordstyle}{\fontsize{\kwsize}{1\kwsize}\ttfamily\bf}

\newdimen\zzlstwidth
\settowidth{\zzlstwidth}{{\basicstyle~}}
\newcommand{\lcm}{}

\lstset{
  xleftmargin=5.0ex,
  basewidth=\zzlstwidth,
  basicstyle=\basicstyle,
  columns=fullflexible,
  captionpos=b,
  numbers=left, numberstyle=\small, numbersep=8pt,
  language=C++,
  keywordstyle=\keywordstyle,
  keywords={nil,return,signature,sig,structure,struct,fun,fn,case,type,datatype,and,let,fn,in,end,functor,alloc,if,then,else,while,with,and,start,do,parallel},
  commentstyle=\rmfamily\slshape,
  morecomment=[l]{\%},
  lineskip={1.5pt},
  columns=fullflexible,
  keepspaces=true,
  mathescape=true,
  escapeinside={@}{@}
}

\newcommand{\ujoin}{\mbox{{join}}}
\newcommand{\uexpose}{\mbox{{expose}}}

%

\newcommand{\codegap}{\vspace{.05in}}

%


\begin{figure*}[th]
\vspace{-1em}
\begin{minipage}[t]{0.68\columnwidth}
\begin{lstlisting}
@\textbf{fold}@$(T)$ {
  flatten $T$ into array $A$
  (encoding if needed)
  return A; }@\codegap@
@\textbf{unfold}@$(A)$ {
  /* return a perfectly balanced tree
  from sorted array A */ }@\codegap@
@\textbf{expose}@$(T)$ {
  if (isflat$(T)$) {
    $T'$ = unfold($T$);
    return $(\lc(T'),k(T'), \rc(T'))$; }
  else return $(\lc(T),k(T), \rc(T))$;} @\codegap@
@\textbf{join}@($T_L,k,T_R$) {
  if (heavy($T_L, T_R$))
    return join_right$(\Tl,k,\Tr)$;
  if (heavy($T_R, T_L$))
    return join_left$(\Tl,k,\Tr)$;
  return node($T_L,k,T_R$); }@\codegap@
\end{lstlisting}
\end{minipage}
\begin{minipage}[t]{0.72\columnwidth}
\StartLineAt{19}
\begin{lstlisting}
/* join_left is symmetric */
@\textbf{join\_right}@($T_L,k,T_R$) {
  ($l,k',c$)=expose($T_L$);
  if (balance($|T_L|,|T_R|$)
    return $\node(\Tl,k,\Tr))$;
  $T'$ = join_right$(c,k,\Tr)$;
  $(l_1,k_1,r_1)$ = expose$(T')$;
  if (balance$(|l|,|T'|)$)
    return $\node$($l, k', T'$);
  if ((balanced$(|l|,|l_1|)$) and
        (balanced$(|l|+|l_1|,r_1)$))
    return rotateleft($\node$($l, k', T'$));
  else return rotateleft($\node$($l,k'$,
         rotateright$(T')$)); }@\codegap@
@\textbf{join2}@($\Tl$,$\Tr$) {
  if ($\Tl$ = nil) return $\Tr$;
  $(\Tl',m,\_)$ = split($\Tl$,last($\Tl$));
  return $\ujoin(\Tl',m,\Tr)$;  }
\end{lstlisting}
\end{minipage}
\begin{minipage}[t]{.68\columnwidth}
\StartLineAt{37}
\begin{lstlisting}
@\textbf{node}@($l,k,r$) {
  /* create node $x$ with left subtree $l$,
  root key $k$ and right subtree $r$ */
  if ($|x|>4B$) return $x$;
  if ($B\le |x| \le 2B$) return @fold@$(x)$;
  else { // $2B< |x|\le 4B$
    /* redistribute $x$'s both subtrees to
      be flat nodes with $|x|/2$ entries */
    return $x$;}} @\codegap@
@\textbf{split}@$(T,k)$ {
  if ($|T|=0$) return (nil,nil,nil);
  $(L,m,R)$ = $\uexpose(T)$;@\label{line:splitexpose}@
  if ($k == k(m)$) return ($L$,m,$R$);
  if ($k < k(m)$) {
    $(L_L,b,L_R)$ = split$(L,k)$;
    return $(L_L,b,\ujoin(L_R,m,R))$;
  } else {
    $(R_L,b,R_R)$ = split$(R,k)$;
    return $(\ujoin(L,m,R_L),b,R_R)$; } } @\codegap@
\end{lstlisting}
\end{minipage}
\caption{\small \textbf{Primitives on
    \ourtree{s}.}\label{fig:mainalgo}  All codes are functional
  (e.g. rotates copy nodes).}
\end{figure*}

\ifx\confversion\undefined
\begin{figure*}
\vspace{-1em}
\small
\begin{minipage}[t]{0.75\columnwidth}
\begin{lstlisting}
@\textbf{from\_sorted}@(A,n) {
  if ($n=0$) return nil;
  if ($n=1$) return node(nil,A[0],nil);
  $L$ = from_sorted(A,n/2) ||
    $R$ = from_sorted(A+n/2,n-n/2);
  return @\node@(L,A[n/2],R); }
@\textbf{build}@(A,n) {
  parallel_sort(A,n);
  return from_sorted(A,n); }
\end{lstlisting}
\end{minipage}
\begin{minipage}[t]{0.6\columnwidth}
\StartLineAt{10}
\begin{lstlisting}
// keep a key in $T$ only when it satisfies $f$
@\textbf{filter}@($T$,$f$) {
  if ($T$ == nil) return nil;
  ($L$,$k$,$R$) = $\uexpose(T)$;
  $\Tl$ = filter($L$,$f$) ||
      $\Tr$ = filter($R$,$f$);
  if ($f(k)$)
    return $\ujoin$($\Tl$,$k$,$\Tr$);
  else return join2($\Tl$,$\Tr$); } @\vspace{.1in}@
\end{lstlisting}
\end{minipage}
\caption{\small \textbf{Examples of parallel algorithms on
    \ourtree{s}. ``\texttt{||}'' indicates calls that are made in parallel.} \label{fig:otheralgotext}}
\end{figure*}
\else

\vspace{-1em}
\small
\begin{minipage}[t]{0.75\columnwidth}
\begin{lstlisting}
@\textbf{from\_sorted}@(A,n) {
  if ($n=0$) return nil;
  if ($n=1$) return node(nil,A[0],nil);
  $L$ = from_sorted(A,n/2) ||
    $R$ = from_sorted(A+n/2,n-n/2);
  return @\node@(L,A[n/2],R); }
@\textbf{build}@(A,n) {
  parallel_sort(A,n);
  return from_sorted(A,n); }
\end{lstlisting}
\end{minipage}
\begin{minipage}[t]{0.7\columnwidth}
\StartLineAt{10}
\begin{lstlisting}
@\textbf{union}@($T_1$,$T_2$) {
  if ($T_1$ == nil) return $T_2$;
  if ($T_2$ == nil) return $T_1$;
  ($L_2$,$k_2$,$R_2$) = $\uexpose(T_2$);@\label{line:mainunionexpose}@
  ($L_1$,$b$,$R_1$) = split($T_1$,$k_2$);@\label{line:mainunionsplit}@
  $\Tl$ = union($L_1$,$L_2$) ||
       $\Tr$ = union($R_1$,$R_2$);
  return $\ujoin$($\Tl$,$k_2$,$\Tr$); }@\label{line:mainunionjoin}
\end{lstlisting}
\end{minipage}
\begin{minipage}[t]{0.6\columnwidth}
\StartLineAt{18}
\begin{lstlisting}
// keep a key in $T$ only when it satisfies $f$
@\textbf{filter}@($T$,$f$) {
  if ($T$ == nil) return nil;
  ($L$,$k$,$R$) = $\uexpose(T)$;
  $\Tl$ = filter($L$,$f$) ||
      $\Tr$ = filter($R$,$f$);
  if ($f(k)$)
    return $\ujoin$($\Tl$,$k$,$\Tr$);
  else return join2($\Tl$,$\Tr$); } @\vspace{.1in}@
\end{lstlisting}
\end{minipage}
\caption{\small \textbf{Examples of parallel algorithms on
    \ourtree{s}. ``\texttt{||}'' indicates calls that are made in parallel.} \label{fig:otheralgotext}}
\end{figure*}

\fi

\begin{figure*}
  \vspace{-1em}
  \centering
  \includegraphics[width=2.1\columnwidth]{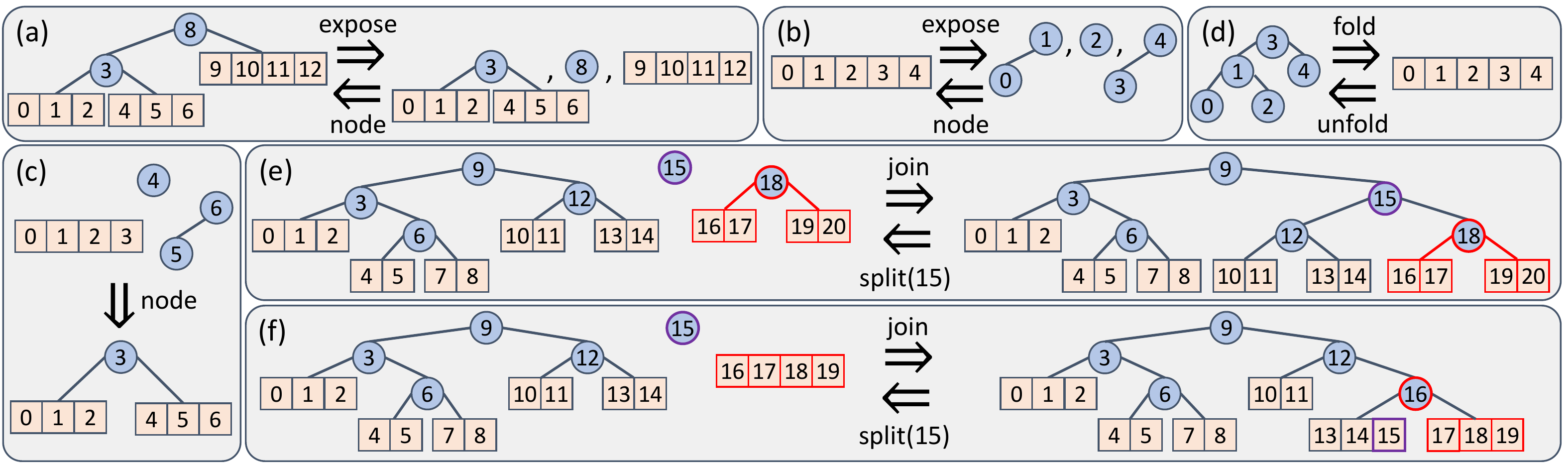}
  \caption{\small \textbf{Illustration of primitives on \ourtree{s}.} For Figures (a)--(d), $B=3$. For Figures (e)--(f), $B=2$.
   Fig.\ (a): the \expose{} function on a \regularnode{} and the \node{} function to obtain a \regularnode{} when the output tree size is larger than $4B$.
   Fig.\ (b): the \expose{} function on a \blockednode{} and the \node{} function to obtain a \blockednode{} when the output tree weight is between $B$ and $2B$.
   Fig.\ (c): the \node{} function to obtain a \blockednode{} when the output size is between $2B$ and $4B$.
   Fig.\ (d): \fold{} and \unfold{} functions.
   Fig.\ (e): \join{} function on two \regularnode{s} and its corresponding \tsplit{} function.
   Fig.\ (f): \join{} function on a \regularnode{} and a \blockednode{} and its corresponding \tsplit{} function.
   }\label{fig:primitive}
\end{figure*}

\hide{
\begin{figure*}
  \centering
  \includegraphics[width=1.8\columnwidth]{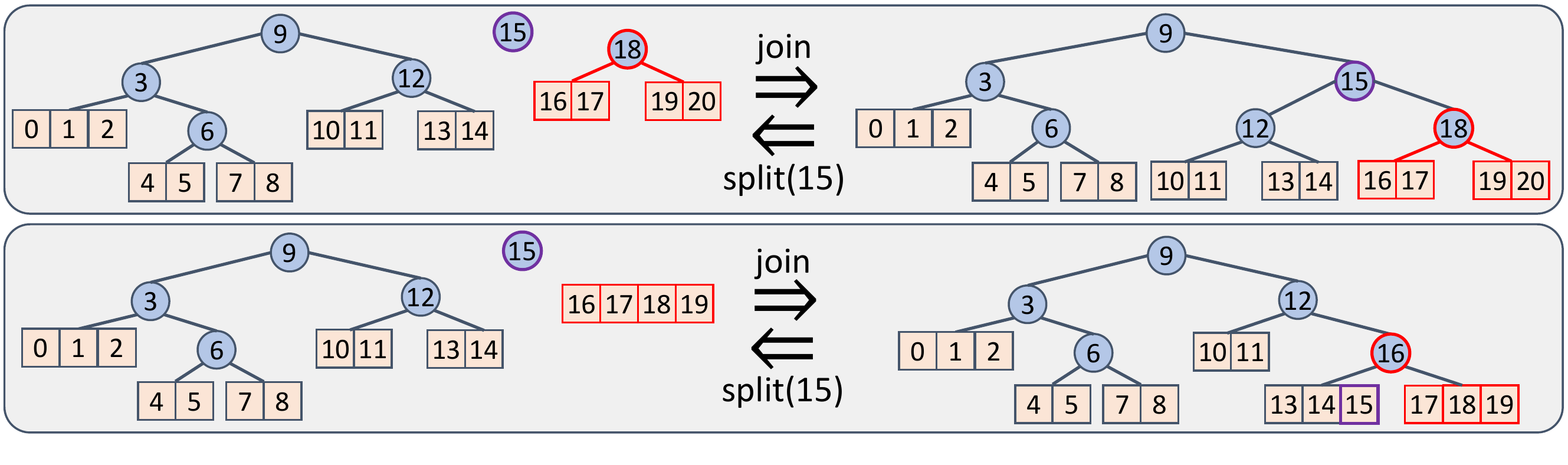}
  \caption{-}\label{fig:joinsplit}
\end{figure*}
}

We now describe \join-based algorithms on \ourtree{s}.
To enable a general ordered map interface, we implement \ourtree{s} based
on the \pam{} interface. PAM supports dozens of operations on sequences, sets,
maps, and augmented maps, and it would require significant work to
re-implement them all.
Instead, we carefully redesigned \join{} and \expose{} such that
\emph{all the other algorithms can remain the same as in \pam{}}.
In particular, none of the other algorithms have to deal with the
\block{ed} leaves or compression, which greatly simplifies the
algorithm design and correctness arguments.
We found that the overhead of this approach is not large, but for many frequently-used
operations, we design special base cases for dealing with compressed
nodes. These base cases can improve the performance by up to 6x (see
\cref{sec:impl}).
Some of the theoretical results also require special base cases (see
\cref{sec:setproof}).

At a high-level, when exposing a \blockednode{}, the node is
automatically expanded (using \unfold), and similarly when \join{}
obtains a \composite{} tree of size $B$ to $2B$, it is flattened
(\fold{}).
An illustration of \unfold{} and \fold{} is shown in
\cref{fig:primitive}.
We start with the \join{} and \expose{} algorithms. We then present
the \union{} algorithm as an example to illustrate \join-based
algorithms, and give the code for other functions in
\cref{fig:otheralgotext} 
\ifx\confversion\undefined
and \cref{fig:setcode}.
\else
and the Appendix.
\fi
We focus on \union{} as it is the core sub-routine used in
applications such as inserting or deleting batches of vertices and
edges in graphs, combining inner trees when constructing range trees,
and updating sets of documents in an inverted index, among others.

\myparagraph{Expose.} This function returns the left
subtree, root data and the right subtree of a node $T$. For a \regularnode{},
this function just reads the child pointers and the root. For a \blockednode{}, this function first \unfold{}s the
tree into a perfectly balanced tree and then reads the corresponding
data.

\myparagraph{Join.} Recall that the \join{} function takes two trees
$T_L$ and $T_R$, and a key $k$ (or a key-value) as input, and returns
a balanced tree concatenating entries in $T_L$, $k$ and $T_R$ in order
(see \cref{fig:primitive}).
In other words, when trees are used for ordered sets or maps, $k$
should be larger than all keys in $T_L$ and smaller than all keys
in $T_R$.
Pseudocode for \join{} is shown in \cref{fig:mainalgo}.

The algorithm first compares the weights of $T_L$ and $T_R$.
When balanced, they are directly connected
by $k$. The other two cases are symmetric so WLOG we assume
$|T_L|>|T_R|$. In this case, the algorithm must attach $T_R$
in the right spine of $T_L$, which will be handled by
\myfunc{join\_right}$(T_L,k,T_R)$. This algorithm first checks
if $T_L$ and $T_R$ are balanced and connects them if so.
Otherwise, it recursively calls \myfunc{join\_right} on $\rc(T_L)$ and $T_R$,
getting $T'$. If we re-attach $T'$ as $T_L$'s right
child, we will get a ``correct'' output tree (modulo balance).  We then use a single or
double rotation to rebalance if necessary. It is known that either a
single or double rotation can rebalance a weight-balanced tree in this
situation \cite{blelloch2016just}. This guarantees the \emph{weight balance} invariant of \ourtree{s}.

To also guarantee the \emph{blocked leaves} invariant, we add two conditions when calling \node{} to create a new node with its left and right subtrees.
Whenever a node with size $B$ to $2B$ is created, we \fold{} the tree into a \blockednode{}.
Whenever a node with size $2B$ to $4B$ is created, we extract the median of the tree as the root to re-distribute its two subtrees, such that both subtrees
are \blockednode{s} with (almost) the same size.

\begin{lemma}
\label{lem:joincorrect}
The \join{} function maintains the invariants of \ourtree{s}.
\end{lemma}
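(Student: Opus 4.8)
The plan is to verify that the two \ourtree{} invariants---\emph{weight balance} and \emph{blocked leaves}---are preserved by \join{}. I would structure the argument around the fact that \join{} is built from the classical weight-balanced \join{}/\myfunc{join\_right}/\myfunc{join\_left} of \cite{blelloch2016just}, composed with the modified \node{} constructor (\cref{fig:mainalgo}, lines~37--45). Since \expose{} on a \blockednode{} first calls \unfold{} to produce a perfectly balanced binary tree whose leaves are all \regularnode{s}, every recursive step of \myfunc{join\_right} operates on ordinary binary trees; hence the weight-balance reasoning is exactly that of \pam{}. First I would invoke the known result (cited just before the lemma, from \cite{blelloch2016just}) that for $\alpha \le 1-1/\sqrt{2}$ a single or double rotation suffices to restore $\alpha \le w(v_*)/w(v) \le 1-\alpha$ at the node being reattached, and that all nodes strictly below the reattachment point are untouched and remain balanced. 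The only new thing to check is that the calls to \node{} inside \myfunc{join\_right} (and in the base case \texttt{return node($T_L,k,T_R$)}) do not disturb weight balance: when $2B < |x| \le 4B$, \node{} replaces $x$ by a tree whose root is the median and whose two children are \blockednode{s} of size $\lfloor |x|/2 \rfloor$ and $\lceil |x|/2 \rceil$ minus one; since $|x|>2B$ these sizes are each in $[B,2B]$ and their weights differ by at most one, so the ratio is well within $[\alpha,1-\alpha]$ for $\alpha=0.29$ (indeed for any $\alpha \le 1/3$). When $B \le |x| \le 2B$, \node{} returns a single \blockednode{}, which trivially satisfies weight balance as a leaf.

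Next I would argue the \emph{blocked leaves} invariant: if $|T| \ge B$, every leaf must be a \block{} of size between $B$ and $2B$. Here I would reason by structural induction on the recursion, using as inductive hypothesis that $T_L$ and $T_R$ are valid \ourtree{s}. The subtle point is that \unfold{}ing a leaf during \expose{} temporarily breaks the invariant (exposing \regularnode{s} where a \block{} was), so the claim to prove is really about the \emph{final} tree returned by the top-level \join{}. I would trace where leaves can appear in the output: a leaf of the result is either (i) an untouched leaf of $T_L$ or $T_R$ (already a valid \block{} by hypothesis, or a \regularnode{} in the case $|T|<B$), or (ii) a node created by \node{}---but \node{} only returns a \blockednode{} when the created subtree has size in $[B,2B]$ (the \fold{} case) or redistributes into \blockednode{} children in $[B,2B]$, and in the $|x|>4B$ case it returns a \regularnode{} whose children are recursively the outputs of \myfunc{join\_right}, so they are themselves valid by induction; nodes with $|x|\le B$ arise only when the whole tree is a \simple{} tree, which is permitted. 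I would also note that the $2B<|x|\le 4B$ redistribution and the fold at $[B,2B]$ exactly cover the gap so that no leaf of size $<B$ or size $>2B$ can survive whenever the overall tree has $\ge B$ entries---this case analysis on the size of the node passed to \node{} is the heart of the argument and needs the thresholds $B, 2B, 4B$ to dovetail.

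The main obstacle I anticipate is handling the interaction between \unfold{}/\fold{} and the balance rotations cleanly: a rotation near the top of \myfunc{join\_right} may pull up a node whose subtree had just been \unfold{}ed, and one must be sure that (a) the rotated-in subtree is re-\fold{}ed (or kept expanded) appropriately by the \node{} calls along the return path, and (b) the rotation does not create an internal node that is actually small enough ($\le 2B$) that it should have been a \block{}---this is precisely why \node{} re-checks the size on \emph{every} construction and why the $2B$--$4B$ redistribution rule exists. I would make this precise by observing that every node on the path from the reattachment point up to the root is rebuilt via \node{} (the functional rotates copy through \node{}), so the size check is applied uniformly, and a short calculation shows the size of any such node changes only by the size of the attached piece plus $O(1)$, keeping it within the regime the \node{} case analysis covers. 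Finally I would remark that the symmetric case (\myfunc{join\_left} / $|T_R|>|T_L|$) and the balanced base case are identical by symmetry, completing the proof.
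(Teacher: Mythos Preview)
Your proposal is correct and follows the same high-level reasoning as the paper, but you should know that the paper does not actually supply a formal proof of this lemma: it is stated immediately after the two-paragraph description of \join{} and \node{}, with the weight-balance invariant justified by the sentence ``It is known that either a single or double rotation can rebalance a weight-balanced tree in this situation~\cite{blelloch2016just}'' and the blocked-leaves invariant justified by the description of the two size checks in \node{} (fold at $[B,2B]$, redistribute at $(2B,4B]$). Your write-up is thus considerably more detailed than what the paper provides---in particular, your explicit verification that the redistribution case preserves weight balance, your tracing of where leaves can originate in the output, and your discussion of the \unfold{}/rotation interaction are all things the paper leaves implicit. None of this extra care is wrong; it simply goes beyond what the authors chose to spell out.
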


\myparagraph{Split.}
For a \ourtree{} $T$ and key $k$, \tsplit{}$(T,k)$ returns a triple
$(\Tl,b,\Tr)$, where $\Tl$ ($\Tr$) is a tree containing all keys in
$T$ that are less (greater) than $k$, and $b$ the entry of key $k$ if $k \in T$ (see \cref{fig:primitive}).  We first use \expose$(T)$ to get its left (right) subtrees $\lc(T)$ ($\rc(T)$) and root key $k(T)$, and compare $k$ with $k(T)$. If $k=k(T)$, we simply return $(\lc(T),k, \rc(T))$. Otherwise WLOG we assume $k$ is smaller. In that case, the entire right subtree $\rc(T)$ and the root $k(T)$ belong to $\Tr$. We then split $\lc(T)$ by $k$, getting $(L_L,b,L_R)$. By definition, all keys smaller than $k$ should be in $L_L$, and all keys larger than $k$ can be obtained by $\join(L_R,k(T),\rc(T))$.
\hide{
The algorithm first searches for $k$ in $T$, splitting the tree along the path into three parts: keys to the left of the path, $k$ itself (if it exists), and keys to the right.  Then by applying \join{}, the algorithm merges all the subtrees on the left side (using keys on the path as intermediate nodes) from bottom to top to form $\Tl$, and merges the right parts to form $\Tr$.
}

\hide{
\myparagraph{Join2.}
\joinTwo{}$(\Tl,\Tr)$ returns a binary tree for which the in-order
values is the concatenation of the in-order values of the binary trees
$\Tl$ and $\Tr$ (the same as \join{} but without the middle key).
For BSTs, all keys in $\Tl$ have to be less than keys in $\Tr$.
\joinTwo{} first finds the last element $k$ (by following the right spine) in $\Tl$ and
on the way back to root, joins the subtrees along the path, which is
similar to \tsplit{}
$\Tl$ by $k$. We denote the result of dropping $k$ in $T_L$ as $\Tl'$. Then \join($\Tl',k,\Tr$)
is the result of \joinTwo.
}

\myparagraph{Union.} Using \texttt{join} and \texttt{split}, we can implement set algorithms on two \ourtree{s}, such as \union{}, \intersection{} and \difference{}.
We describe \union{} as an example (the other two are similar).
This algorithm uses divide-and-conquer.
At each level of recursion,
$T_1$ is split by the root of $T_2$, breaking $T_1$ into two subsets with all keys smaller (larger) than $k(T_2)$, denoted as $L_1$ ($R_1)$.
Then two recursive calls to \union{} are made in parallel.  One unions
$L(T_2)$ with $L_1$ (all keys smaller than $k(T_2)$), returning $\Tl$, and the other one unions
$R(T_2)$ with $R_1$ (all keys larger than $k(T_2)$), returning $\Tr$.
Finally the algorithm combines the results with \join$(\Tl,k(T_2),\Tr)$.

\hide{
\myparagraph{\union{}.} We can also implement set algorithms on two
\ourtree{s}, such as \union{}, \intersection{} and \difference{}.
We will describe \union{} as an example, and the other two are
similar.
Pseudocode for \union{} is shown in \cref{fig:joinandunion}.
This function needs a helper function \tsplit{}$(T,k)$, which
returns a triple $(\Tl,b,\Tr)$, where $\Tl$ ($\Tr$) is a tree
containing all keys in $T$ that are less (larger) than $k$, and $b$ is
a flag indicating whether $k \in T$ (or its value if exists).
The implementation of \tsplit{} is also based on \join{} and is the
same as that in \pam{}. We present its pseudocode in the appendix.
The \union{} algorithm uses divide-and-conquer.
At each level of recursion, $T_1$ is split by the root of $T_2$,
breaking $T_1$ into three parts: one with all keys smaller than
$k(T_2)$ (denoted as $L_1$), one in the middle either of only one key
equal to $k(T_2)$ (when $k(T_2)\in T_1$) or empty (when $k(T_2) \notin
T_1$), and the third one with all keys larger than $k(T_2)$ (denoted
as $R_1$).
Then two recursive calls to \union{} are made in parallel.  One unions
$L(T_2)$ with $L_1$, returning $\Tl$, and the other one unions
$R(T_2)$ with $R_1$, returning $\Tr$.  Finally the algorithm combines the three parts by
\join$(\Tl,k(T_2),\Tr)$, which is valid since $k(T_2)$ is greater than
all keys in $\Tl$ are less than all keys in $\Tr$.
}

\ifx\confversion\undefined



\ifx\confversion\undefined

\begin{figure*}
\small
\begin{minipage}[t]{0.78\columnwidth}
\begin{lstlisting}
@\textbf{m\_ins\_helper}@($T$,A,m) {
  if ($T$ == nil) return from_sorted(A,m);
  if ($n$ == 0) return $T$;
  ($L$,$k$,$R$) = @\uexpose@($T$);
  $s$ = binary_search(A,m,k);
  if ($A[s]$ == k) $b$ = 1;
  $\Tl$ = m_ins_helper($L$,A,s); ||
    $\Tr$ = m_ins_helper($R$,A+s+b,m-s-b);
  return @\ujoin@($\Tl$,$k$,$\Tr$); }@\codegap@
@\textbf{multi\_insert}@($T$,A) {
  $A'$ = parallel_sort(A);
  return m_ins_helper($T$,$A'$,$|A|$); }
\end{lstlisting}
\end{minipage}
\begin{minipage}[t]{0.65\columnwidth}
\StartLineAt{13}
\begin{lstlisting}
// $f$ is an associative binary operation
// $I$ is the identity
@\textbf{reduce}@($T$,$f$,$I$) {
  if ($T$ is nil) return $I$;
  ($L$,$k$,$R$) = $\uexpose(T)$;
  $x$ = reduce($L$,$f$,$I$)
      || $y$ = reduce($R$,$f$,$I$);
  return $f(f(x,k),y)$;
\end{lstlisting}
\end{minipage}
\begin{minipage}[t]{0.65\columnwidth}
\StartLineAt{21}
\begin{lstlisting}
// map each entry in $T$ using function $f$
@\textbf{map}@($T$,$f$) {
  if ($T$ == nil) return nil;
  ($L$,$k$,$R$) = $\uexpose(T)$;
  $\Tl$ = map($L$,$f$)
      || $\Tr$ = map($R$,$f$);
  return $\ujoin$($\Tl$,$k$,$\Tr$);
\end{lstlisting}
\end{minipage}
\caption{Some other algorithms on \ourtree{s}. \label{fig:otheralgo}}
\end{figure*}

\fi

%




\begin{figure*}[th]
\begin{minipage}[t]{1.1\columnwidth}
\begin{lstlisting}
@\textbf{join}@($T_L,k,T_R$,expand=false) {
  if (heavy($T_L, T_R$)) return join_right($\Tl,k,\Tr,$expand);
  if (heavy($T_R, T_L$)) return join_left($\Tl,k,\Tr,$expand);
  return node($T_L,k,T_R$); } @\vspace{.1in}@
@\textbf{join\_right}@($T_L,k,T_R$,expand=false) {
  ($l,k',c$)=expose($T_L$);
  if (balance($|T_L|,|T_R|$)) return $\node$($\Tl,k,\Tr,$expand);
  $T'$ = join_right$(c,k,\Tr)$;
  $(l_1,k_1,r_1)$ = expose$(T')$;
  if (balanced$(|l|,|T'|)$) return $\node$($l, k', T',$expand);
  if ((balanced$(|l|,|l_1|)$) and (balanced$(|l|+|l_1|,r_1)$))
    rotate_left($\node$($l, k', T'$));
  else rotate_left($\node$($l,k'$,rotate_right$(T')$)); } @\vspace{.1in}@
@\textbf{join\_left}@($T_L,k,T_R$) { /* symmetric to join_right */ }@\vspace{.1in}@
@\textbf{node}@($l,k,r,$expand=false) {
  let the left child of $k$ be $l$;
  let the right child of $k$ be $r$;
  if (expand) return $k$;
  if ($|k|>4B$) return $k$;
  if ($B\le |k| \le 2B$) return @fold@$(k)$;
  else { // $2B< |k|\le 4B$
    /* redistribute $k$'s both subtrees to be flat nodes with $|k|/2$ entries */
    return $k$;}  }
\end{lstlisting}
\end{minipage}
\hfill
\begin{minipage}[t]{.8\columnwidth}
\StartLineAt{24}
\begin{lstlisting}
@\textbf{expose}@$(T)$ {
  if (isregular$(T)$) return $(\lc(T),k(T), \rc(T))$;
  else {
    $T'$ = unfold($T$);
    return $(\lc(T'),k(T'), \rc(T'))$; } } @\vspace{.1in}@
@\textbf{fold}@$(T)$ {
  if ($B\le w(T)\le 2B$) {
    flatten $T$ into array $A$
    (encoding if needed)
    return A;
  } else return $T$; }@\vspace{.1in}@
@\textbf{unfold}@$(A)$ {
  /* build a perfectly balanced tree $T$
  from entries in (sorted) array A */ }@\vspace{.1in}@
@\textbf{refold}@$(T)$ {
  if ($T$ is not marked) return $T$;
  if ($B\le|T|\le 2B$) return @\fold@$(T)$;
  else {
    $(L,m,R)=\expose(T)$;
    $\Tl=\refold(L)$ || $\Tr=\refold(R)$;
    return $\ujoin(\Tl,m,\Tr)$; }}
\end{lstlisting}
\end{minipage}
\caption{Some useful primitives for \join{}-based algorithms on \ourtree{s} for more efficient set algorithms. \label{fig:codeappjoin}}
\end{figure*}

\ifx\confversion\undefined
\begin{figure*}
\small
\hfill
\begin{minipage}[t]{\columnwidth}
\begin{lstlisting}
@\textbf{splitLast}@($T$) {
  $(L,k,R)$ = $\uexpose(T)$;
  if ($R$ == nil) return ($L,k$);
  else {
    $(T',k')$ = splitLast($R$);
    return $(\ujoin(L,k,T'),k')$; } }
\end{lstlisting}
\end{minipage}
\hfill
\begin{minipage}[t]{\columnwidth}
\StartLineAt{7}
\begin{lstlisting}
@\textbf{join2}@($\Tl$,$\Tr$) {
  if ($\Tl$ = nil) return $\Tr$;
  else {
    $(\Tl',k)$ = splitLast($\Tl$);
    return $\ujoin(\Tl',k,\Tr)$; } }@\vspace{.1in}@
\end{lstlisting}
\end{minipage}
\hfill
\textbf{Simple version used in implementation:}\\
\begin{minipage}[t]{0.65\columnwidth}
\StartLineAt{12}
\begin{lstlisting}
@\textbf{union}@($T_1$,$T_2$) {
  if ($T_1$ == nil) return $T_2$;
  if ($T_2$ == nil) return $T_1$;
  ($L_2$,$k_2$,$R_2$) = $\uexpose(T_2$);@\label{line:mainunionexpose}@
  ($L_1$,$b$,$R_1$) = split($T_1$,$k_2$);@\label{line:mainunionsplit}@
  $\Tl$ = union($L_1$,$L_2$)
      || $\Tr$ = union($R_1$,$R_2$);
  return $\ujoin$($\Tl$,$k_2$,$\Tr$); }@\label{line:mainunionjoin}
%  $\ujoin$($\Tl$,$k_2$,$\Tr$) }@\vspace{.1in}@
\end{lstlisting}
\end{minipage}
\begin{minipage}[t]{0.65\columnwidth}
\StartLineAt{20}
\begin{lstlisting}
@\textbf{intersect}@($T_1$,$T_2$) {
  if ($T_1$ == nil) return nil;
  if ($T_2$ == nil) return nil;
  ($L_2$,$k_2$,$R_2$) = $\uexpose(T_2)$;
  ($L_1$,$b$,$R_1$) = split($T_1$,$k_2$);
  $\Tl$ = intersect($L_1$,$L_2$)
      || $\Tr$ = intersect($R_1$,$R_2$);
  if ($b$) return $\ujoin$($\Tl$,$k_2$,$\Tr$);
  else return join2($\Tl$,$\Tr$); } @\vspace{.1in}@
\end{lstlisting}
\end{minipage}
\begin{minipage}[t]{0.65\columnwidth}
\StartLineAt{29}
\begin{lstlisting}
@\textbf{difference}@($T_1$,$T_2$) {
  if ($T_1$ == nil) return nil;
  if ($T_2$ == nil) return $T_1$;
  ($L_2$,$k_2$,$R_2$) = expose($T_2$);
  ($L_1$,$b$,$R_1$) = split($T_1$,$k_2$);
  $\Tl$ = difference($L_1$,$L_2$)
      || $\Tr$ = difference($R_1$,$R_2$);
  return join2($\Tl$,$\Tr$); }
\end{lstlisting}
\end{minipage}\\
\textbf{Special base cases for tighter bound:}\\
\begin{minipage}[t]{\columnwidth}
\StartLineAt{37}
\begin{lstlisting}
@\textbf{union\_}@($T_1$,$T_2$) {
  if ($T_1$ == nil) return $T_2$;
  if ($T_2$ == nil) return $T_1$;
  if (isflat($T_1$) or isflat($T_2$)) {
    return refold(union_base($T_1,T_2$));}@\label{union:refold}@
  ($L_2$,$k_2$,$R_2$) = $\uexpose(T_2$);
  ($L_1$,$b$,$R_1$) = split($T_1$,$k_2$);
  $\Tl$ = union_($L_1$,$L_2$) || $\Tr$ = union_($R_1$,$R_2$);
  return $\ujoin$($\Tl$,$k_2$,$\Tr$) }@\vspace{.1in}@
\end{lstlisting}
\end{minipage}
\StartLineAt{46}
\begin{minipage}[t]{\columnwidth}
\begin{lstlisting}
@\textbf{union\_base}@($T_1$,$T_2$) {
  if ($T_1$ == nil) return $T_2$;
  if ($T_2$ == nil) return $T_1$;
  if (isflat($T_1$)) unfold($T_1$);
  if (isflat($T_2$)) unfold($T_2$);
  ($L_2$,$k_2$,$R_2$) = $\uexpose(T_2$);
  ($L_1$,$b$,$R_1$) = split($T_1$,$k_2$,true);
  $\Tl$ = union_base($L_1$,$L_2$) || $\Tr$ = union_base($R_1$,$R_2$);
  return join($\Tl$,$k_2$,$\Tr$,true); }
\end{lstlisting}
\end{minipage}
\caption{Set algorithms on \ourtree{s}.
\label{fig:setcode}}
\end{figure*} 
\fi

\myparagraph{Other algorithms.}
We present the pseudocode for the other two set algorithms
(\intersection{} and \difference{}) in \cref{fig:setcode}. We also
show the code for three other useful functions, \multiinsert{}, \texttt{map}
and \texttt{reduce} in \cref{fig:otheralgo}. We note that these algorithms are
exactly the same as in \pam{}, by extracting out the semantics of
\join{} and \expose{}.

\else
\myparagraph{Other algorithms.} We show the pseudocode of other
parallel algorithms in \cref{fig:otheralgotext} and more in the
full version of the paper. We omit the details as they are
self-explanatory and all of them are exactly the same as in \pam{},
just by plugging in the new version of \join{} and \expose{} functions
for \ourtree{}. Almost all of them use divide-and-conquer to enable
parallelism.  We refer the reader to \cite{pam} for more details.
\fi

\medskip

Importantly, all of our \ourtree{} algorithms are theoretically
efficient. We present the work-span bound in
\cref{table:algorithmcosts} and give a proof for \union{} as an
example in \cref{sec:theory}. Note that \cref{lem:joincorrect}
ensures the correctness of the other algorithms, as their return
values are always obtained by a \join{}.

\begin{theorem}
  All \join-based algorithms on \ourtree{} maintains the invariants of \ourtree{s}.
\end{theorem}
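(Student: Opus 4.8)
The plan is to prove this by induction on the size of the inputs, reducing every case to \cref{lem:joincorrect}. The key structural observation is that no \join-based algorithm ever fabricates a tree node by hand: inspecting the pseudocode (e.g.\ \cref{fig:mainalgo} and the rest of \cref{sec:algo}), every tree that a join-based algorithm returns is one of (i) \nil{}; (ii) one of its input (sub)trees passed through unchanged; (iii) the output of \texttt{from\_sorted} on a sorted (sub)array; (iv) the result of a top-level \join{} or \joinTwo{} applied to recursively-computed trees; or, for the optimized set variants, (v) the result of a \refold{} applied to an \exploded{} intermediate produced with \node{}'s \texttt{expand} flag set. Case (i) is the empty \ourtree{}, and case (ii) is a \ourtree{} by hypothesis. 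For (iii), \texttt{from\_sorted} builds its result exclusively through \node{}, which folds any constructed subtree of weight in $[B,2B]$ and redistributes any of size in $(2B,4B]$, so its output satisfies the Blocked-Leaves invariant; since \texttt{from\_sorted} also recursively halves, Weight Balance holds as well. Cases (iv) and (v) are the heart of the matter and are discharged by \cref{lem:joincorrect} together with the inductive hypothesis, since \joinTwo{} and \refold{} are themselves built out of \tsplit{}, \expose{}, and \join{}.

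I would first handle the auxiliary primitives that the composite algorithms depend on. By induction on $|T|$ and \cref{lem:joincorrect}: \tsplit{}$(T,k)$ returns two trees, each obtained either directly as a child of an exposed node or by \join{}-ing (strictly smaller) recursively-split subtrees, hence each maintains the \ourtree{} invariants; \joinTwo{} (and \texttt{splitLast}) are a \tsplit{} followed by a \join{}, hence likewise; \refold{} walks the marked portion of a tree with \expose{}, recursively refolds, and reconnects with \join{}, so its correctness reduces to \cref{lem:joincorrect} too. With these in place, each of \union{}, \intersection{}, \difference{}, \filter{}, \texttt{map}, \texttt{multi\_insert} (and \texttt{union\_}, \texttt{union\_base}) fits exactly into cases (i)--(v) above, with the recursive calls always on strictly smaller inputs (measured by the total size of the arguments); the inductive hypothesis therefore guarantees that the trees $\Tl,\Tr$ fed into the final \join{}/\joinTwo{}/\refold{} are valid \ourtree{s}, and \cref{lem:joincorrect} closes each case. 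Algorithms that return a scalar, such as \texttt{reduce} and \mapreduce{}, are vacuously fine.

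The step I expect to be the real obstacle is pinning down precisely which invariant is threaded \emph{through} the recursion, because \expose{} does \emph{not} return valid \ourtree{s} in general: applied to a \blockednode{}, it first \unfold{}s the block into a perfectly-balanced tree of \regularnode{s} and returns that tree's children, which are unblocked and may have size $\ge B$, violating Blocked Leaves as a standalone object. The resolution is to observe that the property genuinely preserved by \expose{} --- and genuinely required as the precondition of \join{} --- is the weaker one: the returned $\Tl,\Tr$ are weight-balanced in themselves, balanced with respect to each other, and respect the BST key order around the extracted entry. Children of a weight-balanced node are weight-balanced by definition, an \unfold{}ed block is perfectly balanced and hence weight-balanced for any admissible $\alpha\le 1-1/\sqrt{2}<1/3$, and the key-ordering follows from the in-order semantics of \expose{}. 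One therefore wants \cref{lem:joincorrect} read in the equivalent form ``given two weight-balanced, suitably key-ordered trees and a separating key, \join{} returns a valid \ourtree{}'': the \node{}/\fold{} logic inside \join{} is exactly what \emph{re-establishes} Blocked Leaves --- by folding or redistributing whenever a newly created node has size in $[B,4B]$, and otherwise because \join{}'s spine recursion only ever stitches together pieces that are already valid \ourtree{s}. Once \cref{lem:joincorrect} is understood in this form, the induction above goes through unchanged: every join-based algorithm starts from valid \ourtree{} inputs, threads the weaker weight-balance-and-order property through its \expose{}-driven recursion, and returns a tree that is either an input, \nil{}, or a \node{}/\join{}/\joinTwo{}/\refold{} output --- hence a valid \ourtree{}.
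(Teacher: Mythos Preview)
Your proposal is correct and follows the same approach as the paper, whose entire justification is the sentence immediately preceding the theorem: \cref{lem:joincorrect} ``ensures the correctness of the other algorithms, as their return values are always obtained by a \join{}.'' Your write-up is in fact more careful than the paper's, since you explicitly flag and resolve the subtlety that \expose{} on a \blockednode{} can return subtrees violating Blocked Leaves, so that the invariant actually threaded through the recursion is only weight-balance-plus-key-order, with \node{}/\join{} restoring the full \ourtree{} invariants on the way back up.
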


\section{Theoretical Guarantees}
\label{sec:theory}

In the following section we show work and span bounds for operations
on \ourtree{s}. We assume the encoding scheme is empty, which means
that to flatten or \explode{} a block of size $n$ costs $O(n)$ work
and $O(\log n)$ span.
If the encoding scheme is not parallelizable (e.g., for difference
encoding), the span bound of the algorithms will be affected. We
present more details in 
\ifx\confversion\undefined
Section~\ref{app:setspan}.
\else
the full version of the paper.
\fi

We start with the cost of the \join{} and \tsplit{} algorithms.

\begin{restatable}{theorem}{joincost}
\label{lem:joincost}
Consider a \join{} algorithm on two \ourtree{s} $\Tl$, $\Tr$ and an key $k$. Let $n=\max(|\Tl|,|\Tr|)$ and $m=\min(|\Tl|,|\Tr|)$.
If both $\Tl$ and $\Tr$ are \composite{} trees, the algorithm takes $O\left(\log\frac{n}{m}\right)$ work and span.
If both $\Tl$ and $\Tr$ are \simple{} trees, the algorithm takes $O(\B)$ work and $O(\log B)$ span.
Otherwise, the algorithm takes $O\left(\B + n/B\right)$ work and $O(\log n)$ span.
\end{restatable}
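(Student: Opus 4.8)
The plan is to analyze \myfunc{join} by splitting into the three stated cases based on the types of $\Tl$ and $\Tr$, and in each case to account for the work/span of the recursive \myfunc{join\_right} (or \myfunc{join\_left}) path together with the extra cost incurred by the \myfunc{node}, \fold{}, and \unfold{} calls along that path. First I would handle the \emph{\composite{}--\composite{}} case: here \myfunc{expose} calls on the right spine of $\Tl$ are all on \regularnode{s} (the recursion descends through internal nodes, stopping once weights balance), so each costs $O(1)$. The length of the descent is $O\left(\log\frac{n}{m}\right)$ by the standard weight-balanced \join{} analysis \cite{blelloch2016just}, since the internal structure of a \ourtree{} is itself weight-balanced and the recursion stops at the first node on the right spine whose weight is comparable to $m=|\Tr|$. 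The only new ingredient is the \myfunc{node} calls: whenever a freshly created node has size in $[B,4B]$ we \fold{}/redistribute, costing $O(B)$ each; but since $m\ge B$ in the \composite{}--\composite{} case, these $O(B)$ costs are subsumed by the fact that each such fold happens $O(1)$ times near the bottom of the recursion and $O(B)=O(m)=O\!\left(m\log\frac{n}{m}\right)$ when $m\ge B$... more carefully, a \fold{} only triggers within a constant number of levels of where weights equalize, so the total fold cost is $O(B)=O\!\left(\log\frac{n}{m}\right)$ only if $B$ is dominated — I would instead observe that when both trees are \composite{}, $n,m\ge B$, and the at-most-$O(1)$ folds contribute $O(B)\le O(m)$, which is within $O\!\left(m\log\frac{n}{m}+\min(mB,n)\right)$-style slack; but since the lemma claims $O\!\left(\log\frac{n}{m}\right)$ work, I must argue the folds near the base contribute only $O(\log\frac{n}{m})$, which holds because a node of size $\le 4B$ created during \myfunc{join\_right} can only appear when $|\Tr|=m=O(B)$, i.e. when $\log\frac{n}{m}=\Omega(\log\frac{n}{B})$, and redistributing two blocks is $O(B)=O\!\left(\log\frac{n}{B}\right)$ precisely when $B=O(\log\frac{n}{B})$. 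So the cleanest route is: the fold cost is $O(B)$, which is $O\!\left(\log\frac{n}{m}\right)$ whenever a fold is actually triggered (since a fold needs $m=O(B)$, giving $\log\frac{n}{m}=\Omega(\log\frac{n}{B})\ge \Omega(B)$ under the implicit assumption $B=O(\log n)$); otherwise no fold occurs and the bound is immediate. I'll state this carefully.

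Next, the \emph{\simple{}--\simple{}} case: both trees have fewer than $B$ nodes, so $|T_L|,|T_R|<B$ and the result has size $<2B$. The algorithm does $O(\log B)$ work descending, then \myfunc{node} sees a tree of size in $[B,2B]$ (if the total is $\ge B$) or $<B$ (if not) and either \fold{}s it into a single \blockednode{} — costing $O(B)$ work and $O(\log B)$ span to flatten the perfectly-balanced tree — or returns it unchanged. Total: $O(B)$ work, $O(\log B)$ span.

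For the \emph{mixed} case (one \simple{}, one \composite{}), WLOG say $|\Tl|>|\Tr|$; the larger tree $\Tl$ is \composite{} of size $\Theta(n)$ and the smaller $\Tr$ is \simple{} of size $<B$. The descent down the right spine of $\Tl$ has length $O(\log\frac{n}{B})=O(\log n)$ and stops at a leaf, which is a \blockednode{} of size $\Theta(B)$; exposing it triggers \unfold{} (cost $O(B)$), and then we \myfunc{join} a tree of size $O(B)$ with $\Tr$ of size $<B$, a \simple{}--\simple{} join costing $O(B)$. Climbing back, the newly created subtree has size $O(B)$, so \myfunc{node} will \fold{} or redistribute it, again $O(B)$; every node above that has size $\Theta(B)$ (the unaffected spine blocks) or is a reconstructed internal node and costs $O(1)$, but there may be $O(\log\frac{n}{B})$ rotations each touching a subtree of size $O(B)$ near the bottom — actually at most a constant number of rotations near where rebalancing happens, each $O(B)$. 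So the dominant terms are the $O(\log n)$-length descent ($O(\log n)$ work if we also recount, but actually $O(\log\frac{n}{B})$ internal steps of $O(1)$) plus a constant number of $O(B)$-cost operations at the bottom, giving $O(B+\log n)$... the lemma states $O(B+n/B)$, the extra $n/B$ presumably covering the possibility (via the \myfunc{node} redistribution when a size-$2B$-to-$4B$ node is created and the subtrees must be rewritten, or cascading folds if many consecutive small leaves appear). I would trace the \myfunc{join\_right} recursion precisely using \cref{fig:primitive}(f) and charge: $O(\log n)$ span for the path, $O(B)$ for each \unfold{}/\fold{}/redistribute (of which $O(1)$ occur), and note the $n/B$ term bounds the internal-node path length times any reconstruction, concluding $O(B+n/B)$ work and $O(\log n)$ span.

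\textbf{Main obstacle.} The delicate part is the \composite{}--\composite{} case: showing the \fold{}/redistribute operations invoked by \myfunc{node} inside \myfunc{join\_right} do \emph{not} blow the bound past $O\!\left(\log\frac{n}{m}\right)$. This requires the structural fact (implicitly using $B=O(\log n)$, or more precisely that a fold only triggers when the partner tree $\Tr$ already has size $O(B)$) that whenever a node of size $\le 4B$ is created during the join, the recursion depth below it is $\Theta(\log\frac{n}{B})$ and this is already $\Omega(B)$, so the $O(B)$ fold cost is absorbed. I would make this precise by first proving a structural lemma: in \myfunc{join\_right}$(\Tl,k,\Tr)$ with both trees \composite{}, every node constructed by \myfunc{node} on the recursion stack has size $\ge m-O(B)$ except possibly the deepest one, and a \fold{} (size $\le 2B$) can occur only at the deepest level, when $m=O(B)$. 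The \simple{}--\simple{} and mixed cases are comparatively mechanical — mostly careful bookkeeping of $O(B)$-cost primitives (\fold{}, \unfold{}, redistribute) against the $O(\log n)$-length spine — but I expect to need \cref{fig:primitive}(e)--(f) as the guide for exactly which \myfunc{node} branch fires at each step.
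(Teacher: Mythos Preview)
Your handling of the \composite{}--\composite{} case contains a genuine gap. You assume that \fold{}/redistribute operations may be triggered during the recursion and then struggle to absorb an $O(B)$ cost into $O(\log\frac{n}{m})$, eventually invoking an unjustified assumption that $B=O(\log n)$. The paper's argument is much simpler and avoids this entirely: it shows that \emph{no} \fold{} or \unfold{} is ever triggered in this case. The reason is that since $\Tr$ is \composite{}, $|\Tr|\ge 2B$; as the recursion descends the right spine of $\Tl$, subtree sizes shrink geometrically, and the smallest \composite{} subtree encountered has size at most $4B$. For $\alpha\le 1/3$, some \composite{} subtree along the spine is already weight-balanced with $\Tr$, so the recursion halts at an internal (\regularnode{}) and never reaches a \blockednode{}. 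Every \myfunc{node} call on the way back up therefore sees a tree of size $>4B$ and returns it unchanged. The entire cost is thus the $O(\log\frac{n}{m})$ spine traversal, with no $O(B)$ term to absorb.

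Your \simple{}--\simple{} analysis is fine and matches the paper. For the mixed case, your confusion about the $n/B$ term is a red herring: the paper's own proof actually establishes the tighter $O(B+\log\frac{n}{B})$ work bound (descend $O(\log\frac{n}{B})$ internal nodes, then one \unfold{}/merge/\fold{} at the leaf costing $O(B)$), and $O(B+n/B)$ in the statement is simply a looser but valid upper bound. You do not need to hunt for cascading folds or extra $O(B)$ charges along the spine.
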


\ifx\confversion\undefined
\begin{proof}
WLOG, let's assume $n=|\Tl|\ge|\Tr|=m$. 

For two \composite{} trees, we first prove that \join{} never decompresses a leaf.
Note that the algorithm will follow the right spine of the tree until finding a subtree $t$ in $\Tl$ that is balanced with $\Tr$, we will prove that there exist a \internalnode{} $t$ that is balanced with $\Tr$.
This is because as a \composite{} tree, $\Tr$ has size at least $2\B$. Along the right spine of $\Tl$, the smallest \composite{} subtree has size at most $4\B$. For any $\alpha\le 1/3$, we must be able to find a \composite{} subtree in $\Tl$ that is balanced with $\Tr$. This proves that the total number of tree nodes we need to visit on the right spine is $O\left(\log\frac{n}{m}\right)$.

For two \simple{} trees, the work is no more than copying both $\Tl$ and $\Tr$ and concatenating them, which is $O(\B)$ work and $O(\log B)$ span.

If $\Tl$ is a \composite{} tree and $\Tr$ is a \simple{} tree, we need to first follow the right spine to find a \leafnode{} $l$ in $\Tl$, which takes $O\left(\log\frac{n}{B}\right)$ time. Then it combines the \leafnode{} with $\Tr$, which flattens both $l$ ant $\Tr$, concatenates them, and rebalance the result. This process takes no more than $O(B)$ work and $O(\log B)$ span.
\end{proof}
\fi

\begin{restatable}{theorem}{splitcost}
\label{thm:splitcost}
  Consider a \tsplit{} algorithm on a \ourtree{} $T$. If $T$ is a \composite{} tree, the work and span of \tsplit{} are $O(\log \frac{|T|}{\B}+\B)$ and $O(\log |T|)$, respectively.
  If $T$ is an \simple{} tree, the work and span of \tsplit{} is $O(\log |T|)$.
\end{restatable}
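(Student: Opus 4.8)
The plan is to trace the recursion of \tsplit{} and charge its cost to three sources: descent along the search path for $k$, the single \unfold{} of a \blockednode{} (which occurs only in the \composite{} case), and the \join{}s performed as the recursion unwinds. The \simple{} case is immediate: when $|T|<B$ every node is an \internalnode{}, so \expose{} never calls \unfold{} and \node{} never triggers a \fold{}; hence \tsplit{} on $T$ is exactly the classical join-based split of a weight-balanced BST, and I would invoke the known bound of $O(\log|T|)$ work and $O(\log|T|)$ span~\cite{blelloch2016just}, obtained by telescoping the costs of the \join{}s on the return path.

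For a \composite{} tree, I first bound the descent. The \internalnode{s} of a \composite{} \ourtree{} form a weight-balanced tree on $\Theta(|T|/B)$ nodes, so the search path for $k$ is $O(\log(|T|/B))$ \internalnode{s} followed by a single \blockednode{}, and descent costs $O(1)$ work and span per \internalnode{}. At the \blockednode{}, \expose{} calls \unfold{}, which costs $O(B)$ work and $O(\log B)$ span and produces a perfectly balanced tree of size at most $2B$; inside it \tsplit{} recurses, and by the \simple{} analysis (plus at most one \fold{} when an accumulator passes through the window $[B,2B]$) this inner split costs $O(B)$ work and $O(\log B)$ span. Hence descent plus leaf handling contribute $O(\log(|T|/B)+B)$ work and $O(\log|T|)$ span.

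Next I bound the \join{}s performed while the recursion unwinds. Each \internalnode{} $v$ on the path triggers one \join{} combining the current accumulator with $v$'s off-path subtree $s$. I split these into "small" joins (some operand is a \simple{} tree or a lone \blockednode{}) and "large" joins (both operands \composite{}). Weight balance forces the total size of the on-path subtree to grow by a factor of at least $1/(1-\alpha)$ per step upward from the leaf, so it exceeds $4B$ after only $O(1)$ steps; hence there are $O(1)$ small joins, each costing $O(B)$ work and $O(\log B)$ span by \cref{lem:joincost} (this also absorbs any \fold{} or redistribution that \node{} performs as the accumulator grows past $B$, $2B$, and $4B$). The same weight-balance inequality shows the accumulator's size is always $O(|s|)$, so \tsplit{} never incurs the expensive $O(B+n/B)$ mixed case of \cref{lem:joincost} against a large \composite{} subtree. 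The large joins are \composite{}--\composite{}, each of cost logarithmic in the ratio of the two joined subtree sizes, and there are $O(\log(|T|/B))$ of them; by the standard telescoping argument for weight-balanced splits~\cite{blelloch2016just} their costs sum to $O(\log(|T|/B))$ in both work and span. Adding all contributions gives $O(\log(|T|/B)+B)$ work and $O(\log|T|)$ span, as claimed.

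I expect the delicate part to be the additive $O(B)$ term: one must argue that only a constant number of $\Theta(B)$-cost events occur --- the forced \unfold{}, the \fold{}/redistribution events when an accumulator crosses the $[B,2B]$ and $[2B,4B]$ windows, and the $O(1)$ small joins --- and, crucially, that \tsplit{} never joins a small fragment to a large \composite{} subtree. Both facts rest on the geometric growth of subtree sizes along a root-to-leaf path in a weight-balanced tree; combining them with the telescoping bound for the \composite{}--\composite{} joins (which itself relies on \cref{lem:joincost}) is where the real work lies.
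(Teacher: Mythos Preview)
Your plan is essentially the paper's argument, spelled out in more detail. The paper's proof is two sentences: for a \composite{} tree it observes that \tsplit{} on a \ourtree{} differs from the \pam{} split only by the single \unfold{} triggered in \expose{} at the \blockednode{} and the \fold{}s hidden inside \join{}, and then asserts that ``for all \join{} calls, at most one on each side can involve a \simple{} tree,'' so the overhead over the $O(\log|T|)$ \ptree{} bound is $O(B)$ work and $O(\log B)$ span. Your descent/leaf/small-join/large-join decomposition and the telescoping of the \composite{}--\composite{} joins is exactly this, made explicit.

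One step does not work as written. The claim ``the accumulator's size is always $O(|s|)$, so \tsplit{} never incurs the expensive $O(B+n/B)$ mixed case'' does not rule anything out: an \emph{upper} bound on the accumulator cannot force it to be \composite{}. Indeed the accumulator on one side can still be a tiny \simple{} fragment the first time it meets a large \composite{} off-path subtree (e.g., if the search path goes left many times from the root before turning right toward the leaf, the right accumulator is just the leaf fragment when it is first joined with a sibling of size $\Theta(\sqrt{|T|B})$). What actually rescues the bound is that the ``otherwise'' cost in \cref{lem:joincost} is really $O(B+\log(n/B))$, not $O(B+n/B)$---look at its proof: one walks the spine in $O(\log(n/B))$ steps and pays $O(B)$ once at the leaf. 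With that, the at-most-one mixed join per side costs $O(B+\log(|T|/B))$, its $\log$ term is absorbed by the telescoping you already do, and the rest of your argument goes through unchanged.
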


\ifx\confversion\undefined
\begin{proof}
  For a \simple{} tree the cost directly follows the result on \ptree{s}~\cite{blelloch2016just}. For a \composite{} tree, the only difference of \tsplit{} on \ourtree{s} from \tsplit{} on PAM trees is the \unfold{} performed in \expose{} and  \fold{} in \join{}. It takes $O(\log |T|/B)$ steps to reach a \leafnode{}. The \unfold{} function in \expose{} (Line \ref{line:splitexpose}) is performed at most once. For all \join{} calls, at most one on each side can involve a \simple{} tree. So the total overhead is at most $O(\B)$ in work and $O(\log B)$ in span.
\end{proof}
\else
Due to page limit, we provide the proofs of \cref{lem:joincost,thm:splitcost} in the full version of this paper.
\fi
Based on these results, we now analyze the cost of the set operations.


\begin{theorem}
\label{thm:unionwork}
Consider the \union{} algorithm (and the closely related
  \intersection{} and \difference{} algorithms
\ifx\confversion\undefined
) in \cref{fig:setcode}
\else
  in the Appendix) in \cref{fig:mainalgo}
\fi
 on two \ourtree{s} of sizes $m$ and $n\ge m$.
  The work and span for these algorithms are $O\left( m\log
  \frac{n}{m} + m\B\right)$ and $O(\log n \log m)$ respectively.
\end{theorem}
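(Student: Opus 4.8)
The plan is to follow the classical cost analysis of \join-based \union{} on \ptree{s}~\cite{blelloch2016just,pam}, tracking the two extra sources of cost that \block{ed} leaves introduce and that are already isolated in \cref{lem:joincost} (cost of \join{}) and \cref{thm:splitcost} (cost of \tsplit{}). Recall that \union$(T_1,T_2)$ recurses on the \exploded{} structure of the tree being \expose{d}; as in the \ptree{} analysis one charges the recursion to the $O(m)$ keys of the \emph{smaller} tree, where the key handled at a recursion node is used to \tsplit{} a piece of the larger tree, and these pieces partition the larger tree (so their sizes $n_1,n_2,\dots$ satisfy $\sum_i n_i \le n$ and there are $O(m)$ of them). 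The per-node work is then one \tsplit{}, one \join{} to recombine the two recursive results, and $O(1)$ bookkeeping; the per-node span is the span of a \tsplit{} plus a \join{}, with the two recursive calls run in parallel.

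For the \emph{work}, I would bound the \tsplit{s} and \join{s} separately. By \cref{thm:splitcost}, the \tsplit{} at the node handling the $i$-th piece costs $O(\log(n_i/B) + B)$ when that piece is \composite{} and $O(\log n_i)\le O(B)$ when it is \simple{}; in all cases this is $O(\log(n_i{+}1) + B)$. By \cref{lem:joincost}, the \join{} recombining the (mostly \composite{}) recursive results costs $O(\log(\text{larger}/\text{smaller}))$, or $O(B)$ in the \simple{} case and $O(B+\log(n_i/B))$ in the mixed case. Summing all the logarithmic terms over the $O(m)$ pieces and invoking concavity of $\log$ (Jensen's inequality, exactly as in the \ptree{} bound) gives $O(m\log(n/m))$; summing the $O(B)$ penalties over the $O(m)$ recursion nodes gives $O(mB)$; so the total is $O(m\log(n/m) + mB)$ as claimed. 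For the \emph{span}, the smaller tree is weight-balanced, so its \exploded{} structure---and hence the recursion---has depth $O(\log m)$; each level contributes the $O(\log n)$ span of a \tsplit{} and a \join{} (the $O(\log B)$ \unfold{}/\fold{} steps are dominated, assuming a parallelizable encoding), yielding $O(\log n\log m)$.

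The main obstacle---and the reason for the special base-case variants \myfunc{union\_}/\myfunc{union\_base}/\refold{} in \cref{fig:setcode}---is to keep the $O(B)$-per-node terms from compounding: a cascade of mixed \composite{}/\simple{} \join{s}, or repeatedly \fold{ing} and \unfold{ing} the same leaf, could otherwise contribute more than $O(mB)$. The argument there is an accounting one: once a recursion subtree becomes \simple{}, the computation switches to \myfunc{union\_base}, which operates entirely on \exploded{} trees (\join{} and \tsplit{} are invoked with \texttt{expand}$=$\texttt{true}, so no \fold{ing} occurs), and a single \refold{} is applied at the very end. One then charges (i) each leaf of the larger tree to at most one \unfold{}---at a base-case node handling pieces of sizes $n_i$ (of the larger tree) and $m_i$ (of the smaller), the \tsplit{s} touch only $O(\min(m_i, n_i/B))$ leaves, costing $O(\min(m_iB,\,n_i))$---and (ii) the final \refold{} to only the $O(\sum_i m_iB)$ \exploded{} nodes that were actually created, not to the whole tree; since the $m_i$ partition the smaller tree, $\sum_i m_i \le m$, so both (i) and (ii) total $O(mB)$, while inside the \exploded{} regions \join{} and \tsplit{} behave as on \ptree{s} and contribute the residual $O(m\log(n/m))$. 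Finally, \intersection{} and \difference{} have the identical recursion shape, differing only in calling \joinTwo{} instead of \join{} when the split key is absent (and \joinTwo{}, via \myfunc{splitLast} and \cref{lem:joincost}, admits the same cost bound as \join{}), so the entire accounting carries over verbatim; the only remaining routine check is that \myfunc{union\_base} and the trailing \refold{} each have depth $O(\log n)$ and per-node span $O(\log B)=O(\log n)$, so they do not dominate the $O(\log n\log m)$ span.
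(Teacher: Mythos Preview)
Your first two paragraphs capture the right idea and broadly mirror the paper's strategy: decompose the total work into \tsplit{}, \join{}, and \expose{} costs; show the logarithmic pieces sum to the familiar $O(m\log(n/m))$ \ptree{} bound; and show the extra $O(B)$ penalties from folding/unfolding leaves sum to $O(mB)$ over the $O(m)$ active recursion nodes. The paper organizes this into three lemmas---expose work $O(\min(mB,n))$ (\cref{lem:exposework}), split work $O(m\log(n/m)+mB)$ (\cref{lem:splitwork}), and, crucially, a per-node bound that each \join{}'s cost is dominated by the corresponding \tsplit{}'s cost (\cref{lem:joinlarger}). You gesture at this last step but do not make it explicit: the complex--complex \join{} costs $O(\log(|t_l|/|t_r|))$, and it is not immediate that these ratios stay controlled; the paper's case analysis (when the pivot subtree $P$ outweighs the decomposed piece $D$, weight-balance forces $|t_l|,|t_r|\in[\alpha|P|,(2-\alpha)|P|]$, so the ratio is $O(1)$) is what pins this down.

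Your third paragraph, however, misreads the role of \texttt{union\_}, \texttt{union\_base}, and \refold{}. Those special base cases are \emph{not} needed for \cref{thm:unionwork}: the simple \union{} already achieves $O(m\log(n/m)+mB)$, precisely because \cref{lem:joinlarger} caps each \join{}'s fold/unfold cost by the $O(B)$ already charged to that node's \tsplit{}---so nothing ``compounds.'' The special base case exists only to obtain the \emph{tighter} bound $O(m\log(n/m)+\min(mB,n))$ of \cref{thm:unionworkbetter}: when $mB>n$, deferring all \fold{}s lets one charge each leaf of the larger tree to a single \unfold{} and pay $O(n)$ instead of $O(mB)$. You have conflated the two theorems and the purposes of the two algorithm variants; the accounting you sketch in paragraph three is essentially the proof of \cref{thm:unionworkbetter}, not of the present theorem.

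One smaller point on span: the recursion follows the \emph{pivot} tree $T_p$, which need not be the smaller one. Its (expanded) depth is $O(\log|T_p|)$, and each level's \tsplit{}/\join{} contributes $O(\log|T_d|)$ span, giving $O(\log|T_p|\cdot\log|T_d|)=O(\log n\log m)$ in either case. Your claim that the recursion has depth $O(\log m)$ is only correct when the pivot tree happens to be the smaller one.
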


To prove the theorem, we first present some definitions and lemmas. First, note that all the work can be asymptotically bounded by the three categories below:

\begin{enumerate}[label=(\arabic*).]
  \item \emp{split work}: all work done by \tsplit{} (Line \ref{line:mainunionsplit}),
  \item \emp{join work}: all work done by \join{} (Line \ref{line:mainunionjoin}) or \joinTwo{} in \intersection{} and \difference,
  \item \emp{expose work}: all work done by \expose{} (Line \ref{line:mainunionexpose}).
\end{enumerate}

One observation is that the split work is identical among the three
set algorithms. This is because the three algorithms behave the same
on the way down the recursion when doing \tsplit{}s, and only differ
in what they do at the base case and on the way up the recursion when
building the output tree (see the other two set algorithms in \cref{fig:setcode}). 

We use \op{} to denote the set operation (one of \union{},
\intersection{} or \difference{}).
In these algorithms, the tree $T_1$ is split by the keys in $T_2$.  We
call $T_1$ the \emp{decomposed tree} and $T_2$ the
\emp{pivot tree}, denoted as $T_d$ and $T_p$ respectively.
Let $m=\min(|T_p|, |T_d|)$ and $n=\max(|T_p|,|T_d|)$.
\hide{
We denote the subtree rooted at $v \in T_p$ as $T_p(v)$, and the tree
of keys from $T_d$ that $v$ is operated with as $T_d(v)$ (i.e.,
\op$(v,T_d(v))$ is called at some point in the algorithm.
This essentially means that $v$'s subtree in $T_p$ is processed with the tree $T_d(v)$ in a recursive call.
Note that $T_d(v)$ may not be a subtree in $T_d$, but is a tree of a subset of keys in $T_d$. We call such $T_d(v)$ a \emp{\subsettree{}} of the \decomposedtree{} $T_d$. We say $v\in T_p$ \emp{\dealswith} $T_d(v)$.
For $v \in T_p$, we refer to $|T_d(v)|$ as its \emp{splitting size}.
}

\begin{restatable}{lemma}{joinlarger}\label{lem:joinlarger}
For each function call to \op{} on
trees $P\subseteq T_p$ and $D\subseteq T_d$, the work done by \join{} (or \joinTwo{})
is asymptotically bounded by the work done by \tsplit{}.
\end{restatable}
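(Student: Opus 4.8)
\emph{Proof plan.} Fix one call to $\op$ on $P\subseteq T_p$ and $D\subseteq T_d$. Since $\op$ returns immediately (performing neither a $\tsplit$ nor a $\join$) whenever either argument is empty, I may assume $|P|,|D|\ge 1$. Write $k=k(P)$, $L_2=\lc(P)$, $R_2=\rc(P)$, let $(L_1,b,R_1)=\tsplit(D,k)$, and let $\Tl=\op(L_1,L_2)$ and $\Tr=\op(R_1,R_2)$ be the two recursive results; the concluding step is $\join(\Tl,k,\Tr)$ for $\union$ and $\joinTwo(\Tl,\Tr)$ (a $\tsplit$-like \texttt{splitLast} on $\Tl$ followed by a $\join$) for $\intersection$ and $\difference$. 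The goal is to charge the cost of the concluding step to the cost of $\tsplit(D,k)$, which by \cref{thm:splitcost} is $\Theta(\log(|D|/\B)+\B)$ when $D$ is \composite{} and $\Theta(\log|D|)$ when $D$ is \simple{}. I will use three facts: $|L_1|+|R_1|\le|D|$; the key set of $\Tl$ lies between that of $L_2$ and that of $L_1\cup L_2$ for $\union$, so $|L_2|\le|\Tl|\le|L_1|+|L_2|$ (and it lies inside that of $L_1$, so $|\Tl|\le|L_1|$, for $\intersection$ and $\difference$), symmetrically for $\Tr$; and $w(L_2),w(R_2)\ge\alpha\,w(P)$ because $P$ is a subtree of the weight-balanced $T_p$.

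Step one bounds the imbalance $\delta:=|\log w(\Tl)-\log w(\Tr)|$, which upper-bounds the number of spine descents and single/double rotations a $\join$ performs. From the three facts, $\max(w(\Tl),w(\Tr))\le|D|+|P|+O(1)$ and $\min(w(\Tl),w(\Tr))\ge\alpha\,w(P)$ for $\union$ (for $\intersection$ and $\difference$ both weights are already $\le|D|+O(1)$), so $\delta\le\log(1/\alpha)+\log(1+|D|/w(P))=O(\log|D|)$ using $w(P)\ge 1$. Step two observes that, because the inputs to $\join$ are already valid \ourtree{}s, one $\join$ triggers only $O(1)$ block operations (an $\unfold$ of the single right-spine leaf that the descent reaches, and a $\fold$ or redistribute inside the $\node(\cdot,\cdot,\cdot)$ that rebuilds that part, or a single $\fold$ inside $\node(\Tl,k,\Tr)$ when $\Tl,\Tr$ are balanced), for a total of $O(\B)$ work, while everything else it does is $O(\delta)$ work on \regularnode{s} (spine descent and rotations, each $O(1)$, with no other $\node$ call folding because those nodes already had weight $>2\B$ before the call). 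Hence the concluding $\join$ costs $O(\log|D|+\B)$, where the $\B$ term is present only if such a block operation occurs.

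Now combine with $\tsplit(D,k)$. If $D$ is \composite{}, then $\tsplit(D,k)=\Omega(\B)$ and $\Omega(\log(|D|/\B))$, which dominates both $O(\log|D|)=O(\log(|D|/\B)+\log\B)$ and the $O(\B)$ block term, so the concluding step is $O(\tsplit(D,k))$. If $D$ is \simple{}, then $\tsplit(D,k)=\Theta(\log|D|)$ already matches the $O(\log|D|)$ term, and the only residue is the $O(\B)$ of the block operation; I would charge this to the pivot key $k=k(P)$, which is the root of $P$ and hence the pivot of this $\op$ call and no other. There are $O(m)$ pivot keys in total (at most one per node of the smaller operand $T_p$), so these residual charges sum to $O(m\B)$ --- precisely the additive term of \cref{thm:unionwork} --- and the statement actually used downstream, that the total $\join$ and $\joinTwo$ work over the whole $\op$ is asymptotically the total $\tsplit$ work plus $O(m\B)$, follows. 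For $\joinTwo$, additionally, its opening \texttt{splitLast} runs on $\Tl$, whose key set is contained in $L_1\subseteq D$; so $|\Tl|\le|D|$, and $\Tl$ being \composite{} forces $D$ to be \composite{}, whence \cref{thm:splitcost} bounds \texttt{splitLast} by $\tsplit(D,k)$, and its trailing $\join$ is handled exactly as above.

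The step I expect to be the main obstacle is this last one: when $D$ is small a single $\fold$ still costs $\Theta(\B)$, so there is no clean per-call inequality ``$\join$ work $\le\tsplit$ work'' --- it is genuinely necessary to amortize the $O(\B)$ over the pivot keys, which is why the lemma is phrased and consumed in aggregated form inside the proof of \cref{thm:unionwork}. A lesser care point is that the $\delta$-bound needs $P$ to really be a subtree of $T_p$ --- which remains true even after an $\expose$ has $\unfold$ed a \blockednode{}, since the resulting perfect tree is still weight-balanced --- so that weight balance may be invoked for $L_2$ and $R_2$.
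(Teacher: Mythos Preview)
Your argument is correct and reaches the same $O(\log|D|+\B)$ bound on the concluding $\join$/$\joinTwo$ that the paper obtains, but by a somewhat different route. The paper does not bound the spine imbalance $\delta$ directly; instead it splits on $|P|\le|D|$ versus $|P|>|D|$. In the first case it uses $|R|\le 2|D|$ for $\union$ (respectively $|R|\le|D|$ for the other two operations) together with the generic $O(\log|R|+\B)$ bound on $\join$/$\joinTwo$. In the second case weight balance forces both recursive results $t_l,t_r$ to have sizes in $[\alpha|P|,(2-\alpha)|P|]$, after which a three-way case split on whether each of $t_l,t_r$ is \simple{} or \composite{} lets the paper invoke \cref{lem:joincost} verbatim: both \composite{} gives $O(1)$, both \simple{} gives $O(\B)$, and the mixed case forces $|P|=O(\B)$ hence again $O(\B)$. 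Your single inequality $\min(w(\Tl),w(\Tr))\ge\alpha\,w(P)$, $\max(w(\Tl),w(\Tr))\le|D|+|P|+O(1)$ collapses all of these cases into one line, at the price of needing the separate ``$O(1)$ block operations per $\join$'' observation---which is effectively the content of \cref{lem:joincost} anyway. Both derivations end at the same place.

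On what you flag as the main obstacle: the paper sidesteps it rather than amortizes. It simply opens by declaring that the $\tsplit$ work is $\Theta(\log|D|+\B)$, i.e.\ it adopts $\log|D|+\B$ as the per-call \emph{charge} for $\tsplit$ regardless of whether the actual work on a \simple{} $D$ is smaller, and then the per-call comparison $O(\log|D|+\B)$ against $\Theta(\log|D|+\B)$ is immediate. This overcharging is harmless because the downstream \cref{lem:splitwork} independently bounds the total $\tsplit$ charge by $O(m\log(n/m)+m\B)$, with the $m\B$ term absorbing precisely the $O(\B)$ residues you propose to amortize over pivot keys. So your amortization and the paper's overcharge are two bookkeepings of the same $+O(m\B)$; the paper's phrasing just lets the lemma be stated cleanly per call rather than in aggregate.
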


\ifx\confversion\undefined
\begin{proof}

Assume the return value is $R$.

First of all, the work of \tsplit{} is $\Theta(\log |D|+\B)$. Note that the work of \join{} (or \joinTwo{}) can be bounded by $O(\log|R|+\B)$.

Notice that \difference{} returns the keys in $D\backslash P$. 
Thus for both \intersection{} and \difference{} we have $R\subseteq D$. Therefore $|R|\le |D|$, which means the work done by \join{} or \joinTwo{} is no more than the work done by \tsplit{}.

For \union{}, first of all, we always call \join{} instead of \joinTwo{}. If $|P|\le |D|$, then $|R|\le 2|D|$.
\join{} costs work $O(\log |R|+B)=O(\log |P|+B)$, which is no more than $\Theta(\log |D|+\B)$.

Consider $|P|>|D|$. The subtrees $\lc(P)$ and $\rc(P)$,
which are used in the recursive calls, have size at least $\alpha|P|$ and at most $(1-\alpha)|P|$.
After combining with a subset of elements in $D$ (which has size smaller than $|P|$), the return value of each recursive call should have size at least $\alpha|P|$ and $(2-\alpha)|P|$. Denote these two trees from recursive calls as $t_l$ and $t_r$, respectively.
Note that $\alpha$ is a constant, so the difference of size between $t_l$ and $t_r$ is also no more than a constant.
WLOG assume $|t_l|\ge |t_r|$. In the following, we discuss different cases of whether $t_l$ and $t_r$ are \composite{} or \simple{} trees.
We will show that, in all cases, joining $|t_l|$ and $|t_r|$ has work $O(\log |D|+\B)$.

\begin{enumerate}
  \item When both $t_l$ and $t_r$ are \simple{} trees. From \cref{lem:joincost}, \join{} costs $O(\B)$ work.
  \item When both $t_l$ and $t_r$ are \composite{} trees. From \cref{lem:joincost}, \join{} costs $O(\log \frac{|r_l|}{|t_r|})=O(1)$ work.
  \item When $t_l$ is a \composite{} tree, but $t_r$ is a \simple{} tree. This means that $|t_l|>\B$ and $|t_r|\le \B$.
  From \cref{lem:joincost}, \join{} costs $O\left(\B + \log\frac{|t_l|}{\B}\right)$ work. Note that, as stated above, $|t_r|\ge \alpha|P|$. Considering $|t_r|\le B$, we know that $|P|=O(B)$, which also indicates $|t_l|=O(\B)$. Plug this into the work of \join{} $O\left(\B + \log\frac{|t_l|}{\B}\right)$, we can get the bound $O(\B)$.
\end{enumerate}

In summary, in all cases the work of \join{} or \joinTwo{} is asymptotically bounded by the corresponding \tsplit{} function.
\end{proof}
\else
We present the proof in the full version of this paper.
\fi
Next, we prove the bounds for split work and expose work, respectively.

\begin{lemma}
\label{lem:exposework}
The expose work is $O(\min(m\B,n))$.
\end{lemma}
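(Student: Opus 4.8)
The plan is to bound the \expose{} work by charging each \expose{} call (Line~\ref{line:mainunionexpose}) to the pivot tree node it is invoked on, and then separately establishing the two terms $O(m B)$ and $O(n)$ in the minimum. Recall that a single \expose{} on a \regularnode{} costs $O(1)$ work, while a single \expose{} on a \blockednode{} triggers an \unfold{}, costing $O(B)$ work (and $O(\log B)$ span). Over the course of the \union{} recursion, \expose{} is called exactly once per node of the pivot tree $T_p$ that is reached, i.e.\ at most once on each node of $T_2$. So the total expose work is at most $\sum_{v} c(v)$, where the sum is over pivot nodes $v$ that get exposed, and $c(v) = O(1)$ if $v$ is a \regularnode{} and $c(v) = O(B)$ if $v$ is a \blockednode{}.

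For the $O(mB)$ bound: $T_p$ is the tree of size $m$ (the smaller tree by the convention $m = \min(|T_p|,|T_d|)$ — actually I should be careful here, since in the theorem statement the decomposed tree $T_1$ is split and could be the larger one; but for the expose term, what matters is that \expose{} is applied to nodes of $T_2 = T_p$, and we want the bound in terms of both sizes). Since $T_p$ has at most $m$ nodes total (when $m = |T_p|$) or, when $|T_p| = n$, the recursion still only reaches $O(n)$ of its nodes — I'll split into the two regimes. When $|T_p|\le |T_d|$, so $|T_p| = m$: there are at most $m$ pivot nodes, hence at most $m$ \expose{} calls, each costing $O(B)$ in the worst case, giving $O(mB)$. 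When $|T_p| = n \ge m = |T_d|$: here the decomposed tree is the smaller one, and a standard argument (used for the split-work analysis in \pam{}) shows that recursion on the pivot side terminates early — the number of pivot nodes actually visited is $O(m\log\frac{n}{m})$ by the same "visited nodes" counting as for \tsplit{} work, but more simply we can just say the number of visited pivot nodes is $O(n)$ trivially and each \expose{} on a \blockednode{} costs $O(B)$... that overshoots. The cleaner route: bound the number of \blockednode{} \expose{}s by $O(m)$ regardless of which tree is the pivot, because the recursion splits $T_d$ by pivot keys and once a recursive branch has an empty decomposed subtree it returns immediately without exposing further pivot nodes; the frontier where decomposed subtrees become empty has size $O(m)$, and inside that frontier the pivot subtrees are untouched (returned wholesale, not exposed). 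Thus at most $O(m)$ pivot \blockednode{}s are ever exposed, each at $O(B)$, plus $O(m\log\frac{n}{m}) \cdot O(1)$ for \regularnode{} exposes on the search paths — but this last term is split work, not expose work in the $O(1)$-per-node accounting; in any case it is dominated. This yields expose work $O(mB)$.

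For the $O(n)$ bound: every \blockednode{} in $T_p$ holds between $B$ and $2B$ entries, so exposing (unfolding) it costs $O(B)$ but "consumes" $\Theta(B)$ distinct keys of $T_p$. Distinct \blockednode{}s are disjoint, so the total work over all \blockednode{} exposes is $O(|T_p|) = O(n)$ when $|T_p| = n$; and there are at most $|T_p| = O(n)$ \regularnode{} exposes at $O(1)$ each. When instead $|T_p| = m \le n$, the same argument gives $O(m) \le O(n)$. Combining, expose work is $O(n)$. Taking the better of the two bounds gives $O(\min(mB, n))$.

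The main obstacle I anticipate is the first regime of the $O(mB)$ argument when the \emph{larger} tree is the pivot: one must argue carefully that only $O(m)$ of its \blockednode{}s are ever passed to \expose{}, using the fact that the recursion stops descending the pivot as soon as the matching decomposed subtree is empty (or a \simple{} tree handled by a base case), so the "active" portion of $T_p$ — the union of root-to-visited-node paths plus one more level — has only $O(m)$ \blockednode{}s on it. I would make this precise by associating to each visited pivot \blockednode{} at least one key of $T_d$ that is routed through it (each split key from $T_d$ follows one root-to-leaf path in $T_p$, touching $O(\log\frac{n}{m})$ \regularnode{}s but the \blockednode{}s form the bottom layer and there are $\le |T_d| = m$ leaf-level blocks touched), and confirming no \blockednode{} outside these paths is exposed. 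This is the same structural counting that underlies \cref{thm:splitcost} and \cref{lem:joinlarger}, so it should go through cleanly.
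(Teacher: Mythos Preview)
Your approach is essentially the same as the paper's: the paper also charges $O(B)$ per flat-node \expose{} and $O(1)$ otherwise, asserts that at most $O(m)$ nodes in $T_p$ are used to split $T_d$ (giving the $O(mB)$ bound), and notes that each node is exposed at most once with flat nodes fully expanded afterward (giving the $O(n)$ bound). Your case split on which tree is the pivot, and your explicit argument that at most $O(m)$ flat nodes can be reached because each requires a nonempty decomposed subtree, makes precise what the paper's three-sentence proof leaves implicit.
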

\begin{proof}
  \expose{} costs $\Theta(\B)$ when the subtree is a \blockednode{}, and $O(1)$ otherwise. 
  At most $O(m)$ nodes in $T_p$ will split $T_d$, so the total cost is $O(mB)$. The cost is also no more than $O(n)$ since each node is involved in at most one \expose{}, after which the \blockednode{} will be fulled \exploded{}. In summary the cost is $O(\min(m\B,n))$.
\end{proof}


\hide{
\begin{lemma}
\label{lem:unfold}
The work of all \unfold{} functions in \tsplit{} is $O(m\B)$.
\end{lemma}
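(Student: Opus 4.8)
The plan is to charge the total \unfold{} cost of a \union{} (equivalently \intersection{} or \difference{}) computation to the individual \tsplit{} calls it makes, show that each such call \unfold{}s only a bounded amount, and then count the calls.

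First I would argue that a single invocation \tsplit{}$(T,k)$ \unfold{}s at most one block. The call \expose{}s the current root and recurses into one child, so it touches exactly the nodes on the search path for $k$; but in a \ourtree{} every non-leaf is a \regularnode{}, on which \expose{} performs no \unfold{}ing, and only the terminal node on the path is a \blockednode{}. That \blockednode{} is \unfold{}ed once into a perfectly balanced subtree, all of whose nodes are \regularnode{s}, so the recursion continues with no further \unfold{}s. The rebalancing \join{}s executed while the recursion unwinds do not \unfold{} anything either: by \cref{lem:joincost} joining two \composite{} trees never decompresses a leaf, and only $O(1)$ \join{}s per side can touch a \simple{} tree (and those merely flatten, rather than \unfold{}). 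Hence the \unfold{} work of one \tsplit{} is $O(B)$ --- indeed this is precisely the source of the additive $O(B)$ term of \cref{thm:splitcost} --- while a \tsplit{} on a \simple{} tree incurs none.

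Next I would bound the number of \tsplit{} calls that can pay this cost. Exactly as in the proof of \cref{lem:exposework}, at most $O(m)$ nodes of the pivot tree $T_p$ ever run a \tsplit{} on the decomposed tree $T_d$ (with $m=\min(|T_p|,|T_d|)$): the \union{} recursion abandons a branch of $T_p$ as soon as the corresponding key-range of $T_d$ becomes empty, and, because the key-ranges handed to the nodes at a fixed level of $T_p$ are pairwise disjoint, the surviving branches can be charged to the $m$ keys of the smaller tree. Multiplying the $O(B)$ per-call \unfold{} cost by these $O(m)$ calls yields total \unfold{} work $O(mB)$, as claimed.

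The step I expect to be the main obstacle is the first one --- pinning down that a \tsplit{} really \unfold{}s only a single block. This requires a careful reading of what happens after \expose{} \unfold{}s the terminal leaf (the recursion then descends into the freshly built perfectly balanced subtree, which contains no \blockednode{s}), together with the guarantee from \cref{lem:joincost} that none of the rebalancing \join{}s along the way back up re-expands a block; the counting step, by contrast, is a direct reuse of the argument already given for \cref{lem:exposework}.
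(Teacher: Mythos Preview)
Your proposal is correct and follows essentially the same approach as the paper: bound the \unfold{} work of a single \tsplit{} by $O(B)$ and multiply by the $O(m)$ pivot-tree nodes that ever invoke a \tsplit{}. The paper's proof is a terse two sentences asserting these two facts, whereas you supply the supporting detail (the search path meets only \regularnode{}s until a single terminal \blockednode{}, and the rebalancing \join{}s contribute only $O(B)$ overhead, exactly as in the proof of \cref{thm:splitcost}); your more careful treatment of the per-\tsplit{} bound is useful but not a different route.
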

\begin{proof}
  Every node in $T_p$ will be used to \tsplit{} at most one \subsettree{} of $T_d$, which involves at most one \unfold{} function all with cost $O(\B)$.
  There can be at most $O(m)$ nodes in $T_p$ used to split $T_d$. Thus the total work is $O(mB)$.
\end{proof}}



\hide{
\begin{lemma}
\label{lem:PAMsplit}
The total split work done on two \exploded{} weight-balanced trees of sizes $n$ and $m\le n$ is $O\left(m\log \frac{n}{m}\right)$.
\end{lemma}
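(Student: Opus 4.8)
The plan is to write the total split work as a sum over the nodes of the \pivottree{} $T_p$ and bound that sum by a geometric series. Since the trees are \exploded{} (ordinary weight-balanced BSTs), one \tsplit{} of a subtree of size $s$ costs $O(\log(1+s))$ work (the \simple{}-tree case of \cref{thm:splitcost}, i.e.\ the \pam{} bound~\cite{blelloch2016just}). In the recursion, each node $v\in T_p$ splits exactly one \subsettree{} $T_d(v)\subseteq T_d$ by its key $k(v)$, so the total split work is $\sum_{v\in T_p}O(\log(1+|T_d(v)|))$; nodes with $T_d(v)=\nil$ contribute only $O(1)$ and are subsumed below. I fix $P=|T_p|$, $D=|T_d|$ (so $\{P,D\}=\{m,n\}$) and $\beta=1-\alpha<1$. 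Note that under the paper's convention $\log x=\log_2(x+1)$, the target $O(m\log\tfrac nm)$ is exactly $O(m\log_2(n/m+1))$, which is what the series will produce.

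The crux is to stratify the nodes of $T_p$ by \emph{height $j$ from the bottom} (not depth from the root) and use two structural facts. First, nodes at a common height have pairwise disjoint subtrees, so their \subsettree{s} $T_d(v)$ are disjoint subsets of $T_d$ and $\sum_{v:\,h(v)=j}|T_d(v)|\le D$. Second, weight balance forces any subtree of height $j$ to have size at least $\beta^{-j}-1$; since same-height subtrees are disjoint and total at most $P$, the number $k_j$ of relevant (nonempty) height-$j$ nodes satisfies $k_j=O(P\beta^{j})$, and trivially $k_j\le D$. Applying concavity of $\log$ (Jensen) within each height level gives
\[
\sum_{v:\,h(v)=j}\log\bigl(1+|T_d(v)|\bigr)\ \le\ k_j\,\log\!\Bigl(1+\tfrac{D}{k_j}\Bigr).
\]

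Next I reduce to a clean geometric sum. The function $f(x)=x\log(1+D/x)$ is nondecreasing on $(0,D]$ — which follows from $\ln(1+u)\ge u/(1+u)$ with $u=D/x$ — and $k_j\le\min(P\beta^{j},D)$, so each level is at most $f(\min(P\beta^{j},D))$. Splitting at the threshold $j_0$ where $P\beta^{j}$ crosses $D$: for $j\le j_0$ (present only when $P\ge D$) each term is $f(D)=O(D)$ and there are $O(\log(P/D))$ of them; for $j>j_0$ the terms are $P\beta^{j}\log(1+D/(P\beta^{j}))$, and writing $i=j-j_0$ with $P\beta^{j}\approx D\beta^{i}$ turns these into the convergent series $\sum_{i\ge0}D\beta^{i}\log(1+\beta^{-i})=O(D)$. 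Summing yields $O\bigl(D\log(P/D+1)\bigr)=O\bigl(m\log(n/m+1)\bigr)$, and because the bound is symmetric in $P$ and $D$, it holds whether the \pivottree{} happens to be the smaller or the larger of the two trees.

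The main obstacle is exactly the choice of stratification. The naive route — grouping by depth $\ell$ from the root and using $\le 2^{\ell}$ nodes per depth — cannot control the \emph{extra} depth of a weight-balanced tree, whose height is $c\log P$ with $c=1/\log\tfrac1\beta>1$; the deep levels can each carry $\Theta(m)$ split work, and summing over $\Theta(\log P)$ of them overshoots to $O(m\log n)$. Stratifying by height from the bottom repairs this, since the node count decays as $O(P\beta^{j})$ independently of the total height, and the monotonicity of $f$ lets us replace each per-level count by its worst case while keeping the series convergent. The two technical checks the argument rests on are therefore the geometric-decay bound $k_j=O(P\beta^{j})$ from the weight-balance invariant and the monotonicity of $f$; both are short.
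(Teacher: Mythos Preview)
Your argument is correct. Note, however, that the paper does not actually prove this lemma: both occurrences (\cref{lem:PAMsplit} and its duplicate \cref{lem:appPAMsplit}) are disposed of with the one-line remark ``This directly follows \cite{blelloch2016just}.'' So there is no in-paper proof to compare against; you have supplied what the paper outsources.

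Your reconstruction is essentially the argument from the cited work, with one twist worth noting. The original proof in \cite{blelloch2016just} stratifies the pivot tree by \emph{depth from the root} and handles the extra constant $c>1$ in the height bound of weight-balanced trees by a slightly more delicate accounting (using that the number of active recursive calls at deep levels is also capped by the sizes already in play). Your choice to stratify by \emph{height from the bottom} sidesteps that delicacy entirely: the bound $k_j=O(P\beta^{j})$ comes straight from the weight-balance invariant with no dependence on the total tree height, and the sum telescopes without having to reason about the gap between $\log P$ and $c\log P$. This is a clean variant and the two supporting facts you flag---incomparability of equal-height nodes (hence disjoint subtrees and disjoint $T_d(v)$'s) and monotonicity of $x\mapsto x\log(1+D/x)$---are both correct as stated.
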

This directly follows \cite{blelloch2016just}.
}

\begin{lemma}
\label{lem:splitwork}
The total \tsplit{} work is $O\left(m\log \frac{n}{m}+m\B\right)$.
\end{lemma}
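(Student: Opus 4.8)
The plan is to decompose the total \tsplit{} work using the single-\tsplit{} cost from \cref{thm:splitcost}, to bound the logarithmic part exactly as in the P-tree analysis, and to bound the ``$\Theta(\B)$-per-call'' part by a level-by-level counting argument. Write $T_p$ and $T_d$ for the pivot and decomposed trees (one has size $m$, the other $n\ge m$), and for each recursive \op{} call let $P_v$ be the piece of $T_d$ split at that call by the pivot key $k(v)$. The total \tsplit{} work is $\sum_v \mathrm{cost}(\tsplit(P_v,k(v)))$ over all recursive calls; by \cref{thm:splitcost} this cost is $O(\log|P_v|)$ when $P_v$ is a \simple{} tree and $O(\log|P_v|+\B)$ when $P_v$ is a \composite{} tree, so the \tsplit{} work is at most $O\big(\sum_v \log|P_v|\big)+O(\B)\cdot N_c$, where $N_c$ counts the recursive calls with a \composite{} piece (hence $|P_v|\ge\B$).

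First I would show $\sum_v \log|P_v|=O(m\log\tfrac nm)$. Since \expose{} unfolds \blockednode{s} as the recursion descends, the recursion shape coincides with that of the analogous P-tree set operation run on the \exploded{} trees, and the $\{P_v\}$ form the same multiset of key-subsets of $T_d$; hence this sum is precisely the split-work quantity bounded in~\cite{blelloch2016just}. Concretely, $T_p$ is weight-balanced so the recursion has $O(\log|T_p|)$ levels; at level $i$ the non-empty pieces are pairwise-disjoint intervals of $T_d$, at most $\min(2^i,|T_d|)$ of them, with total size $\le|T_d|$; concavity of $\log$ bounds the level-$i$ contribution by $(\#\text{pieces})\cdot\log\frac{|T_d|}{\#\text{pieces}}$, and summing this geometric-type series over the $O(\log|T_p|)$ levels gives $O(m\log\tfrac nm)$ whichever of $T_p,T_d$ is larger. (Simplex pieces have $|P_v|<\B$, so their $O(\log|P_v|)$ cost is already subsumed.)

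The main obstacle is bounding $\B\cdot N_c=O(m\log\tfrac nm+m\B)$, the subtlety being that a \composite{} piece can survive many recursion levels --- whenever the current pivot key lies outside the piece, the whole piece descends unchanged into one child --- so one cannot charge only $O(\log(\cdot))$ \tsplit{}s to each piece; instead I would count, level by level, the recursive calls whose piece is \composite{}. At level $i$ these pieces are pairwise disjoint with $\ge\B$ keys each, so there are at most $\lfloor|T_d|/\B\rfloor$ of them, and at most $2^i$ of them (one per node of $T_p$ at depth $i$); and $N_c=0$ when $|T_d|<\B$ (then $T_d$ is \simple{}). If $T_p$ is the smaller tree ($|T_p|=m$), then $N_c\le|T_p|=m$ since each node of $T_p$ is exposed in at most one call, giving $\B N_c=O(m\B)$. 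If $T_p$ is the larger tree ($|T_p|=n$, $|T_d|=m$), then $N_c\le\sum_{i\ge 0}\min(2^i,m/\B)=O\big(\tfrac m\B\log\tfrac{n\B}{m}\big)$ over the $O(\log n)$ levels (a geometric $O(m/\B)$ from the first $O(\log\tfrac m\B)$ levels, then $O(\log\tfrac{n\B}{m})$ further levels each contributing $m/\B$), so $\B N_c=O\big(m\log\tfrac{n\B}{m}\big)=O\big(m\log\tfrac nm+m\log\B\big)=O(m\log\tfrac nm+m\B)$ since $\log\B\le\B$. Adding the two contributions gives the claimed $O(m\log\tfrac nm+m\B)$.
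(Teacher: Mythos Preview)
Your proof is correct and follows the same high-level decomposition as the paper: separate the per-\tsplit{} cost into a logarithmic part (handled by the P-tree analysis of~\cite{blelloch2016just} on the \exploded{} trees) and an additive $O(\B)$ part coming from \unfold{}ing a flat leaf. Where you differ is in bounding the second part. The paper simply asserts that ``at most $O(m)$ nodes in $T_p$ are used to split $T_d$'' and multiplies by $O(\B)$; this is immediate when $|T_p|=m$, but when $|T_p|=n>|T_d|=m$ that count is not obviously $O(m)$ (e.g.\ a single-element $T_d$ already induces $\Theta(\log n)$ nontrivial \tsplit{} calls along a root-to-leaf path of $T_p$). Your level-by-level count of pieces with $\ge \B$ elements handles this asymmetric case explicitly and even yields the slightly sharper intermediate bound $\B\cdot N_c=O(m\log\tfrac{n\B}{m})$ before relaxing to $O(m\log\tfrac nm+m\B)$. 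So the route is the same in spirit, but your argument is more carefully justified in the case where the pivot tree is the larger one; what it buys is a self-contained proof that does not lean on an unproven $O(m)$ call-count.

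One small remark: a single \blockednode{} (size between $\B$ and $2\B$, no \regularnode{s}) is technically neither \simple{} nor \composite{} under the paper's definitions, so \cref{thm:splitcost} does not literally cover it. Your argument is unaffected if you let $N_c$ count all calls with $|P_v|\ge \B$ (which is what you actually use in the level-by-level bound), since any such piece incurs at most one $O(\B)$ \unfold{} in its \tsplit{}.
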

\begin{proof}
  The total split work can be viewed as two parts:
  the total work to done by \tsplit{} functions to traverse and split
  non-\compressednode{s}, and the work to expose and split the
  \compressednode{s}. Note that here ``non-\compressednode{s}''
  include both \internalnode{s} in \composite{} trees, and all the
  nodes in \exploded{} trees.

\hide{
  We first note that in the base cases, the
  \tsplit{}$(\cdot,\cdot,\true)$ function must be working on a
  \exploded{} tree. As a result, the total work to traverse and split all non-\compressednode{s} can be asymptotically bounded by the split work considering if both $T_p$ and $T_d$ are fully expanded. This cost can be computed by \cref{lem:PAMsplit}, which is $O\left(m\log \frac{n}{m}\right)$.
}

  First of all, the total work to traverse and split all
  non-\compressednode{s} can be asymptotically bounded by the split
  work when both $T_p$ and $T_d$ are considered to be fully \exploded{}. This
  cost is $O\left(m\log \frac{n}{m}\right)$ from the result for
  \ptree{s}~\cite{blelloch2016just}.


  We then consider all work done by \tsplit{} functions on
  \compressednode{s}. The only extra cost is the cost of \unfold{}.  
  Every node in $T_p$ will be used at most once to \tsplit{} $T_d$, which involves at most one \unfold{} function with cost $O(\B)$.
  There can be at most $O(m)$ nodes in $T_p$ used to split $T_d$. Thus the total unfold work in \tsplit{} is $O(mB)$.

  Therefore in total the split work is $O\left(m\log \frac{n}{m}+m\B\right)$.
%
\end{proof}

We can now prove \cref{thm:unionwork}.

\begin{proof} \textit{(\cref{thm:unionwork})}
Combining \cref{lem:splitwork,lem:exposework,lem:joinlarger} proves the work bound in \cref{thm:unionwork}.
For the span, note that the algorithms need $O(\log |T_p|/B)$ rounds to reach a \blockednode{},
where the \blockednode{} will be \exploded{}, taking $O(\log B)$ span. Then the algorithm keeps
recursing until a \nil{} node is reached, which takes $O(\log B)$ rounds. In each of the recursive calls,
we need $O(\log |T_d|)$ span to deal with \tsplit{} and \join{}. In total the span is $O(\log m\log n)$.
\end{proof}

\ifx\confversion\undefined
\subsection{Set Algorithms with Better Work Bound}\label{sec:setproof}

Note that the $O(m\B)$ term can be expensive when $m$ is large.
In fact, we can show a tighter bound using a more efficient (but more
complicated) base case, which we present next. We note that in our
implementation, we use the version in \cref{fig:mainalgo}, which has
good performance in practice. The main result in this section is the
theorem below, based on the algorithm shown in \cref{fig:setcode} as
\texttt{union\_}.

\begin{restatable}{theorem}{setefficient}
\label{thm:unionworkbetter}
There exist algorithms for \union{}, \intersection{} and \difference{}
  on two \ourtree{s} of sizes $m$ and $n$ ($n\ge m$) with work $\displaystyle O\left( m\log \frac{n}{m} + \min(n,m\B)\right)$ and span $O(\log n \log m)$.
\end{restatable}

\myparagraph{Algorithm.}
The general idea is to avoid folding and unfolding \simple{} trees
during the \union{} algorithm. In particular, we hope each
\blockednode{} is folded and unfolded $O(1)$ times during the entire
\union{} algorithm. To ensure this, we implemented a special base case for
the set algorithms. We show the code for \union{} in
\cref{fig:setcode}, the other two are similar. The base case will
explicitly determine if the current input is a \blockednode{}. If any
of them is, it will be \exploded{} directly. In the subsequent
\join{}, we will pass an extra parameter to indicate that the tree is
already \exploded{}, and thus there is no need to fold or unfold them
again in this \join{} algorithm. This parameter is the last parameter
of the \join{} algorithm in \cref{fig:codeappjoin}. It is set to
\false{} by default, which makes it exactly the same as the version in
\cref{fig:mainalgo}. When it is set to \true{}, the \join{} algorithm
will never \fold{} any node. Instead, at the end of the base case, the
entire result tree will be fixed using \refold{} (Line
\ref{union:refold}), which traverses the tree and folds any subtree of
size $B$ to $2B$ back to blocks.

\myparagraph{Theoretical Cost.}
As the span bound is not affected, we will only show the new proof for
the work here. First, note that total work for each of these
algorithms can be considered as several parts:

\begin{enumerate}[label=(\arabic*).]
  \item \label{item:unfold} all work done by \unfold{} operations in the base cases, including possibly those in \tsplit{} and \join{} (or \joinTwo{}) function calls, denoted as \emp{unfold work},
  \item all work done by \refold{} operations, denoted as \emp{refold work},
  \item all work done by \tsplit{} operations except for the \unfold{} work in base cases (already charged in \ref{item:unfold}, denoted as \emp{split work},
  \item all work done by \join{} or \joinTwo{} operations, denoted as \emp{join work}.
\end{enumerate}

We note that all the rest of the cost can be asymptotically bounded by the above four categories of work.

One observation is that the split work and unfold work are identical
among the three set algorithms. This is because the three algorithms
behave the same on the way down the recursion when doing \tsplit{}s,
and algorithms only differ in what they do at the base case and on the
way up the recursion when they join back.

We start with some notation.  We follow some the notation used in \cite{blelloch2016just}.
Throughout the section, we use \op{} to denote the algorithm or function call on \union{}, \intersection{} or \difference{}, and use \opbase{} to denote the corresponding base case algorithm or function call (using the unfolded version).
In these three algorithms, the first tree ($T_1$) is split by the keys in the second tree ($T_2$).  We
call $T_1$ the \emp{decomposed tree} and $T_2$ the
\emp{pivot tree}, denoted as $T_d$ and $T_p$ respectively.  The returned
tree of the algorithms is denoted as $T_r$.
We use $m=\min(|T_p|, |T_d|)$ and
$n=\max(|T_p|,|T_d|)$.
We denote the subtree rooted at $v \in T_p$ as $T_p(v)$, and the tree
of keys from $T_d$ that $v$ is operated with as $T_d(v)$ (i.e.,
\op$(v,T_d(v))$ or \opbase$(v,T_d(v))$ is called at some point in the algorithm.
This essentially means that $v$'s subtree in $T_p$ is processed with the tree $T_d(v)$ in a recursive call.
Note that $T_d(v)$ may not be a subtree in $T_d$, but is a tree of keys as a subset of $T_d$. We call such $T_d(v)$ a \emp{\subsettree{}} of the \decomposedtree{} $T_d$. We say $v\in T_p$ \emp{\dealswith} $T_d(v)$.
For $v \in T_p$, we refer to $|T_d(v)|$ as its \emp{splitting size}.

\begin{lemma}
The refold work can be asymptotically bounded by the unfold work.
\end{lemma}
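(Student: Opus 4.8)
The plan is to prove the lemma by a charging argument: every node processed by a \refold{} call is paid for by $\Theta(1)$ units of unfold work done by the \opbase{} call that produced the tree being refolded. I would split this into two steps. \textbf{Step 1: the cost of one \refold{} call.} I would first show that a single invocation \refold{}$(T_r)$ costs $O(M)$, where $M$ is the number of \emph{marked} (i.e., newly created by the base case) nodes of $T_r$. This holds because \refold{} returns in $O(1)$ work as soon as it reaches an unmarked subtree, it descends only through marked nodes, and whenever it reaches a marked subtree of weight between $B$ and $2B$ it folds it in $O(B)$ work — but these folded regions are pairwise disjoint and contained in the marked part, so their total size is $O(M)$. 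Finally, the \join{}s that \refold{} performs on the way back up are between subtrees that are balanced with one another (weight balance is preserved by \opbase{}), so by \cref{lem:joincost} each costs $O(1)$ when both operands are \composite{} and $O(B)$ when both are \simple{}; since there are only $O(M/B)$ joins of the latter kind (one per folded leaf region), the total join cost is also $O(M)$.

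\textbf{Step 2: bounding $M$ by the unfold work.} The marked nodes of $T_r$ are exactly those created inside \opbase{}: the regular nodes produced by \unfold{}, and the path/spine nodes produced by the \tsplit{}s and \join{}s that \opbase{} runs internally. Since \unfold{}ing a block of weight $w$ creates $\Theta(w)$ nodes in $\Theta(w)$ work, the first group has size $\Theta(u)$, where $u$ is the unfold work charged to this \opbase{} call. For the second group I would case on which of the two operands is the \blockednode{}. If both are \blockednode{s}, the entire computation touches $O(B)$ entries, so $M = O(B) = O(u)$. If the pivot side is \composite{} and the decomposed side is a \blockednode{}, then \opbase{} recurses through the whole pivot structure and, since the decomposed side is exploded, every key of $T_r$ originates from a block that was unfolded; here the whole of $T_r$ is marked and has size $\Theta(u)$, so again $M = O(u)$. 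The remaining case — the pivot side is a \blockednode{} (at most $2B$ pivot keys) while the decomposed side is a large \composite{} tree — is where the $O(1)$-per-pivot-node \tsplit{}/\join{} bookkeeping must be shown not to exceed $u$. I would argue that, once the decomposed tree has been cut by the first pivot key, the \composite{} remainder hanging off each resulting piece is never re-traversed, so all later \tsplit{}s in this \opbase{} invocation operate within already-\exploded{} regions, and their path-copy cost is charged to the \unfold{}s of the (at most $O(B)$) blocks containing those pivot keys; combined with \cref{lem:joinlarger} for the join work, this gives $M = O(u)$.

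The main obstacle is exactly this last case. Making precise why the repeated \tsplit{}s in \opbase{} against a large \composite{} decomposed tree cannot path-copy more than the unfold work requires a careful invariant on the intermediate trees passed through \opbase{}'s recursion — essentially that each piece handed to a recursive call is a \join{} of an untouched \composite{} remainder with a small \exploded{} region, and that the \composite{} remainder is only ever traversed along a single spine of length $O(\log n)$ total per \opbase{} call rather than once per pivot key. I expect this to be the delicate part of the proof (and it is consistent with, and may implicitly rely on, the regime $B = \Omega(\log n)$ used in the corollary); everything else in Steps 1--2 is a routine accounting on the \refold{} recursion and on the node counts produced by \unfold{}.
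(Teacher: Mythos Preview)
Your proposal is considerably more elaborate than the paper's own argument, which consists of two sentences: the paper simply observes that a subtree is refolded only if some part of it was previously unfolded, asserts that such a \refold{} costs $O(B)$, and charges it to ``the corresponding \unfold{}''. There is no decomposition into marked-node counts, no case analysis on which operand is flat, and no discussion of path-copied internal nodes; the traversal cost of \refold{} through marked regular nodes is not mentioned at all.

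Your two-step plan (refold work $=O(M)$, then $M=O(u)$) is a legitimate attempt to make that terse argument precise, and you have put your finger on exactly the place where precision is nontrivial: in your third case, the marked nodes include not only the $\Theta(u)$ nodes produced by \unfold{} but also the internal nodes path-copied by the \tsplit{}s and \join{}s inside \opbase{}. Those latter nodes are bounded by the \emph{split work}, not by the \emph{unfold work}, and the split work carries an $O\!\left(m\log\frac{n}{m}\right)$ term that the unfold work does not. So proving $M=O(u)$ locally, per \opbase{} call, is asking for more than the paper's final theorem actually needs. The paper, by contrast, appears to be (implicitly) charging only the \fold{} operations inside \refold{}---each on a subtree containing some unfolded material, each costing $O(B)$---to the \unfold{}s, and silently absorbing the traversal/join cost through the marked internal nodes into the split-plus-join accounting proved separately.

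In short: your approach is sound in spirit but is trying to prove the lemma pointwise (refold work of one base case $\le$ its unfold work), which is stronger than what the paper asserts or needs, and this is precisely the origin of the ``delicate'' case you flagged. A cleaner route---closer to what the paper seems to intend---is to bound the \fold{} cost in \refold{} by the unfold work (one fold per previously-unfolded block, $O(B)$ each) and to bound the traversal and \join{} cost in \refold{} by the number of marked regular nodes, which is already covered by the split/join work bound. That sidesteps your third case entirely.
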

\begin{proof}
  We note that during the process of tracking down the tree, we will
  refold the subtree if and only at least a subset of it was
  previously unfolded at some point in this algorithm. Since \refold{}
  costs $O(\B)$ work, it can be asymptotically bounded by the
  corresponding \unfold{} function invoked previously.
\end{proof}

\begin{theorem}
\label{thm:appjoinlarger}
For each function call to \op{} on
trees $T_p(v)$ and $T_d(v)$, the work done the \join{} (or \joinTwo{})
is asymptotically bounded by the work done by \tsplit{}.
\end{theorem}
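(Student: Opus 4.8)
The plan is to follow the proof of \cref{lem:joinlarger} essentially verbatim for the non-base recursion, and add the bookkeeping needed to handle the \opbase{} subroutine. Fix a single call $\op(T_d(v), T_p(v))$ taken in the non-base branch of \op{}, let $R$ be its returned tree, and let $r_1 \ge r_2$ be the sizes of the two trees combined by the final \join{} (or \joinTwo{}); these are exactly the sizes of the two recursive results, so $r_1 + r_2 \le |R|$. By \cref{lem:joincost} the cost of that \join{}/\joinTwo{} is $O\!\left(\B + \log r_1\right) = O\!\left(\B + \log|R|\right)$ in every case (both \composite{}, both \simple{}, or mixed). So it suffices to show that $\B + \log|R|$ is, up to a constant, at most the \tsplit{} work at this call; by \cref{thm:splitcost} that work is $\Theta\!\left(\B + \log\frac{|T_d(v)|}{\B}\right)$ when $T_d(v)$ is \composite{}, which dominates $\B + \log|T_d(v)|$.

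First I would handle \intersection{} and \difference{}: there $R \subseteq T_d(v)$, so $|R| \le |T_d(v)|$ and the bound is immediate. For \union{}, if $|T_p(v)| \le |T_d(v)|$ then $|R| \le 2|T_d(v)|$, so again $\log|R| = O(\log|T_d(v)|)$. The remaining case, $|T_p(v)| > |T_d(v)|$, is where a crude size bound on $R$ overshoots and is dispatched exactly as in \cref{lem:joinlarger}: by weight balance the two children of $T_p(v)$ have sizes within a constant factor of $|T_p(v)|$, each recursive call adds to one of them fewer than $|T_p(v)|$ keys from $T_d(v)$, hence $r_1$ and $r_2$ are within a constant factor of one another; therefore $\log(r_1/r_2) = O(1)$, and \cref{lem:joincost} gives that the \join{} costs only $O(\B)$, absorbed by the $\Omega(\B)$ \tsplit{} cost.

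The genuinely new ingredient is the base-case subroutine \opbase{}. Every \tsplit{} and \join{} invoked inside \opbase{} is passed the \texttt{expand} flag, so \node{} never \fold{}s; together with the one-time \unfold{}s performed along the way --- whose cost is charged to the separate ``unfold work'' category and is therefore not counted as split or join work --- this makes these \tsplit{} and \join{} operations behave as the ordinary \pam{} primitives on weight-balanced trees with single-entry leaves. For those, the same two-case analysis as above, now with the $\B$ terms absent, shows the per-call \join{} work $O(\log(r_1/r_2))$ is bounded by the per-call \tsplit{} work $\Theta(\log|T_d(v)|)$: $|R| \le |T_d(v)|$ for \intersection{}/\difference{}, and for \union{} the constant-factor balance argument again gives $\log(r_1/r_2) = O(1)$. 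This is the per-node version of the \pam{} analysis of \cite{blelloch2016just}.

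The main obstacle, exactly as in \cref{lem:joinlarger}, is the \union{} case $|T_p(v)| > |T_d(v)|$: showing cleanly that the two recursive results stay within a constant factor of each other despite the up-to-$|T_p(v)|$ keys redistributed between them, and then invoking \cref{lem:joincost} to conclude the resulting \join{} is only $O(\B)$ (resp.\ $O(1)$ once everything is \exploded{}). The one extra subtlety beyond \cref{lem:joinlarger} is the accounting tied to the \texttt{expand} flag: one must verify that with the flag set the final \join{} truly never triggers a \fold{} (so that its cost is the $\B$-free \pam{} bound, not $\Theta(\B)$), and that every \unfold{} it or the corresponding \tsplit{} performs is attributed to the unfold-work category analyzed separately, so it is neither double counted here nor dropped from the overall bound.
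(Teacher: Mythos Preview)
Your proposal is correct and follows essentially the same route as the paper: the paper's proof of \cref{thm:appjoinlarger} is in fact a near-verbatim copy of the proof of \cref{lem:joinlarger}, carrying out exactly the size-comparison argument you sketch (with the three-way sub-case split on whether the two recursive results are \composite{} or \simple{} in the \union{} case $|P|>|D|$). Your additional treatment of the \opbase{} subroutine---observing that with the \texttt{expand} flag set all \tsplit{}/\join{} calls reduce to the ordinary \pam{} primitives on fully \exploded{} trees, so the $\B$ terms drop and the \cite{blelloch2016just} per-node bound applies---is not spelled out in the paper's proof but is needed for the subsequent claim that total join work is dominated by total split work; so you are being more careful than the paper here, not different from it.
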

\begin{proof}
In the following, we use $P$ and $D$ to denote
$T_p(v)$ and $T_d(v)$, respectively, for simplicity. Assume the return value is $R$. First of all, the work of \tsplit{} is $\Theta(\log |D|+\B)$.

For \intersection{} or \difference{}, the work of \join{} (or \joinTwo{})
is $O(\log|R|+\B)$.
Notice that \difference{} returns the keys in $D\backslash P$. 
Thus for both \intersection{} and \difference{} we have $R\subseteq D$. Therefore $|R|\le |D|$, which means the work done by \join{} or \joinTwo{} is no more than the work done by \tsplit{}.

For \union{}, first of all, we always call \join{} instead of \joinTwo{}. If $|P|\le |D|$, then $|R|\le 2|D|$.
\join{} costs work $O(\log |R|+B)=O(\log |P|+B)$, which is no more than $\Theta(\log |D|+\B)$.

Consider $|P|>|D|$. The subtrees $\lc(P)$ and $\rc(P)$,
which are used in the recursive calls, have size at least $\alpha|P|$ and at most $(1-\alpha)|P|$.
After combining with a subset of elements in $D$ (which has size smaller than $|P|$), the return value of each recursive call should have size at least $\alpha|P|$ and $(2-\alpha)|P|$. Denote these two trees from recursive calls as $t_l$ and $t_r$, respectively.
Note that $\alpha$ is a constant, so the difference of size between $t_l$ and $t_r$ is also a constant.
WLOG assume $|t_l|\ge |t_r|$. In the following, we discuss different cases of whether $t_l$ and $t_r$ are \composite{} or \simple{} trees.
We will show that, in all cases, joining $|t_l|$ and $|t_r|$ has work $O(\log |D|+\B)$.

\begin{enumerate}
  \item When both $t_l$ and $t_r$ are \simple{} trees. From \cref{lem:joincost}, \join{} costs $O(\B)$ work.
  \item When both $t_l$ and $t_r$ are \composite{} trees. From \cref{lem:joincost}, \join{} costs $O(\log \frac{|r_l|}{|t_r|})=O(1)$ work.
  \item When $t_l$ is a \composite{} tree, but $t_r$ is a \simple{} tree. This means that $|t_l|>\B$ and $|t_r|\le \B$.
  From \cref{lem:joincost}, \join{} costs $O\left(\B + \log\frac{|t_l|}{\B}\right)$ work. Note that, as stated above, $|t_r|\ge \alpha|P|$. Considering $|t_r|\le B$, we know that $|P|=O(B)$, which also indicates $|t_l|=O(\B)$. Plug this into the work of \join{} $O\left(\B + \log\frac{|t_l|}{\B}\right)$, we can get the bound $O(\B)$.
\end{enumerate}

In summary, in all cases, the work of \join{} or \joinTwo{} is asymptotically bounded by the corresponding \tsplit{} function.
\end{proof}

From the above two lemmas, we have that the total work of \op{} is
asymptotically bounded by the split work and unfold work. Next, we
prove the bounds for split work and unfold work, respectively.

\begin{lemma}
The unfold work is $O(\min(m\B, n))$.
\end{lemma}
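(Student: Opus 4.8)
The plan is to follow the same two‑pronged accounting used in the proof of \cref{lem:exposework}: bound the unfold work by $O(m\B)$ and by $O(n)$ separately, and then take the minimum. In both arguments the unit of work is a single call to \unfold{}, each costing $\Theta(\B)$, so it suffices to bound (in the appropriate weighted sense) the number of such calls.

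For the $O(m\B)$ bound I would reuse the standard fact from the join‑based analysis (also invoked in \cref{lem:exposework}, from \cite{blelloch2016just}) that only $O(m)$ nodes of the pivot tree $T_p$ are ever used to \tsplit{} the decomposed tree $T_d$. Each such \tsplit{} follows a single root‑to‑leaf path through a \subsettree{} of $T_d$ and therefore reaches at most one \blockednode{} — the one whose key range contains the splitting key — so it triggers at most one call to \unfold{}. In addition, each \blockednode{} on the $T_p$ side is unfolded at most once, when it is first handled by \expose{}, and there are only $O(m/\B)$ of them; and each \opbase{} invocation unfolds its flat inputs a constant number of times. Charging every call to \unfold{} to a distinct one of the $O(m)$ active pivot nodes, with each such node charged $O(1)$ times at $O(\B)$ work apiece, yields $O(m\B)$.

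For the $O(n)$ bound I would argue that every \block{} of $T_p$ or $T_d$ is unfolded only $O(1)$ times over the whole execution, so the total unfold work is at most a constant times the total size of all blocks, which is $O(|T_p|+|T_d|)=O(n)$. The structural reason is the design of the optimized base case: once \opbase{} unfolds a \blockednode{}, every \tsplit{} and \join{} it subsequently invokes is passed the \texttt{expand} flag, which disables \fold{} inside \node{}; hence the expanded entries are not recollected into a block until the single \refold{} applied to the tree that \opbase{} returns. Combined with the fact that the \subsettree{}s handed to distinct base‑case invocations are disjoint — they partition the keys of $T_d$, while the corresponding pieces of $T_p$ partition $T_p$ — no block can be expanded more than a constant number of times.

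The step I expect to be the main obstacle is the $O(n)$ bound, specifically the claim that no fold/unfold cycle can occur between an unfolding and its matching \refold{} inside a base‑case subtree. Establishing this requires tracing the \texttt{expand} argument through \join{}, \myfunc{join\_right}, \node{}, and \tsplit{} in \cref{fig:codeappjoin}, and checking that it is threaded through every call site at which \opbase{} could otherwise produce a node of size $B$ to $2B$; if even one site dropped the flag, a block could be re‑folded and later unfolded again, and the unfold work could pick up an extra $\Theta(\B)$ or $\Theta(\log)$ factor, breaking the telescoping to $O(n)$.
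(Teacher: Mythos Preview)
Your proposal is correct and follows essentially the same approach as the paper: bound the unfold work by $O(m\B)$ using the fact that at most $O(m)$ pivot nodes are ever active, and by $O(n)$ using the fact that the optimized base case ensures each \block{} is unfolded only once. The paper's proof differs only in presentation---it splits into the two cases $|T_p|=m$ versus $|T_d|=m$ and accounts for the unfolds on $T_p$ and $T_d$ separately in each case---whereas you handle both sides uniformly; your identification of the \texttt{expand}-flag threading as the crux of the $O(n)$ bound is exactly the mechanism the paper relies on (and states more tersely).
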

\begin{proof}
  First, note that in our unfolded version of base cases, any \block{}
  needs to be unfold at most once. Each time the algorithm hits a
  \blockednode{s}, it unfold the entire subtree in $O(B)$ time.

  If $|T_p|=m$, we note that there are at most $m/\B$ \blockednode{s}
  in $T_p$ that needs to be unfolded, so the total work to unfold
  $T_p$ is $O(m)$. Each of the $O(m)$ entries in $T_p$ will cause an
  \unfold{} on at most one \block{} in $T_d$. Therefore, the total
  work to unfold $T_d$ is $O(m\B)$. On the other hand, note that each
  \block{} in $T_d$ can be unfolded at most once, which also means
  that the work of unfolding $T_d$ is $O(n)$. In summary, the work of
  all \unfold{} functions is $O(\min(m\B, n))$.

  If $|T_d|=m$ and $|T_p|=n$, there will be at most $O(m)$ nodes in
  $T_p$ used to \dealwith{} a \subsettree{} in $T_d$. Since $|T_d|=m$,
  the total work of unfolding $T_d$ is at most $O(m)$. Based on the
  same argument as above, the total work of unfolding $T_p$ is
  $O(m\B)$ because at most $O(m)$ \unfold{} functions are invoked, and
  is also $O(n)$ because there are at most $O(n)$ entries in $T_p$.
  The total work is also $O(\min(m\B, n))$.
\end{proof}

\begin{lemma}[Split work on \exploded{} trees]
\label{lem:appPAMsplit}
The total split work done on two \exploded{} weight-balanced trees of sizes $n$ and $m\le n$ is $O\left(m\log \frac{n}{m}\right)$.
\end{lemma}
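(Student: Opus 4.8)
The plan is to reduce the statement to the split-work analysis for P-trees carried out in \cite{blelloch2016just}, since an \exploded{} \ourtree{} is simply a weight-balanced BST and, restricted to \exploded{} inputs, the \union{}, \intersection{}, and \difference{} algorithms are precisely the join-based algorithms analyzed there. Concretely, on \exploded{} trees no \unfold{} or \fold{} is ever triggered (there are no \blockednode{s}, and in the base case \join{} and \tsplit{} run with the ``already expanded'' flag), so every \tsplit{} call is an ordinary split on a weight-balanced tree, which costs $O(\log(s+1))$ work on a tree of size $s$ — the downward search is logarithmic and the \join{}s performed on the way back up telescope, exactly as in \cite{blelloch2016just}. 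Hence the total split work equals $\sum_{v\in T_p}O(\log(1+|T_d(v)|))$, where $T_p$ is the pivot tree and $T_d(v)$ is the \subsettree{} of $T_d$ that $v$ \dealswith{}.

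First I would bound this sum by grouping the nodes of $T_p$ by depth. At a fixed depth $d$ the trees $\{T_d(v): v \text{ at depth } d\}$ are pairwise disjoint subsets of $T_d$, so their sizes sum to at most $|T_d|\le n$; writing $k_d$ for the number of nodes of $T_p$ at depth $d$, concavity of $x\mapsto\log(1+x)$ gives $\sum_{v\text{ at depth }d}\log(1+|T_d(v)|)\le k_d\log(1+n/k_d)$. It then remains to bound $\sum_d k_d\log(1+n/k_d)$. Taking WLOG $|T_p|=m$ (when the pivot is the larger tree almost every $T_d(v)$ is empty, contributing $0$, and a symmetric accounting applies), the weight balance of $T_p$ gives height $O(\log m)$, $k_d\le 2^d$, and $\sum_d k_d=m$. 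For the depths with $2^d<m$ this is a geometric-type sum dominated by its last term, contributing $O(m\log\frac nm + m)$; for the remaining, deeper levels one uses the weight balance of $T_p$ more carefully (the subtrees hanging below a given depth shrink, so the $k_d$ eventually decay), which again contributes $O(m\log\frac nm + m)$. Folding the additive $O(m)$ into $O(m\log\frac nm)$ — recall $\log x$ here abbreviates $\log_2(x+1)$ — yields the claim.

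I expect the delicate step to be exactly the accounting over the deep levels of $T_p$: the bounds $k_d\le 2^d$ and $\sum_d k_d=m$ by themselves are \emph{not} sufficient, since an abstract sequence meeting them could plateau at $\Theta(m/\log m)$ nodes across $\Theta(\log m)$ consecutive deep levels, which would cost $\Theta(m\log\log m)$. Ruling this out genuinely requires the weight balance of $T_p$ (a subtree of size $\Theta(\log m)$ has height only $O(\log\log m)$, so it cannot extend through $\Theta(\log m)$ further levels), and this is precisely the point handled in \cite{blelloch2016just}. Since the \exploded{} algorithms are identical to the P-tree algorithms, in the paper I would simply invoke that analysis; the sketch above records why the bound holds.
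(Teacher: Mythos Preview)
Your proposal is correct and takes essentially the same approach as the paper: the paper's proof is a single sentence, ``This directly follows \cite{blelloch2016just},'' and you likewise reduce to that prior analysis after observing that on \exploded{} trees the algorithms coincide with the P-tree algorithms. Your additional sketch of the layer-by-layer accounting (including the correct observation that weight balance, not just $k_d\le 2^d$ and $\sum_d k_d=m$, is needed for the deep levels) is accurate and goes beyond what the paper records, but the underlying route is the same.
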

This directly follows \cite{blelloch2016just}.

\begin{lemma}
The total \tsplit{} work is $O\left(m\log \frac{n}{m}+\min(m\B,n)\right)$.
\end{lemma}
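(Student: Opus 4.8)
The plan is to follow the structure of the proof of \cref{lem:splitwork} (the analogous statement for the simpler algorithm) while sharpening the analysis of the \unfold{} work that \tsplit{} triggers. I would first decompose the total \tsplit{} work into two parts: (i) the \emph{structural} cost of traversing and splitting nodes that are not \compressednode{s} --- the \internalnode{s} of \composite{} trees visited by the \tsplit{}s inside \texttt{union\_}, together with \emph{all} nodes visited by the \tsplit{}s inside \texttt{union\_base} (which run on already-\exploded{} trees); and (ii) the \emph{block} cost, i.e., the \unfold{} operations (plus the incidental $O(1)$ \fold{} operations) performed inside the \tsplit{}s of \texttt{union\_}. The \tsplit{}s inside \texttt{union\_base} contribute nothing to part (ii): \texttt{union\_base} runs \unfold{} on its two inputs once upon entry, so from then on its trees contain no \blockednode{s}, and the extra \texttt{true} argument passed to \tsplit{} suppresses any \fold{} on the way back up.

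For part (i), I would observe that the multiset of \tsplit{} calls performed by the algorithm is, key-for-key, the same as the multiset performed by the corresponding algorithm run on the fully-\exploded{} trees $\exp(T_p)$ and $\exp(T_d)$ --- expanding a \block{} changes neither the stored keys nor which pivot splits which subset of $T_d$. Each structural step taken while splitting a \composite{} tree maps to a node traversal in the fully-\exploded{} run (indeed a \composite{} tree takes strictly fewer such steps, as the traversal skips the contents of \block{s}), and each step taken inside \texttt{union\_base} is literally such a traversal. Hence part (i) is asymptotically bounded by the split work on two \exploded{} weight-balanced trees of sizes $n$ and $m$, which is $O\!\left(m\log\frac nm\right)$ by \cref{lem:appPAMsplit}.

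For part (ii), a single \tsplit{} call inside \texttt{union\_} follows one pivot key's search path down $T_d$, so it runs \unfold{} on at most one \block{} (the one at the path's end) and performs $O(1)$ \fold{}s on subtrees near it, for $O(\B)$ work per call. I would then count these calls in two ways. At most $O(m)$ of them act on a non-empty subset tree of $T_d$, giving the $O(m\B)$ bound. On the other hand, when the pivot tree is \composite{} the \texttt{union\_} recursion tree has the same shape as the \internalnode{} skeleton of the pivot tree, so there are only $O(|T_p|/B) = O(n/B)$ such calls; at $O(\B)$ apiece this gives the $O(n)$ bound. (When the pivot tree is instead a single \blockednode{} or a \simple{} tree, \texttt{union\_}'s base case fires at once, or there are no \block{s} to decode, so neither case is a problem.) Hence part (ii) is $O(\min(m\B, n))$, and summing the two parts yields $O\!\left(m\log\frac nm + \min(m\B, n)\right)$.

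The step I expect to be the main obstacle is part (ii): ruling out that the interaction between an \unfold{} inside a \tsplit{} and the \fold{}s performed on the way back up lets a single \block{} be decoded many times over the course of the recursion, which would break the $O(n)$ side of the bound. I would handle this not with a per-\block{} argument but by charging each decode to a distinct \tsplit{} call and bounding the number of \tsplit{} calls through the shape of the \texttt{union\_} recursion, as above; the base-case design --- \unfold{} once, work on the \exploded{} tree, and \refold{} exactly once at the very end --- is precisely what ensures that \tsplit{} calls outside \texttt{union\_base} are the only source of decoding.
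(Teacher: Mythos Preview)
Your decomposition and both halves of the argument match the paper's proof: part (i) reduces the structural cost to \cref{lem:appPAMsplit} on the fully-\exploded{} trees, and part (ii) bounds each non-base-case \tsplit{} by $O(B)$ and counts the calls two ways (by the internal skeleton of $T_p$, and by $|T_d|$) to get $\min(mB,n)$---the paper carries out the same two counts inside a case analysis on which of $T_p,T_d$ is larger and whether $n/B\gtrless m$.

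One caveat on your $O(m)$ count: the number of non-base-case \tsplit{} calls acting on a \emph{non-empty} $T_d(v)$ can actually exceed $m$ (take $|T_d|=1$; the recursion still walks a path of length $\Theta(\log(n/B))$ through $T_p$'s internal skeleton before hitting a \blockednode{}), so that claim as written is false. What is true is that only those calls with $T_d(v)$ large enough to contain a \blockednode{} incur the $\Theta(B)$ unfold charge, and \emph{those} are at most $O(m)$---the paper's version of this step (``at most $O(m)$ such \tsplit{} function calls, since there are only $m$ nodes in $T_d$'') is equally terse on this point, so your treatment is at the same level of detail as the paper's own proof.
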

\begin{proof}
  The total work for \tsplit{} functions can be viewed as two parts:
  the total work to done by \tsplit{} functions to traverse and split
  non-\blockednode{s}, and the work to expose and split the
  \blockednode{s}. Note that here ``non-\blockednode{s}''
  include both \internalnode{s} in \composite{} trees, and all the
  nodes in \exploded{} trees.

  We first note that in the base cases, the
  \tsplit{}$(\cdot,\cdot,\true)$ function must be working on a
  \exploded{} tree. As a result, the total work to traverse and split all non-\blockednode{s} can be asymptotically bounded by the split work considering if both $T_p$ and $T_d$ are fully expanded. This cost can be computed by \cref{lem:appPAMsplit}, which is $O\left(m\log \frac{n}{m}\right)$.

  We then consider all work done by the \tsplit{} functions on \blockednode{s} in the non-base cases. We will show it is $O(\min(mB,n))$.
  Note that in each \tsplit{}, this happens at most once, costing $O(B)$ work.
  If $|T_p|=m$, there can be $O(m/\B)$ such \tsplit{} function calls in the non-base cases, and thus the total non-base case split work on \blockednode{s} is $O(m)$.

  If $|T_p|=n$, we discuss in two cases. If $n/\B\le m$, there are $O(n/\B)$ \internalnode{s} in $T_p$, and thus there can be at most $O(n/\B)$ \tsplit{} calls. Therefore the total work in this case is $O(n)$, which is also $O(\min(mB,n))$. If $n/\B > m$, there are at most $O(m)$ such \tsplit{} function calls, since there are only $m$ nodes in $T_d$. In this case, the total work of this part is $O(m\B)$, which is also $O(\min(mB,n))$.
\end{proof}
%

\else
Note that the $O(m\B)$ term can be expensive when $m$ is large.
In fact, we can show a tighter bound using a more efficient (but more
complicated) base case. We show the bound in
\cref{thm:unionworkbetter}, and defer the algorithm and proof to the
full version. In our implementation, we use the version in
\cref{fig:mainalgo}, which has good performance in practice.

\begin{restatable}{theorem}{setefficient}
\label{thm:unionworkbetter}
There exist algorithms for \union{}, \intersection{} and \difference{}
  on two \ourtree{s} of sizes $m$ and $n$ ($n\ge m$) with work $\displaystyle O\left( m\log \frac{n}{m} + \min(n,m\B)\right)$ and span $O(\log n \log m)$.
\end{restatable}
\fi

\ifx\confversion\undefined
\subsection{Non-parallelizable Encoding Schemes}
\label{app:setspan}

As mentioned at the beginning of this section, if the encoding scheme
is not parallelizable, when we deal with a \blockednode{}, we have to
deal with it sequentially and this can affect the span bound of our
algorithms. Again we will use the set algorithms as examples.

\begin{theorem}
\label{thm:unionspan}
Consider the \union{} algorithm (and similar \intersection{} and \difference{} algorithms) in \cref{fig:mainalgo} on two \ourtree{s} of sizes $m$ and $n\ge m$ using encoding scheme $\mathcal{C}$.
When $\mathcal{C}$ takes $O(B)$ work and span to compress and decompress a block of size $B$, the span for these set algorithms is $O(B+\log n'(B+ \log m'/B))$, where $n'$ is the pivot tree size, and $m'$ is the decomposed tree size.
\end{theorem}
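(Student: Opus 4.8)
The plan is to re-run the span analysis from the proof of \cref{thm:unionwork} essentially unchanged, modifying only the per-level accounting so that every \unfold{} (decoding) and every \fold{} or block redistribution (encoding) contributes $O(B)$ to the span instead of $O(\log B)$. The starting point is the observation that the \emph{shape} of the computation DAG is independent of the encoding scheme $\mathcal{C}$: the \union{} recursion still descends one node of the pivot tree $T_p$ per recursive step (Line~\ref{line:mainunionexpose}), still splits the decomposed tree $T_d$ by the exposed key (Line~\ref{line:mainunionsplit}), and still makes its two recursive calls in parallel. Since an \exploded{} \ourtree{} of size $n'$ has height $O(\log(n'/B)+\log B)=O(\log n')$ --- a weight-balanced internal tree on $O(n'/B)$ nodes above \leafnode{} blocks of size at most $2B$ --- the recursion has depth $O(\log n')$, so it suffices to bound the span along a single root-to-base path of the recursion.

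Next I would bound the three sources of span at each level of that path. First, \emph{exposing a pivot node}: this is $O(1)$ for a \regularnode{}, and for the unique \blockednode{} met along any path (all blocked leaves of $T_p$ sit at the bottom of its internal tree, so \unfold{} is invoked at most once per path) it is $O(B)$; this single contribution, together with the trivial base cases, is absorbed by the additive $O(B)$ in the statement. Second, \emph{splitting $T_d$}: by (the argument behind) \cref{thm:splitcost}, \tsplit{} descends a single spine and meets at most one \blockednode{}, so under sequential encoding its span becomes $O(B+\log(m'/B))$ rather than $O(\log|T_d|)$ --- the one \unfold{} in \expose{} (Line~\ref{line:splitexpose}) costs $O(B)$; the $O(1)$ \join{}s on the returning spine that touch a \simple{} tree (there is at most one on each side, by \cref{lem:joincost}) each do $O(1)$ \fold{}s or redistributions of cost $O(B)$; and every remaining \join{}, being between two \composite{} trees, contributes only $O(\log n')$. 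Third, \emph{joining back}: the \join{} at Line~\ref{line:mainunionjoin} either connects two \composite{} trees in $O(\log n')$ span or triggers one \fold{}/redistribution of cost $O(B)$, so it is dominated by the split cost. Summing, each level costs $O(B+\log m'/B)$ span; multiplying by the $O(\log n')$ recursion depth and adding the one-time $O(B)$ yields the claimed $O(B+\log n'(B+\log m'/B))$.

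Extending to \intersection{} and \difference{} is immediate: their downward pass (the \expose{}s and \tsplit{}s) is \emph{identical} to that of \union{}, and they differ only in the upward combination, where they may invoke \joinTwo{} in place of \join{}. But \joinTwo{}$(T_L,T_R)$ is a single right-spine descent (\myfunc{splitLast}) followed by one \join{}, so by the same reasoning as for \tsplit{} its span under sequential encoding is again $O(B+\log m')=O(B+\log m'/B)$, which is subsumed; hence the same bound holds for all three set algorithms.

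The main obstacle, and the step that needs the most care, is verifying that within each \tsplit{} (and each \joinTwo{}) only $O(1)$ blocks are decoded or encoded \emph{on the critical path}, so that the $O(B)$ sequential-encoding penalty recurs $O(1)$ times per recursion level rather than $O(\log)$ times; this is exactly where \cref{lem:joincost} is needed, to pin down which \join{}s along the returning spine actually touch a \simple{} tree. A related point worth emphasizing is that, unlike the work analysis (\cref{lem:exposework,lem:splitwork}), where the $O(B)$ cost of unfolding a block of $T_d$ is amortized over the whole algorithm, in the span analysis distinct levels of the \union{} recursion generally split and unfold \emph{different} blocks of $T_d$, so the $O(B)$ term genuinely reappears at each of the $O(\log n')$ levels and cannot be amortized away --- which is why the bound degrades by the additive $B\log n'$ relative to the $O(\log n'\log m')$ span of \cref{thm:unionwork}.
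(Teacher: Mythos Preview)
Your proposal is correct and follows essentially the same approach as the paper: bound the recursion depth by $O(\log n')$ on the pivot tree, charge $O(B+\log m'/B)$ span per level to \tsplit{} and \join{}, and observe that the \unfold{} of a pivot-tree \blockednode{} occurs only once along any root-to-leaf path, giving the additive $O(B)$. Your write-up is considerably more detailed than the paper's (which compresses the argument into a short paragraph), and your explicit treatment of \joinTwo{} for \intersection{}/\difference{} and the ``only $O(1)$ blocks per level'' remark are useful elaborations the paper leaves implicit; the one minor slip is writing $O(\log n')$ for the composite--composite \join{}s inside \tsplit{} (these live in $T_d$, so the bound should be in terms of $m'$), but this does not affect the final per-level bound or the conclusion.
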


\begin{proof}
The algorithms need $O(\log |T_p|/B$ rounds to reach a \blockednode{},
where the \blockednode{} will be \exploded{}, taking $O(B)$ span.
Note that this $O(B)$ additional span is taken only once for each \blockednode{}, and they are all at the leaf level of $T_p$.
As a result, they do not add up and can be charged only once in the span.
Then the algorithm keeps
recursing until a \nil{} node is reached, which is $O(\log B)$ rounds. The total number of rounds of recursive calls is still $O(\log n')$.
In each of the recursive calls,
we need $O(B+\log m'/B)$ span to deal with \tsplit{} and \join{}. In total the span is $O(B+\log n'(B+ \log m'/B))$.
\end{proof}
\fi

\ifx\confversion\undefined
\section{Work Tradeoff between Updates and Queries}

In \ourtree{s}, we flatten and compress subtrees with size no more than $2B$ where $B$ is a predefined parameter.
This approach has several benefits: it is more space efficient, allows
for effective compression, and reduces the memory footprint for updates and queries.
The only disadvantage is that an insert or delete now has work $O(B+\log n)$, while the P-tree only uses $O(\log n)$ work.
We now show an alternative solution for \ourtree{s} which are updated
in-place such that the amortized work for an update is only $O(\log (n/B))$, at a cost of more work for queries (i.e., $O(B+\log n)$).

The basic idea is to leave each leaf node unsorted.
In addition, we can keep a linked list for all entries.
As such, for an update, we simply find the corresponding leaf node.
For an insertion, we add this entry to the end of the linked list of this leaf tree node.
For a deletion, we remove this node from the linked list.
For both cases, we update the counter for leaf node size, and split or merge if necessary.
This also works for inserting or deleting a batch of entries.
The only difference is that it is easier to mark tomb entries for batch deletions and physically delete the entries in the next leaf split or merge, since deleting multiple nodes in a linked list is hard (requiring list ranking), while marking tombs can be done when all entries are kept in an array.
Here we assume we can locate an entry using its unique identifier, so we can map the identifer to the position in the linked lists.
For instance, if the entries are vertices in a graph, then we can use vertex labels as the identifiers for the entries.

This approach increases the lookup cost to $O(B+\log n)$ since now we need to check all entries in a leaf node in the worst case (the work for \texttt{aug\_range} query is still $O(B+\log n)$ since we only need to check two leaf nodes).

An additional change is the leaf node size.
In the previous algorithm, leaf nodes have sizes between $B$ and $2B$.
In this setting, we relax it to be $B$ and $(2+3c)B$ for any constant $c>0$.
For instance, if $c=0.1$, then the leaf node size will be between $B$ and $2.3B$.
This change is needed to amortizing the split and merge work for the updates.
If $c=0$, we can have a tree with all leaf nodes containing $B$ entries.
If we delete any entry, the associated leaf nodes will contain $B-1$ entries, need to merge with a neighbor leaf and end up with having $2B-1$ nodes.
Then if we insert two entries in this leaf, we need to split again with $O(B)$ work.
To avoid this, we set a padded region of size $cB$ on both sides of the range---once resized, the new leaf node contains $(1+c)B$ to $(2+2c)B$ entries.
As such, we need to remove or insert another $cB$ records to trigger the next resizing, so the amortized work is $O(B/cB)=O(1)$ per update.

\begin{theorem}\label{thm:unsorted}
  A batch of $m$ insertions or deletions can be processed using $O(m\log(n/B))$ amortized work if the batch is unsorted, or $O(m\log(n/Bm))$ amortized work if the batch is sorted.
  The span is $O(\log (n/B)\log m+\log B)$ for both cases.
\end{theorem}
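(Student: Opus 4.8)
The plan is to decompose the cost of a batch update into three parts: (i) \emph{routing} each of the $m$ entries to the \leafnode{} it belongs to (and updating the size counters of its ancestors on the way back up); (ii) the $O(1)$ per-entry work of splicing the entry into the target leaf's linked list or, for a deletion, flipping a tombstone bit; and (iii) the work of \emph{resizing} leaves that over- or underflow, together with the rebalancing rotations this induces on the internal weight-balanced tree (which, crucially, are performed in place, not by path copying). For an unsorted batch I would first sort it in $O(m\log m)$ work and $O(\log m)$ span (equivalently, do $m$ independent root-to-leaf searches), so that both cases reduce to operating on a sorted batch; the only difference in the final work bound comes from how cheaply routing can be done once the batch is sorted.

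For routing, since every leaf holds between $B$ and $(2+3c)B$ entries there are $\Theta(n/B)$ internal nodes, and by weight balance the internal tree has height $O(\log(n/B))$. An unsorted batch therefore costs $O(m\log(n/B))$ to route (one search per entry). A sorted batch is routed by the recursive descent used by \multiinsert{} (\cref{fig:otheralgo}): at each internal node we binary-search the sub-batch for the pivot key and recurse in parallel, then on the way up update counters and rebalance. The standard telescoping argument for join-based batch operations on weight-balanced trees~\cite{pam,blelloch2016just}, applied to the internal tree of size $\Theta(n/B)$, bounds this by $O(m + m\log(n/(Bm)+1))$, which is the claimed bound. Part (ii) contributes $O(m)$ in both cases: appending $k$ new entries is done by linking them into a sublist and splicing once (so no list ranking is needed), and a batch deletion merely marks tombstones and decrements the leaf counter, deferring physical removal to the next time that leaf is rebuilt.

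The heart of the argument is part (iii), the amortized accounting of leaf resizes, which I would handle with the accounting method: each of the $m$ element operations deposits $O(1/c)=O(1)$ credit on the leaf it touches. Because a post-resize leaf contains between $(1+c)B$ and $(2+2c)B$ entries, at least $\Omega(cB)$ element operations must hit it before its logical size can reach either the overflow threshold $(2+3c)B$ or the underflow threshold $B$; hence the $\Theta(B)$ work to rebuild the leaf (and, for an underflow, to merge with a small neighbor or borrow from a large one) is fully paid by the accumulated credit. Creating or destroying a leaf inserts or deletes one leaf of the internal tree, costing $O(\log(n/B))$ to splice plus $O(1)$ amortized rotations on a weight-balanced BST; since a batch of $m$ operations triggers only $O(m/B)$ resizes, this adds only $O((m/B)\log(n/B))$, dominated by the routing term. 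Summing the three parts yields $O(m\log(n/B))$ for an unsorted batch and $O(m + m\log(n/(Bm)))$ for a sorted one. For the span, the recursive descent has depth $O(\log(n/B))$, spends $O(\log m)$ per level (a binary search plus a parallel fork), finishes with an $O(\log B)$ parallel rebuild at each leaf, and does $O(\log m)$-per-level rotations on the way up, for a total of $O(\log(n/B)\log m + \log B)$; the initial sort of an unsorted batch adds only $O(\log m)$ span, which is absorbed.

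The main obstacle I anticipate is making part (iii) fully rigorous: one must check that every resize case (a pure split, a merge with a small neighbor, a borrow from a large neighbor) restores the $[B,(2+3c)B]$ size invariant while still guaranteeing $\Omega(B)$ element operations between consecutive rebuilds of the same leaf, and — a subtlety hidden by the tombstone trick — one must add a secondary rebuild trigger so that a leaf whose \emph{logical} size merely oscillates cannot accumulate an unbounded number of tombstones (which would also inflate the $O(B+\log n)$ lookup cost); with such a trigger every rebuild is again preceded by $\Omega(B)$ operations, so the amortized bound is unaffected. A secondary point is confirming that the counter updates and rotations for the $O(m/B)$ resizes interleave with the recursive batch update without increasing the span past $O(\log(n/B)\log m)$, which holds because each resize's effect is confined to a single root-to-leaf path already traversed by the recursion.
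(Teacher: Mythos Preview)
Your proposal is correct and follows essentially the same approach as the paper: recursive descent on the $\Theta(n/B)$-node internal weight-balanced tree, $O(1)$ amortized leaf work via the $(1+c)B$--$(2+2c)B$ post-resize padding, and rebalancing on the way up. The one methodological difference is your handling of the unsorted case: you pre-sort in $O(m\log m)$ work and then invoke the sorted algorithm (so the bound comes out as $O(m\log m)+O(m\log(n/Bm))=O(m\log(n/B))$), whereas the paper keeps the batch unsorted and does a linear $O(|\text{sub-batch}|)$ partition at each of the $O(\log(n/B))$ levels, summing to the same $O(m\log(n/B))$; both routes are valid, and your explicit credit accounting for resizes and your remark about needing a tombstone-fraction trigger (which the paper indeed includes as a ``pack'' step) are more detailed than the paper's presentation.
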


\begin{proof}
  We first assume the update batch is unsorted.
  We use the tree root to partition the batch, which takes $O(m)$ work and $O(\log m)$ span.
  Then we can recursively and in parallel update the left part of the batch and the left subtree, the right part of the batch and the right subtree.
  After they both finish, we join the two trees with the tree root.
  The base case is when the corresponding batch for the subtree is empty, or the tree goes to a leaf node.
  We terminate for the first case.
  For the second case, we update the leaf with work proportional to the update array size (concatenation for insertions, marking tombs for deletions).
  The update may trigger a clean-up for the leaf array (split if the array size is larger than $(2+3c)B$, merge if the size is smaller than $B$, or pack if the tombs occupy over a constant fraction of the array).
  Once a clean-up is triggered, the work is $O(B)$ and the span is $O(\log B)$, and as explained, the work is constant amortized to each previous update to this leaf node.

  Similar to the previous analysis, we can split the work into the split work, the join work, and the base case work.
  The base case work is constant per update.
  The split work is $O(m\log(n/B))$---each tree level will partition the update array and the total cost is $O(m)$ per tree level.
  Since the tree has size $n$ and $O(\log(n/B))$ levels, the total split work is $O(m\log(n/B))$.
  The join work is logarithmic in the subtree size, and at most $m$ leaf nodes are modified, so the overall work for join is $O(m\log(n/Bm))$, bounded by split work.
  Putting all pieces together, the work is $O(m\log(n/B))$ amortized.
  The span is $O(\log (n/B)\log m+\log B)$---$O(\log m)$ for split and join for $O(\log (n/B))$ levels, and $O(\log B)$ for the base case.

  If the update batch is sorted, the split becomes a binary search
  with cost logarithmic in the current update array size.
  In this case, the split work is bounded by the join work, so the total work becomes $O(m\log(n/Bm))$ amortized.
  The span remains unchanged.
\end{proof}

This new version of the \ourtree{} may of interest when updates are more frequent than queries, or the queries are more costly.
For instance, if the query is ``reporting the top-$k$ elements'', where $k$ can be large, the work for this query is $\Omega(k)$.
In this case, we can use this alternative version of the \ourtree{} to reduce the update cost.
Assume $k$ is static throughout the algorithm, we can set $B=k$ for the CPAM tree.
Based on \cref{thm:unsorted}, each update costs $O(\log(n/k))$ work.
For a query, we only need to look into the first leaf array which contains at least $k$ entries and at most $(2+3c)k$ entries.
Then we can use the classic algorithm~\cite{JaJa92} to find the $k$-th element from this array, and pack those no larger than it.
Hence, this query takes $O(k)$ work and $O(\log k\log \log k)$ span, and the span can be optimized to $O(\log k)$ using the deterministic sampling technique in~\cite{blelloch2010low}.
This is much more efficient than directly running the classic algorithm~\cite{JaJa92} with $O(n)$ work, and better than keeping the entire search tree (e.g., a P-tree) which has $O(\log n)$ work per update.

\fi

\hide{
\subsection{some lemmas}
\begin{lemma}
The expose work is $O(m)$.
\end{lemma}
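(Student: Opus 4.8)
The plan is to charge each \expose{} operation to a node of the tree that it decomposes and show the total charge is linear in that tree's size. Taking the natural reading — the setting of the \union{}/\intersection{}/\difference{} analysis, where the top-level \expose{} is applied to the pivot tree $T_p$, here of size $m$ — the first step is to pin down the right object to induct over. Since \expose{} on a \blockednode{} first \unfold{}s it into a perfectly-balanced tree of $\Theta(B)$ \regularnode{s} and then returns the root, every subsequent recursive call descends into \regularnode{s} only. Hence over the whole execution \expose{} visits the nodes of the \exploded{} version of $T_p$, and — because the recursion never revisits a subtree it has already split off — it \expose{}s each such node at most once.

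With that set up, the accounting breaks into three buckets. (1) The $O(m/B)$ original \internalnode{s} of $T_p$ each cost $O(1)$ to \expose{}, contributing $O(m/B)$. (2) Each of the $O(m/B)$ original \blockednode{s} is \unfold{}ed at most once, and under the empty-encoding assumption of \cref{sec:theory} this costs $O(B)$ work, contributing $O((m/B)\cdot B)=O(m)$. (3) \unfold{}ing a block produces at most $2B$ fresh \regularnode{s}, hence $O(m)$ of them in total, each later \expose{}d in $O(1)$, contributing $O(m)$. Summing the three buckets gives expose work $O(m)$.

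The step I expect to be the crux is the bookkeeping behind the first paragraph's invariant: arguing that no \blockednode{} is \unfold{}ed more than once, and that the recursion \expose{}s each node of the \exploded{} tree at most once, is where the in-place semantics of \expose{} matter and where a careless argument would double count. A related caveat worth stating explicitly is that this clean $O(m)$ bound is for the tree actually being exposed; in the two-tree set operations where the \emph{decomposed} tree is the smaller one, only the reached \blockednode{s} of the larger pivot are \unfold{}ed, but there can be up to $\min(m, n/B)$ of them at cost $O(B)$ each, so the sharper statement becomes $O(\min(m\B,n))$, matching \cref{lem:exposework}. If instead the lemma is meant for a single-tree traversal such as \filter{} or \texttt{map} on a tree of size $m$, the same three-bucket count applies verbatim and already gives $O(m)$.
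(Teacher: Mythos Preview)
Your three-bucket accounting is sound for the \emph{basic} \union{} of \cref{fig:otheralgotext} under the extra assumption that the pivot tree $T_p$ is the smaller one ($|T_p|=m$); and you correctly flag that when $|T_p|=n$ the same count only yields $O(\min(mB,n))$. In other words, what you have actually re-derived is \cref{lem:exposework}, not the present $O(m)$ statement.

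The gap is contextual. This lemma belongs to the setting of \cref{sec:setproof}, where the algorithm is \texttt{union\_} together with the special base case \texttt{union\_base}, and the paper's proof rests on an observation you never invoke: in that variant, \expose{} is \emph{never} called on a \blockednode{}. In \texttt{union\_} the code tests \texttt{isflat} and diverts to the base case before ever reaching \expose{}; inside \texttt{union\_base} any flat argument is explicitly \unfold{}ed first, and that cost is billed to the separate ``unfold work'' bucket, not to expose work. Hence every individual \expose{} costs $O(1)$ regardless of which tree is larger, and the lemma reduces to bounding the number of recursive calls. Your buckets~(2) and~(3), which charge an $O(B)$ \unfold{} to \expose{}, are precisely what the special-base-case algorithm is designed to avoid; separating unfold cost from expose cost is what buys the unconditional $O(m)$ bound that your argument cannot reach when the pivot is the larger tree.
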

\begin{proof}
  Note that all \expose{} functions in \op{} happens after checking if $T_2$ is not a \compressednode{}. Also, in the corresponding base case functions, \expose{} is called on $T_2$ only after $T_2$ is unfolded. Therefore, in both cases, the work for such an \expose{} function is a constant. Considering that there can be at most $O(m)$ recursive calls, the total expose work is $O(m)$.
\end{proof}

\begin{lemma}
In a \ourtree{} $T$, all \internalnode{s} form a weight-balanced tree with balance factor $\frac{\alpha}{2-\alpha}$.
\end{lemma}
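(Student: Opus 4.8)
The plan is to make precise what ``the \internalnode{s} form a weight-balanced tree'' means and then transfer the balance invariant of $T$ from the entry-count weight $w$ to a node-count weight on the skeleton of \internalnode{s}. Let $T$ be a \composite{} \ourtree{}. First I would record two structural facts that drive everything: since $|T|\ge \B$, every leaf is a \blockednode{}, and every \internalnode{} has exactly two non-\nil{} children. The second holds because a \internalnode{} $v$ with a \nil{} child would satisfy $w(v)\le 1/\alpha$ by the weight-balance invariant, so its subtree would contain fewer than $\B$ entries, contradicting the fact that $v$'s subtree contains at least one \blockednode{} of size $\ge\B$. Consequently, in any subtree with $r$ \internalnode{s} the number of \blockednode{} leaves is exactly $r+1$.

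Next I would define the skeleton tree $T'$ by collapsing each \blockednode{} to a single unit node and keeping every \internalnode{}; this is the tree whose balance we must establish, under the weight $w'(S)=1+|S'|$ that counts \internalnode{s} and collapsed blocks alike (one unit each). The key quantitative step is to relate $w'$ to the original entry weight $w$ on every subtree $S$. If $S$ has $r$ \internalnode{s}, hence $r+1$ blocks each holding between $\B$ and $2\B$ entries, then $w'(S)=2(r+1)$ while a direct count gives $(r+1)(1+\B)\le w(S)\le (r+1)(1+2\B)$, that is
\[
\frac{2\,w(S)}{1+2\B}\ \le\ w'(S)\ \le\ \frac{2\,w(S)}{1+\B}.
\]

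With this two-sided bound the balance transfer is a short calculation. Fix a \internalnode{} $v$ with subtrees $L$ and $R$, so $w'(v)=w'(L)+w'(R)$. The invariant of $T$ gives $w(R)/w(L)\le (1-\alpha)/\alpha$, and combining the upper bound on $w'(R)$ with the lower bound on $w'(L)$ yields
\[
\frac{w'(R)}{w'(L)}\ \le\ \frac{w(R)}{w(L)}\cdot\frac{1+2\B}{1+\B}\ <\ \frac{1-\alpha}{\alpha}\cdot 2\ =\ \frac{2(1-\alpha)}{\alpha},
\]
since $\frac{1+2\B}{1+\B}=2-\frac{1}{1+\B}<2$. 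Hence $\frac{w'(L)}{w'(v)}=\bigl(1+\frac{w'(R)}{w'(L)}\bigr)^{-1}>\frac{\alpha}{\alpha+2(1-\alpha)}=\frac{\alpha}{2-\alpha}$, and by symmetry the same lower bound holds for $R$, giving $\frac{\alpha}{2-\alpha}\le \frac{w'(L)}{w'(v)}\le 1-\frac{\alpha}{2-\alpha}$ at every \internalnode{}. For a \simple{} \ourtree{} there are no \blockednode{s}, so $T'=T$ already satisfies the stronger factor $\alpha\ge\frac{\alpha}{2-\alpha}$ and the claim is immediate.

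The main obstacle is pinpointing the right weight on the skeleton and controlling the slack introduced by variable block sizes: because blocks range over $[\B,2\B]$, the entry weight of a subtree is pinned down only up to a factor of two by its \internalnode{} count, and it is exactly this factor $\frac{1+2\B}{1+\B}\to 2$ that degrades the balance constant from $\alpha$ to $\frac{\alpha}{2-\alpha}$. I would also be careful to treat the collapsed blocks as genuine leaves of $T'$ rather than deleting them, since deleting them could leave an \internalnode{} with a single child and destroy the clean two-children structure on which the leaf-counting identity $r+1$ depends.
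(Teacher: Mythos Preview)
Your proof is correct and follows essentially the same approach as the paper: relate a node-count weight $w'$ on the skeleton of \regularnode{s} to the entry weight $w$ via the block-size bounds $[\B,2\B]$, pick up the factor-2 slack, and transfer the balance ratio to obtain $\alpha/(2-\alpha)$. The only cosmetic difference is that the paper takes $T'$ to be the tree of \regularnode{s} alone with $w'(x)=s(x)+1$, whereas you keep the collapsed blocks as unit leaves so your $w'$ is exactly twice the paper's; the ratio $w'(L)/w'(v)$ is unaffected, and you are somewhat more careful than the paper in justifying the structural facts (two non-\nil{} children, the $r$ vs.\ $r{+}1$ count).
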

\begin{proof}
  Consider an \internalnode{} $v$ in a \ourtree{} and its sibling $u$. Denote their parent as $p$. Assume the number of \internalnode{s} in a subtree of $T$ rooted at a node $x$ is $s(x)$. Recall that we use $w(u)$ and $w(v)$ to denote the weight of $u$ and $v$ in the \ourtree{}, respectively.
  Then we have:
  \begin{align*}
    \B(s(u)+1)\le&w(u)\le 2\B(s(u)+1)\\
    \B(s(v)+1)\le&w(v)\le 2\B(s(v)+1)
  \end{align*}
  Denote $\beta = \frac{1-\alpha}{\alpha}$. Considering the weight-balance invariant, we also have $\frac{1}{\beta} w(u)\le w(v)\le \beta w(u)$. If we look at the binary tree $T'$ formed by all the \internalnode{s} in this \ourtree{}, we denote the weight of a subtree in $T'$ rooted at a node $x$ as $w'(x)=s(x)+1$.
  By using the inequations we can get:
  \begin{align*}
  &\B(s(u)+1)\le w(u)\le \beta w(v)\le 2\B\beta(s(v)+1)\\
  \Rightarrow & w'(u) \ge 2\beta w'(v)\\
  \Rightarrow & w'(v) \le \frac{1}{2\beta+1}w'(p)\\
  \Rightarrow & w'(v) \le \frac{\alpha}{2-\alpha}w'(p)
  \end{align*}
  Since $v$ is either child of $u$, we can also get the symmetric conclusion $w'(v) \ge \frac{2-\alpha}{\alpha}w'(p)$. This shows that $T'$ is a weight-balanced tree with balancing factor $\alpha'=\frac{2-\alpha}{\alpha}$.
\end{proof}

\begin{lemma}
In a weight-balanced tree $T$, we can partition all tree nodes into $l$ layers, such that:
\begin{itemize}
  \item $l=O(\log |T|)$.
  \item For any two nodes $u$ and $v$ in the same layer, $u$ is not the parent of $v$.
  \item The number of nodes in layer $i$, denoted as $s_i$, is no more than $|T|/c^i$ for some constant $c$.
\end{itemize}
\end{lemma}
}

\section{Implementation}\label{sec:impl}
In this section, we describe \cpam{}, our implementation of
\ourtree{s}. 
\cpam{} is built in C++, based on the PAM framework~\cite{pam}.
Our
implementation of sequence and map primitives are mostly unchanged.
Most of the changes are to
introduce \blockednode{s},
to handle folding and unfolding in \join{},
to express the recursive functions using the
\expose{} primitive, and in some cases to add optimized base cases.

\myparagraph{Optimized Base Cases.}
We first implemented \union{} as in \cref{fig:mainalgo}, which
recursively calls \expose{} to access the left and right subtrees.
Although simple and theoretically efficient, in practice unfolding
\blockednode{s} into \exploded{} trees and recursing on these trees
requires additional memory allocations, and potentially more
cache-misses.
We therefore designed a new sequential base-case for \union{} when $|T_1| + |T_2| < \kappa$, where $\kappa$ is a
configurable \emph{base-case granularity}.
Our base-case works by writing both $T_L$ and $T_R$ into
a pre-allocated array $A$ of size $\kappa$ and merging them in-place
to perform the union. It then constructs a \ourtree{} from the result
in $A$.
Compared to the original version of \union{} that only
uses \expose{}, using the special base-case with $\kappa = 4B$ is 4.4x
faster, and using $\kappa = 8B$ is 6.7x faster ($B = 128$). We
observed similar improvements for some other commonly-used
primitives such as \filter{}, \mapreduce{}, \multiinsert{}, \multidelete{},
and \intersection{}.
We use $\kappa = 8B$ in our experiments.
\revised{We use a parallel granularity of $4B$, which is the threshold
for forking parallel tasks in algorithms such as \filter{}
and \union{}.}


\myparagraph{Persistence and Memory Management.}
\cpam{} uses a reference counting garbage collector for memory management.
\cpam{} provides functional ordered maps, and thus by default
does not modify the input trees. 
However, in certain cases an application may wish to modify a
tree in-place to save memory, e.g., when updates and queries are separated.
Although one could deal with in-place and functional updates separately,
this is not attractive.
\ifx\confversion\undefined
Instead, we designed a simple approach to handle both cases using the
same code.

%
Our approach is to store an additional bit indicating whether the
supplied node is visible solely to the current function, or whether
the node has some external observer, and should therefore be copied.
We refer to these special pointers to tree nodes with an additional
bit for visibility as \emph{\extraptr{s}}. When an \extraptr{} is
copied, e.g., an algorithm like \union{} wishes to use
it as part of the resulting tree, we copy this node if the visibility
bit is set or if the node has a reference count more than 1,
and otherwise we simply return the same node.
Similarly, when we \expose{} an \extraptr{} pointing to a
\regularnode{} $v$, we set the visibility bits on the children either
if the $v$'s visibility bit is set, or if $v$ has a reference count
greater than $1$. If $v$ was visible only to the caller, as an
optimization we return it as an additional result, allowing the caller
to potentially reuse this node.
Our approach lets us write simple algorithms which modify the tree
in-place when possible, and begin copying once it reaches subtree that
is visible to other observers.

\else
Instead, we designed a simple approach to handle both cases using the
same code, which we describe in Appendix~\ref{apx:experiments} due to
space constraints.
\fi


\myparagraph{Compression on Blocks.}
\cpam{} makes it easy to apply user-specified encoding
schemes.  Our data structure is templated over a type
representing a block encoding scheme (no encoding by default).
To add a new encoding scheme, users provide
a structure with methods that calculate the encoded size for a block,
encode the elements into a buffer, and decode elements from an encoded
buffer. This design allows users to specify encoding schemes based on
the underlying data type or application, such as text compression. For
example, it is easy to add new types of difference coding, e.g., using
$\gamma$-coding, which would obtain better space usage at the expense
of worse running time~\cite{shun2015smaller}.

%


\section{Applications}\label{sec:app}

In this section we describe four applications that we
implement using \cpam{}. Our inverted index, and range and interval
tree applications are based on the implementations from
PAM~\cite{pam}. Our graph processing application is based on
Aspen~\cite{dhulipala2019low}.
We focus on the key
features of the applications in the context of \ourtree{s} here.

\myparagraph{Inverted Index.}
\revised{We implement a weighted inverted index, similar to those used in
search engines. The inverted index maintains a top-level map
from words to document lists ($B=128$). Each document list is a map
from document id to an importance score ($B=128)$.}
The document lists are augmented to maintain the highest importance score.
The inverted index supports standard AND/OR
queries over words, returning results by rank, and top-$k$ (based on importance) queries.
The document ids are compressed using difference encoding, requiring less than two bytes
per document.

\myparagraph{2D Range Tree.}
\revised{The two-dimensional range tree is a top-level map from
$x$-coordinate to $y$-coordinate ($B=128)$. The tree is augmented so
that every internal node stores all $y$-coordinates in its subtree
(this is itself a set represented as a \ourtree{} with $B=16$).}
Updates can add and delete points, and queries can list of
or count the points in a given rectangular range.
The range tree supports count queries in $O(\log^2 n)$ time, which can
be batched to run in parallel.

\myparagraph{Interval Tree.}
The interval tree maintains intervals over the number line, for
example, representing the time of a TCP connection, or the time a user is
logged into some service. A stabbing query can report all or any
intervals that cross a given point. \revised{The intervals are
represented as an augmented tree from left-coordinate to
right-coordinate with $B=32$. The augmentation maintains the maximum
right-coordinate in the subtree. This allows stabbing queries in time
$O(k \log n)$ where $k$ is the number of intervals requested or
returned (whichever is less). Intervals can be inserted or deleted
in $O(\log n)$ time and can be batched to run in parallel.}

\hide{Difference encoding can be used for compression, and works
particularly well if the intervals tend to be short.}

\myparagraph{Graph Processing.}
\revised{Graphs are represented as a two-level structure similar to the
inverted index, with a top-level augmented tree (the \emph{vertex
tree}) from vertices to edge lists ($B=64$).
Each edge list is a map from neighbor-id to an edge-weight (or empty
when unweighted) called an \emph{edge tree} ($B=64$).}
The augmentation on the vertex tree maintains the
total number of edges in the graph.  We focus on unweighted graphs in
this paper but note that our implementation also supports weights.  As with
inverted indices, using difference encoding allows us to store an edge
using just 2--3 bytes on average including the bytes used for
\regularnode{s}.

On top of this representation, we implement graph algorithms
using the Ligra interface~\cite{ShunB2013}, including 
breadth-first search, maximal independent set, and single-source
betweenness centrality. Our implementations are based on the ones in
Aspen and GBBS~\cite{DhulipalaBS21, DhulipalaSTBS20}.
We design parallel batch-updates for our representation, which
are applicable in graph-streaming and batch-dynamic graph
algorithms.

\section{Experiments}\label{sec:exps}

\myparagraph{Experimental Setup.} We run experiments on a 72-core Dell
PowerEdge R930 (with two-way hyper-threading) with $4\times
2.4\mbox{GHz}$ Intel 18-core E7-8867 v4 Xeon processors (with a
4800MHz bus and 45MB L3 cache) and 1\mbox{TB} of main memory. Our
programs use a work-stealing scheduler for
parallelism~\cite{parlay20}.  We use 
\texttt{numactl -i all} to balance the memory allocations across the
sockets for parallel executions.
Unless otherwise mentioned, all of the reported numbers are run on 72
cores with hyper-threading.

\revised{
\myparagraph{Overview of Results}
We show the following experimental results in this section.

\begin{itemize}[topsep=0pt,itemsep=0pt,parsep=0pt,leftmargin=8pt]

\item \ourtree{}s are competitive with PAM for microbenchmarks
  (Section~\ref{subsec:microbench}) and applications including
  inverted indices (Section~\ref{subsec:inverted}) and 2D range
  queries and 1D interval queries
  (Section~\ref{subsec:intervalandrange}) while using 2.1x--7.8x less
  space.

\item Varying the block size $B$ for an \ourtree{} trades off off
  performance for space efficiency (Section~\ref{subsec:microbench}).
  For even a modest value of $B=128$, \ourtree{}s use only 1\% more
  space than a (static) compressed array.

\item For graph processing and streaming, \cpam{} uses 1.3--2.6x less
  space compared to \aspen{}, and is almost always faster than
  \aspen{} in all tested graph algorithms (Section~\ref{sec:graphs}).

\end{itemize}
}

\begin{figure*}

\footnotesize
\vspace{-1em}
\begin{minipage}[t]{1.15\columnwidth}
\begin{tabular}{@{}l@{ }c@{  }cr@{  }@{  }r@{  }@{  }rr@{  }@{  }r@{  }@{ }rr@{  }@{ }r@{ }@{  }r}
\toprule
\multirow{2}{*}{} & \multirow{2}{*}{$n$} & \multirow{2}{*}{$m$} & \multicolumn{3}{c}{\bf \ourtree{}} &  \multicolumn{3}{c}{\bf \ourtree{} (Diff)} & \multicolumn{3}{c}{\bf \ptree{} (PAM)} \\

& & & $T_{1}$ & $T_{144}$ & {Spd.} & $T_{1}$ & $T_{144}$ & {Spd.} & $T_{1}$ & $T_{144}$ & {Spd.} \\
\midrule
\multicolumn{12}{@{}l}{\textbf{No augmentation}} \\
\midrule
  {\emph{Size (GB)}}     & $10^{8}$ & ---        & 1.61  & ---     & ---    & \best{0.926}& ---    & ---         & 4.00  & ---    & --- \\
  {\emph{Build}}         & $10^{8}$ & ---            & 5.55  & 0.186 & 29.8     & 5.71 & \best{0.180} & 31.7         & 5.94  & 0.221  & 26.8 \\
  {\emph{Union}}         & $10^{8}$ & $10^{8}$       & 5.33  & \best{0.088} & 60.5     & 6.29 & 0.089  & 70.6        & 8.97  & 0.168  & 53.3 \\
  {\emph{Union}}         & $10^{8}$ & $10^{5}$       & 1.09  & 0.021 & 51.9     & 1.28 & 0.022 & 58.1         & 0.206  & \best{0.0038} & 54.2 \\
  {\emph{Intersect}}     & $10^{8}$ & $10^{8}$   & 4.35  & \best{0.065} & 66.9     & 5.68 & 0.081 & 70.1         & 9.50  & 0.139  & 68.3  \\
  {\emph{Difference}}    & $10^{8}$ & $10^{8}$  & 3.00  & \best{0.055} & 54.4     & 3.55 & 0.056 & 63.3         & 8.17  & 0.123  & 66.4 \\
  {\emph{Map}}           & $10^{8}$ & $10^{8}$      & 0.859 & 0.037 & 22.9     & 1.14 & \best{0.023} & 49.5        & 1.32 & 0.091 & 14.5 \\
  {\emph{Reduce}}        & $10^{8}$ & ---     & 0.306 & 0.018 & 17.0     & 0.308 & \best{0.0092} & 33.4       & 1.60 & 0.034 & 47.0 \\
  {\emph{Filter}}        & $10^{8}$ & ---      & 0.997 & 0.028 & 35.6     & 1.24 & \best{0.018} & 68.8         & 1.90  & 0.0524 & 36.2 \\
  {\emph{Find}}          & $10^{8}$ & $10^{8}$       & 103   & 1.17 & 88.0      & 125 & 1.23 & 101.6          & 105.5 & \best{1.05}   & 100.4 \\
  {\emph{Insert}}        & $10^{8}$ & $10^{6}$       & 0.829 & --- & ---        & 1.42 & ---  & ---           & \best{0.773} & --- & --- \\
  {\emph{Multi-Insert}}  & $10^{8}$ & $10^{8}$ & 18.8  & 0.332 & 56.6     & 19.9 & \best{0.323} & 61.6        & 9.67  & 0.338  & 28.6  \\
  {\emph{Range}}         & $10^{8}$ & $10^{6}$       & 11.5  & 0.318 & 36.1     & 13.1 & 0.226 & 57.9         & 3.77  & \best{0.0738} & 45.6 \\  
\midrule
\multicolumn{12}{@{}l}{\textbf{With augmentation}}\\
\midrule
  {\emph{Size (GB)}} & $10^{8}$ & ---            & 1.63 & ---        & ---    & \best{0.936} & ---  & ---        & 4.80 & --- & --- \\
  {\emph{Build}}   & $10^{8}$ & ---              & 5.66 & 0.197 & 28.7   & 5.84 & \best{0.186} & 31.3       & 6.48  & 0.246 & 26.3 \\
  {\emph{Union}}   & $10^{8}$ & $10^{8}$         & 5.52 & 0.098 & 56.3   & 6.52 & \best{0.090} & 72.4       & 10.13 & 0.196 & 51.6 \\
  {\emph{AugRange}}   & $10^{8}$ & $10^{7}$      & 12.3 & 0.331 & 37.1   & 13.9 & 0.234 & 59.4       & 4.80  & \best{0.082} & 58.5 \\
  {\emph{AugFilter}}   & $10^{8}$ & ---     & 12.2 & 0.333 & 36.6   & 13.6 & 0.234 & 58.1       & 4.95  & \best{0.081} & 61.1 \\
\end{tabular}
\captionof{table}{\small \textbf{Microbenchmark results.} We fix $B=128$ for \ourtree{s}.
$n$ is the tree size. For set functions and \texttt{multi-insert}, $m\le n$ is the size of the other set (batch). For other functions,
$m$ is the number of queries tested.
$T_1$ is the sequential running time. $T_{144}$ is parallel running time using 72 cores (144 hyperthreads). \emph{Diff} means difference encoding.
We highlight the best parallel running time (or size) per experiment in green and underlined.
\label{table:microbench}
}
\end{minipage}\hfill
\begin{minipage}{.85\columnwidth}
  \hspace{-.1in}
  \includegraphics[width=\columnwidth]{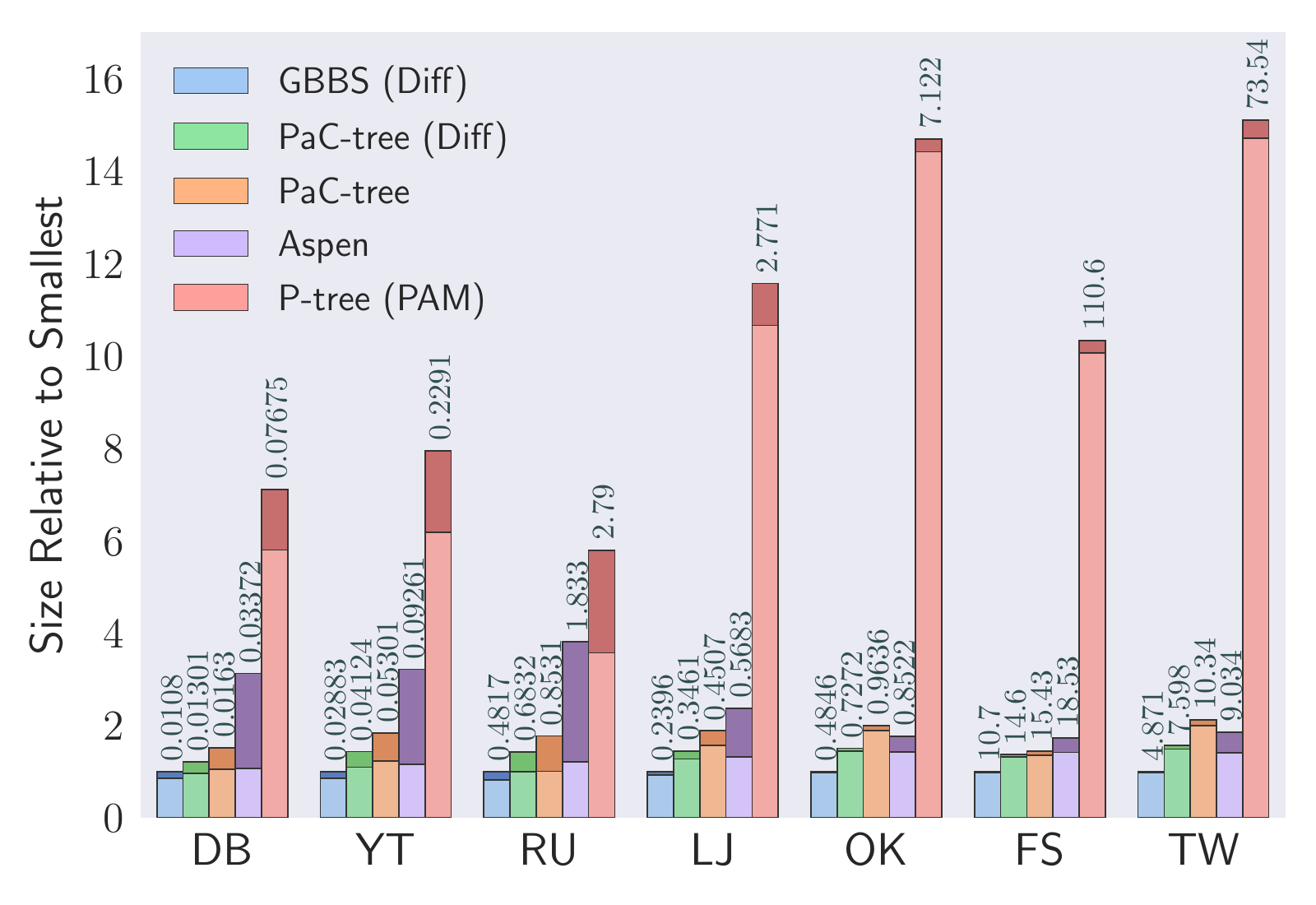}
    \caption{\small
    \textbf{Relative space usage of different graph representations.}
    GBBS (Diff) is our \emph{static} baseline compressed graph representation.
    \ourtree{} uses \ourtree{}s for vertex and edge trees, and
    \ourtree{} (Diff) difference encodes both trees.
    Aspen uses \ptree{}s for the vertex tree and \ctree{}s with
    difference encoding for edge trees.
    \ptree{} (PAM) uses \ptree{}s for the vertex and edge trees.
    The values on top of each bar are the memory usage in GiB.
    \label{fig:relative_graph_sizes}}
\end{minipage}\hfill
\end{figure*} 

\subsection{PaC-Tree Performance}\label{subsec:microbench}
We begin by studying the performance and space of \ourtree{s} on a set
of microbenchmarks and compare with \ptree{s} from \pam{}.  All
experiments in this section use maps and augmented maps where the keys
and values are both 64-bit integers. Unless otherwise mentioned
\ourtree{s} use $B=128$.

\myparagraph{Microbenchmark Performance.}
\cref{table:microbench} shows the results on \ourtree{s},
\ourtree{s} with difference-encoding (DE), and \ptree{s} for a representative
subset of the map and sequence primitives.
The speedups for both types
of \ourtree{s} range from 28.7--101x and are largest for the version
using DE due to additional work for difference encoding.
In absolute
running time, \ourtree{s} with DE are usually slower than \ourtree{s}
due to compression and decompression costs, but the overhead is mostly within 10\%.

In most of the primitives tested, \ourtree{s} are faster than \ptree{s}
while also using 2.5x less space.
For example, \ourtree{s} are 1.68x faster than \ptree{s} in \union{} on
two trees of sizes $10^8$.
We note that in this case, the union processes the entirety of both
input trees, and so the more cache-friendly processing of blocks in
\ourtree{s} results in lower time. However, if sizes of the two trees
are different, the work for \union{} only depends on the
smaller size.  In this case, since the cost of \union{}
using \ourtree{s} has an additional $O(mB)$ term compared with
\ptree{s}, \ourtree{s} are 5.5x slower than \ptree{s}.  However, we expect
better performance for smaller block sizes ($B < 128$), which we
discuss next.

\hide{
\begin{figure}[!t]
\vspace{-1em}
  \includegraphics[width=0.4\textwidth]{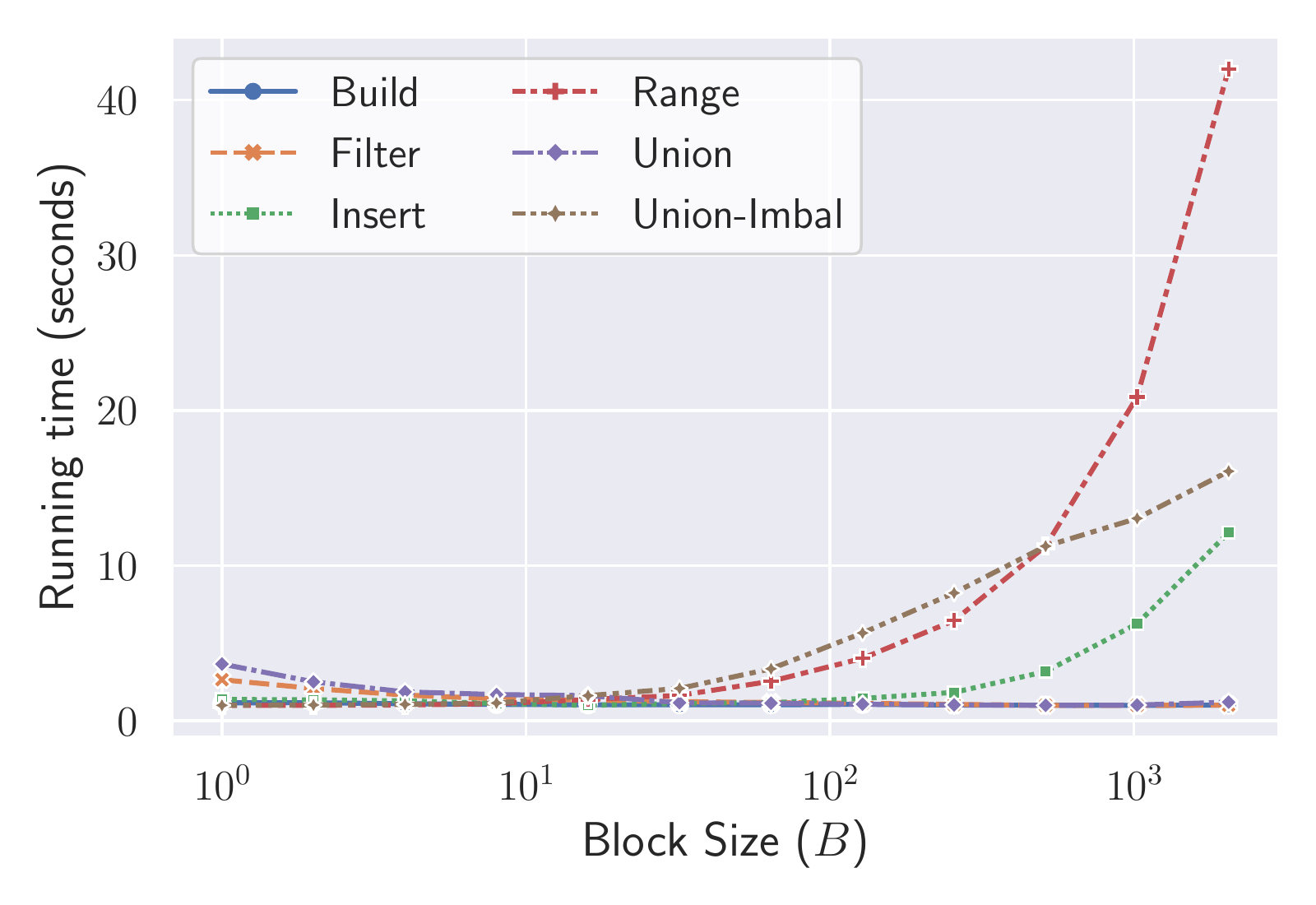}
  \vspace{-.5em}
    \caption{
  \small
  Primitive running times for \ourtree{} storing $10^8$ key-value
  pairs (8 bytes each) vs. block size $B$. $\mathsf{Union}$,
  $\mathsf{Intersection}$ and $\mathsf{Difference}$ all work on two trees with $10^8$ elements.
  $\mathsf{Union}$-$\mathsf{Imbal}$ takes the union of trees with
  $10^8$ and $10^5$ elements.}\label{fig:primitivetime_vs_blocksize}
  \vspace{-1em}
\end{figure}
}

\myparagraph{Effect of Varying $B$ on Performance.}
\cref{fig:primitivetime_vs_blocksize} shows the results of
varying the block size $B$, on the performance of various operations.
Most operations obtain speedups as $B$ is increased up until $B=16$.
For the sequential operations, such as \find{} and \range{}, we see a
steady increase in the running time for $B > 16$ and see a similar
trend for $\mathsf{Union}$-$\mathsf{Imbal}$, which takes the union of
trees with $10^8$ and $10^5$ elements. This slowdown with increasing
$B$ is due to the extra $O(mB)$ term in the work of
\union{}.  For the smallest block
size ($B=1$), our running time matches that of \ptree{s} on this
operation.

\hide{
\begin{figure}[!t]
  \includegraphics[width=0.4\textwidth]{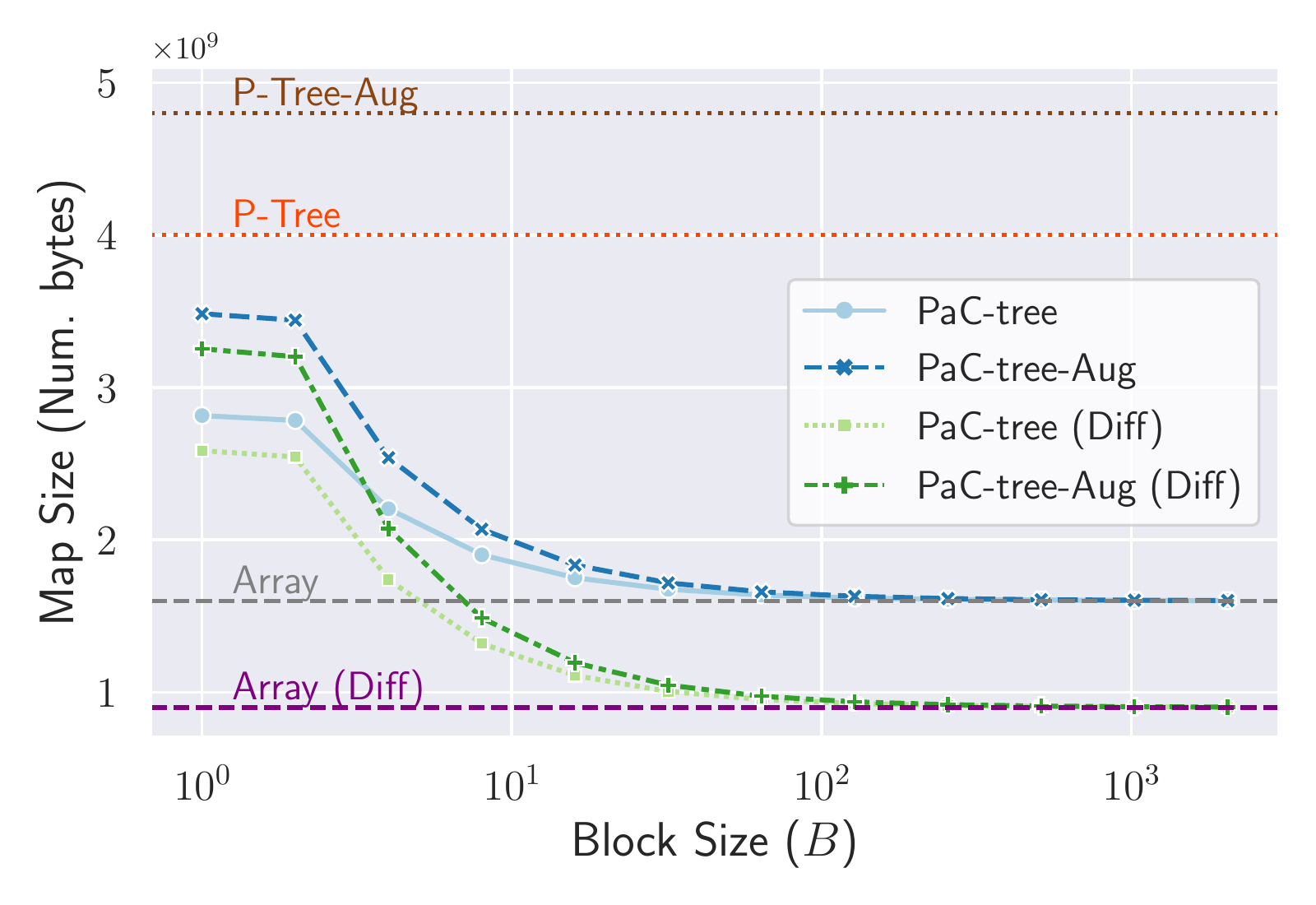}
  \caption{
  \small
  Size in bytes of \ourtree{} and \ourtree{} with difference encoding
  storing $10^8$ key-value pairs (8 bytes each) as a function of the
  block size, $B$. For augmented maps (-Aug) the augmented values are
  8 bytes each. The grey line shows the number of bytes to store the
  $10^8$ elements in an array and the purple line shows the bytes used
  to store the difference encoded keys in a single array using byte
  encoding, leaving values unmodified (8 bytes
  each).}\label{fig:size_vs_blocksize}
\end{figure}
}

\myparagraph{Space Usage.}
For $B=128$, \ourtree{s} obtain a 2.48x reduction in space usage
compared to using \ptree{s}, and a further 1.73x reduction in space
usage by using difference encoding.
The $10^{8}$ pairs stored in the experiments require 1.6GB of memory
to represent as a single flat array, which is also a lower bound for
the space usage of a search tree structure. To understand how close
\ourtree{s} come to this lower bound, we study the space usage of
unaugmented maps using \ourtree{s} as a function of the block size $B$
(\cref{fig:size_vs_blocksize}).
%
Using $B=32$, \ourtree{s} are only 1.05x larger than the lower bound
and using $B=128$, it is just 1.01x larger than the lower bound.
For $B=128$, just 1.1\% of the allocated memory is used for
\regularnode{s} and metadata in the \blockednode{s}.  
These savings are obtained without using any additional encoding.
%
Applying difference encoding improves the space by 1.77x over the
unencoded trees and the array lower bound, and is only 1.03x larger
than the space used to difference encode all of the keys in a single
array, leaving the values uncompressed, which is a lower bound for
a search tree structure using difference encoding for such input.

Using \ourtree{s} requires much lower space overhead for augmentation
compared to \ptree{s} (\cref{fig:size_vs_blocksize}).
For \ptree{s}, adding 8 byte augmented values increases the size of
the maps by 20\%, whereas \ourtree{s} (both with and without
difference encoding) using $B=128$ incurs only a 1\% increase in space
for the augmented values.  The savings comes from only storing a
single augmented value per \blockednode{}, which only uses extra space
proportional to $n/B$ augmented values.

\ifx\confversion\undefined
\subsection{Comparison with Collections in Spark}\label{sec:spark}
We compared \cpam{} with the shared-memory parallel implementation of
Apache Spark on a simple benchmark drawn from the Apache Spark
tutorial. The benchmark first loads the same Wikipedia corpus that we
use for our inverted index application (1,943,575,146 words in
8,125,326 documents). The first example then tokenizes the dataset
into words, and computes the longest word length. The second example
computes the most frequently occuring word by using the
\texttt{reduceByKey} primitive in Spark to group common words and
compute the mostly frequently occuring word using a reduce. We use
in-memory caching for the intermediate mapped dataset and report the
fastest (cached) time.

For the first example, Spark takes 46.9 seconds for the first
(uncached) run, and the subsequent average (cached) time is 21.5
seconds. Our \cpam{} implementation, where the dominant cost is the
memory-bound parsing step, requires 6.57 seconds on average (3.2x
faster than the cached time). On the second example, Spark takes 96.3
seconds for the first run, and 72.5 seconds for the average cached
time. For the second example, the dominant cost for \cpam{} is the
parallel sort (we use a parallel sample-sort). The end-to-end time is
14.6 seconds, which is 4.9x faster than the Spark cached time.

We also tried evaluating the same set of sequence benchmarks shown
in \cref{fig:comp} using Spark, but observed significantly worse
running times for all of the sequence primitives (up to 2 orders of
magnitude worse performance; e.g., reduce on a 100M element sequence
with 8-byte elements takes 2.07 seconds, whereas \cpam{} takes 0.00865
seconds). The slowdown could be due to fixed parallelization overheads
in Spark, although their word counting example which we studied above
performs reasonably well despite working over a significantly larger
dataset (nearly 2 billion words).
\fi

\subsection{Inverted Index}\label{subsec:inverted}
Next, we study our performance on the inverted index application. We
run the application on documents derived from a large Wikipedia
dataset also used by PAM for a fair comparison.
The dataset is processed by removing all markup, converting characters
that are not alphanumeric  to whitespace and making all words case
insensitive~\cite{pam}. The processed dataset contains 1.94 billion
words over 8.13 million documents. Like PAM, our evaluation measures
the performance of (1) building an index over (words, doc\_id, weight)
triples and (2) running queries that fetch the posting lists for two
words, compute the intersection of the lists, and select the top 10
documents by weight.

Table~\ref{table:tree_applications} shows the results of the
experiment. For building the index, our implementation achieves 76x
speedup and our parallel running times are comparable with those of
PAM (at most 1.1x slower). For the queries, we observe that the
unencoded trees achieve essentially the same parallel time as PAM,
whereas the difference encoded trees are 1.18x slower due to the
higher cost of intersection operations in our difference encoded
implementation. The space usage using \ourtree{s} is much smaller than
that of PAM, being 3.84x smaller without encoding and 7.81x smaller
using a custom encoder that combines difference encoding for the keys
with byte-encoding for the integer values (weights).



\begin{table}[!t]
\footnotesize
\centering
\begin{tabular}[!t]{ll@{  }@{  }@{  }c@{  }@{  }@{  }cccrrr}
\toprule
& Library & Space & Method & $n$ & $m$ & $T_{1}$ & $T_{144}$ & Spd. \\

\midrule
\parbox[t]{2.5mm}{\multirow{6}{*}{\rotatebox[origin=c]{90}{\bf Inverted Index}}} &
\multirow{2}{*}{\ourtree{}} & \multirow{2}{*}{8.29}
  & {\emph{Build}}   & $10^{8}$ & ---            & 746 & 9.73 & 76.6  \\
  & & & {\emph{Query}}   & $10^{8}$ & $10^{8}$       & 341 & \best{4.46} & 76.4 \\

\cmidrule(lr){2-9}

& \multirow{2}{*}{\ourtree{} (D)} & \multirow{2}{*}{\best{4.07}}
  & {\emph{Build}}   & $10^{8}$ & ---            & 754 & 9.81 & 76.8  \\
& & & {\emph{Query}}   & $10^{8}$ & $10^{8}$       & 367 & 5.32 & 68.9 \\
\cmidrule(lr){2-9}
  & \multirow{2}{*}{\ptree{} (PAM)} & \multirow{2}{*}{31.9} &{\emph{Build}}   & $10^{8}$ & ---  & 575 & \best{8.86} & 64.9 \\
&                     & & {\emph{Query}}   & $10^{8}$ & $10^{8}$                    & 313 & 4.48 & 69.8 \\
\midrule

\parbox[t]{2.5mm}{\multirow{4}{*}{\rotatebox[origin=c]{90}{\bf Interval}}} &
\multirow{2}{*}{\ourtree{}} & \multirow{2}{*}{\best{0.812}} &
{\emph{Build}}       & $10^{8}$ & ---            & 10.9 & \best{0.179} & 60.8 \\
& & & {\emph{Query}}   & $10^{8}$ & $10^{8}$       & 60.8 & \best{0.525} & 115.8 \\
\cmidrule(lr){2-9}
& \multirow{2}{*}{\ptree{} (PAM)} & \multirow{2}{*}{3.54} &
 {\emph{Build}}       & $10^{8}$ & ---            & 11.6 & 0.271 & 42.8  \\
& & & {\emph{Query}}   & $10^{8}$ & $10^{8}$       & 54.3 & 0.628 & 86.4 \\
\midrule

\parbox[t]{2.5mm}{\multirow{6}{*}{\rotatebox[origin=c]{90}{\bf Range}}} &
\multirow{3}{*}{\ourtree{}} & \multirow{3}{*}{\best{40.3}} &
{\emph{Build}}       & $10^{8}$ & ---            & 164  & \best{2.71} & 60.7 \\
& & & {\emph{Q-Sum}}   & $10^{8}$ & $10^{6}$   & 54.2 & \best{0.629} & 86.1 \\
& & & {\emph{Q-All}}   & $10^{8}$ & $10^{3}$   & 7.20 & \best{0.266} & 27.0 \\

\cmidrule(lr){2-9}
& \multirow{3}{*}{\ptree{} (PAM)} & \multirow{3}{*}{89.6} &
{\emph{Build}}       & $10^{8}$ & ---            & 169 & 2.84 & 59.6 \\
& & & {\emph{Q-Sum}}   & $10^{8}$ & $10^{6}$   & 60.7 & 0.735 & 82.5 \\
& & & {\emph{Q-All}}   & $10^{8}$ & $10^{3}$   &  21.6 & 0.552 & 39.1

\end{tabular}
\captionof{table}{\small \textbf{Build and query times and space usage in GiB
  for inverted index, interval tree, and range tree applications.} $T_1$ is the single-thread time,
  $T_{144}$ is the 72-core time using hyper-threading, and Spd. is the
  parallel speedup. The best parallel running time (or
  size) is highlighted in green and underlined per experiment.}
\label{table:tree_applications}
\vspace{-2em}
\end{table}

\subsection{Interval and Two-Dimensional Range Trees}\label{subsec:intervalandrange}
We benchmark our interval and two-dimensional range trees as in
PAM~\cite{sun2019parallel}.
%
We build our interval tree on $10^8$ intervals, and for queries run
stabbing queries over $10^8$ points in parallel. We observe that both
building and querying the trees achieves good parallel speedup
(60--115x).
\ourtree{s} are 1.51x faster than PAM in construction, and is 1.19x
faster for queries.
Overall we find that \ourtree{s} enable better
performance than PAM while using 4.37x less space.

We build our range trees on $10^8$ uniformly random points in the
plane between $(0, 0)$ and $(1e8, 1e8)$. We run two types of queries:
the first count the number of points in the range (Q-Sum), and the
second returns all points in the range. We tuned the window sizes used
in our queries to match the settings evaluated by PAM (around $10^6$ points
returned per query). Both \ourtree{s} and \ptree{s} build the data
structure in a similar amount of time. \ourtree{s} achieve better
performance than \ptree{s} for both queries, being 1.16x faster for
Q-Sum and 1.96x faster for Q-All queries, likely due to requiring
fewer cache-misses when processing the tree to output the points
within a given range. The range tree application using PAM has
previously been compared with range trees in CGAL~\cite{cgal-range}
and was shown to outperform it~\cite{sun2019parallel}.

For space usage, \ourtree{s} result in 2.18x less space compared to
PAM.
We note that
95\% of the space used in PAM is for the \ptree{s} stored as augmented
values in each node (representing the union of the $y$-coordinates in
the subtree). The majority of our savings come from compressing the
augmented trees using \ourtree{s} which results in a 2.53x less space
for the inner trees, and 2.18x less space overall.


\begin{figure*}[!t]
\vspace{-1em}
\hspace{-1em}
\begin{minipage}{.71\columnwidth}
    \includegraphics[width=\columnwidth]{figures/primitivetime_vs_blocksize}
  \vspace{-.5em}
\end{minipage}
\hspace{-1em}
\begin{minipage}{0.71\columnwidth}
  \includegraphics[width=\columnwidth]{figures/size_vs_blocksize}
  \vspace{-.5em}
\end{minipage}
\begin{minipage}{.71\columnwidth}
  \includegraphics[width=\columnwidth]{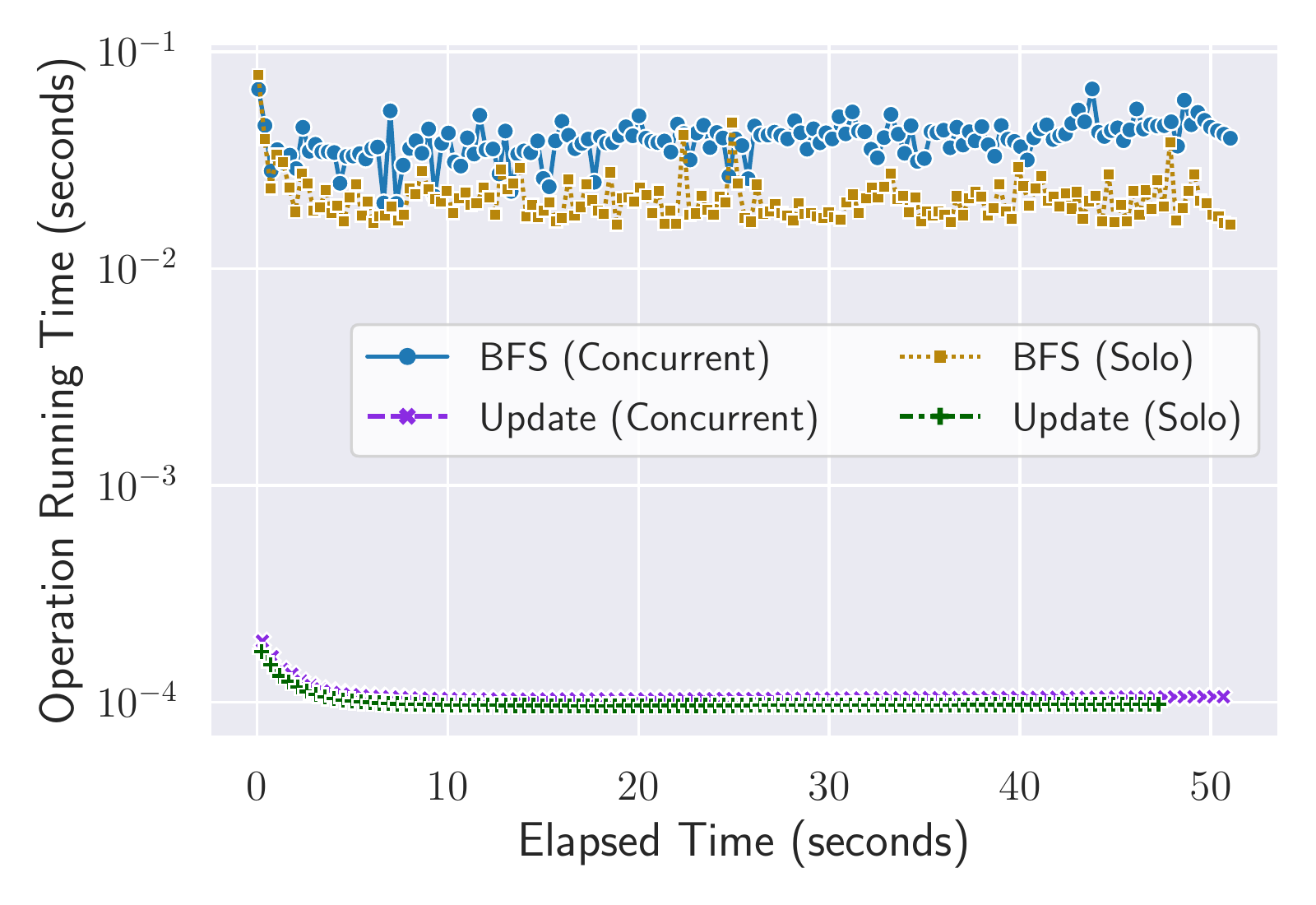}
\end{minipage}\\
\begin{minipage}[t]{.6\columnwidth}
    \caption{
  \small
  \textbf{Primitive running times for \ourtree{s} vs. block size $B$}.
  We use $10^8$ key-value pairs (8 bytes each). $\mathsf{Union}$,
  $\mathsf{Intersection}$ and $\mathsf{Difference}$ all work on two trees with $10^8$ elements.
  $\mathsf{Union}$-$\mathsf{Imbal}$ takes the union of trees with
  $10^8$ and $10^5$ elements.\label{fig:primitivetime_vs_blocksize}}
\end{minipage}\hfill
\begin{minipage}[t]{.75\columnwidth}
  \caption{
  \small
  \textbf{Size of \ourtree{s} (with or without DE) as a function of block size $B$.}
  We use $10^8$ key-value pairs (8 bytes each).
  For augmented maps (-Aug), augmented values are
  8 bytes each. The grey line shows the number of bytes to store the
  $10^8$ elements in an array and the purple line shows the bytes used
  to store the difference encoded keys in a single array using byte
  encoding. 
  \label{fig:size_vs_blocksize}}
\end{minipage}\hfill
\begin{minipage}[t]{.65\columnwidth}
    \caption{\small \textbf{Performance of concurrent updates and queries.}
    The time series plot illustrates running
    times when running BFS queries with batch-insertions
    of edges concurrently (Concurrent), and when queries and updates
    are run individually (Solo) on the LiveJournal graph.\label{fig:concurrent_update_query}}
\end{minipage}
\end{figure*}
\hide{
Relative space of different choices of map datatype in our graph
    representation. The acronyms indicate the type of tree used for
    the vertex trees (-V) and edge trees (-E) where $\mathsf{P}$ is
    \ptree{} and $\mathsf{C}$ is \ourtree{}. For example
    $\mathsf{PC}$ means that the vertex trees use \ptree{s} and the
    edge trees use \ourtree{s}. The best solution is
    $\mathsf{\ourgraph{}}$, which are nested \ourtree{s} using
    difference encoding. on top of each bar are the size of
    the representation in GiB.
} 
\begin{table}
\setlength{\tabcolsep}{4pt}
\footnotesize
\centering
\begin{tabular}[!t]{lrr rr c}
\toprule
\textbf{Graph} & \textbf{Vertices} & \textbf{Edges} & \textbf{Ours} & \textbf{Aspen} & $\frac{\text{Aspen}}{\text{Ours}}$ \\ 
\hline
{\emph{ DBLP  (DB) }  }           & 425,957          &2,099,732       & 0.0130   & 0.03409 & 2.62x \\
{\emph{ YouTube  (YT) }  }        & 1,138,499        &5,980,886       & 0.0412  & 0.0934 & 2.26x \\
{\emph{ USA-Road  (RU) }  }       & 23,947,348       &57,708,624      & 0.683   & 1.843 & 2.69x \\
{\emph{ LiveJournal (LJ) }  }    & 4,847,571        &85,702,474      & 0.346  & 0.527  & 1.52x \\
{\emph{ com-Orkut   (CO) }  }    & 3,072,627        &234,370,166     & 0.727  & 0.893 & 1.22x \\
{\emph{ Twitter    (TW)  }  }    & 41,652,231       &2,405,026,092   & 7.59  & 9.42 & 1.23x \\
{\emph{ Friendster   (FS)   }  } & 65,608,366       &3,612,134,270   & 14.6  & 19.1 & 1.30x \\
\end{tabular}

\captionof{table}{\small
\textbf{Statistics about tested graphs and memory usage of \ourtree{} and \aspen{} in GiB.}
\hide{
  Statistics about our input graphs,
  memory usage using nested \ourtree{s} using difference encoding (in GiB),
  memory usage of the Aspen graph representation (in GiB), and
  $\frac{\text{Aspen}}{\text{CC(D)}}$, the multiplicative factor
  reduction in space from using \ourtree{s}.}
\hide{
\captionof{table}{\small Statistics about our input graphs,
  memory usage using nested \ourtree{s} using difference encoding
  (CC(D)) in GiB, memory usage of the Aspen graph representation in
  GiB, and $\frac{\text{Aspen}}{\text{Ours}}$, the multiplicative
  factor reduction in space from using \ourtree{s}.
}}
\label{table:sizes-and-memory}}
\vspace{-1em}
\end{table}

\hide{
\begin{figure}[!t]
  \includegraphics[width=0.4\textwidth]{figures/relative_graph_sizes}
    \caption{\small

    Relative space of different choices of map datatype in our graph
    representation. The acronyms indicate the type of tree used for
    the vertex trees (-V) and edge trees (-E) where $\mathsf{P}$ is
    \ptree{} and $\mathsf{C}$ is \ourtree{}. For example
    $\mathsf{PC}$ means that the vertex trees use \ptree{s} and the
    edge trees use \ourtree{s}. The best solution is
    $\mathsf{\ourgraph{}}$, which are nested \ourtree{s} using
    difference encoding. The values on top of each bar are the size of
    the representation in GiB. 
    }\label{fig:relative_graph_sizes}
\end{figure}
}

\subsection{Graph Processing and Graph Streaming}
\label{sec:graphs}
Our last set of experiments study the performance of \ourtree{s} for a
set of standard benchmarks from the graph processing and graph
streaming literature. Our evaluation roughly follows Aspen's and we
compare our performance and space usage with that of Aspen and its
\ctree{} implementation.

\ifx\confversion\undefined
\myparagraph{Graph Data.}
\emph{DBLP} is co-authorship network based on research papers in
computer science. \emph{YouTube} is a social network graph based on
YouTube.
\emph{USA-Road (RO)} is an undirected road network from the DIMACS
challenge~\cite{dimacs-shortest-paths}.  \emph{LiveJournal (LJ)} is a
directed graph of the LiveJournal social
network~\cite{boldi2004webgraph}. \emph{com-Orkut (CO)} is an
undirected graph of the Orkut social network. \emph{Twitter (TW)} is a
directed graph of the Twitter network~\cite{kwak2010twitter}.
\emph{Friendster (FR)} is an undirected graph describing friendships
from a gaming network. The DBLP, YouTube, and Friendster graphs are
obtained from the SNAP dataset~\cite{leskovec2014snap}.
We note that some of our inputs (like the LiveJournal graph) are
originally directed, and we symmetrize them before applying our
algorithms to maintain consistency with prior work on
Aspen~\cite{dhulipala2019low} and GBBS~\cite{DhulipalaBS21} that
symmetrize graphs in their evaluations. \cref{table:sizes-and-memory}
shows information about our graph inputs, including the number of
vertices, edges, and space used.
\else
\myparagraph{Graph Data and Space Usage.}
Most of the graphs we study are Web graphs and social
networks which are low-diameter graphs that are frequently used in
practice. To also test on high-diameter graphs, we ran our
implementations on a road network.  Complete details about our inputs
are in the full version of the paper.  \cref{table:sizes-and-memory}
shows information about our graph inputs, including the number of
vertices, edges, and space used.
\fi

\hide{ 
We evaluate four graph representations based on whether vertex and
edge trees use \ourtree{s} or \ptree{s}, and whether difference encoding
is used. \cref{fig:relative_graph_sizes} shows the relative size of
each graph format. We see that the smallest format in all cases is
\defn{\ourgraph{}}, which applies \ourtree{s} with difference encoding
for both vertex and edge trees. Using this format yields a space
improvement of between 4--9.7x over using \ptree{s} for both trees. For
the graphs with high average-degree, most of the savings come from
using \ourtree{s} for the edge trees. However, for the graphs with
lower average degree, like YouTube and USA-Road, which have average
degrees of just 5.29 and 2.41 respectively, we obtain an additional
1.62--1.83x space improvement by using \ourtree{s} for the vertex
trees. Adding difference encoding to both trees yields between
1.05--1.32x space improvement.

Next, we compare our space usage with Aspen's.
\cref{table:sizes-and-memory} shows the space used by the most
memory-efficient implementation using \ourtree{s} (\ourgraph{}), and
Aspen using \ctree{s}.
Note that \ctree{s} in Aspen are also
difference encoded, so the main difference between the two
representations is that \ourgraph{} also uses \ourtree{s} to chunk the
vertex tree, and that \ourtree{s} employ a deterministic strategy for
chunking. \ourgraph{} achieves consistently lower space compared with
Aspen, ranging between 1.3x for Friendster, our largest graph, to a
maximum space improvement for 2.69x on USA-Road, our sparsest graph.
The space savings come from chunking the vertex trees which is not
possible in Aspen since the \ctree{} implementation is specialized for
edge trees.
Adding difference encoding to both trees yields between
1.05--1.32x space improvement.
}

We evaluate five graph representations including using \pam{}, Aspen,
\ourtree{} with or without difference encoding, and GBBS.
Aspen uses \ctree{s} as edge trees and leaves vertex trees
uncompressed using \ptree{s}.
GBBS is a state-of-the-art static graph processing library which
represents graphs as static arrays using difference encoding, which
serves as our baseline of graph representation.
\cref{fig:relative_graph_sizes} shows the relative size of each graph
format. We see that the smallest format in all cases is
\defn{\ourtree{} (Diff)}, which applies \ourtree{s} with difference
encoding for both vertex and edge trees.
Using this format yields a space improvement of between 4--9.7x over
just using \ptree{s}.
For the graphs with high average-degree, most of the savings come from
using \ourtree{s} for the edge trees.
Adding difference encoding to both trees yields between 1.05--1.32x
space improvement.
\ourtree{s} are also 1.3--2.6x more space-efficient than Aspen.
Note that \ctree{s} in Aspen are also difference encoded, so the main
difference between the two representations is that \ourgraph{} also
uses \ourtree{s} to chunk the vertex tree, and that \ourtree{s} employ
a deterministic strategy for chunking.
\ourtree{s} with difference encoding achieves consistently lower space
compared with Aspen, ranging between 1.3x for Friendster, our largest
graph, to a maximum space improvement for 2.62x on USA-Road, our
sparsest graph. The space savings come from chunking the vertex trees,
which is not possible in Aspen, since the \ctree{} implementation is
specialized for edge trees.


\begin{table}[!t]
\footnotesize
\setlength{\tabcolsep}{2pt}
\centering

\begin{tabular}[t]{ll ccccccc}
  \toprule
&  & \multicolumn{2}{c}{\bf Aspen} & \multicolumn{4}{c}{\bf Ours} & \\
\cmidrule(lr){3-4}
\cmidrule(lr){5-8}
 &  {Graph} & {FS} & {FS Time}  & {No-FS} & {FS} & {$\frac{\text{FS}}{\text{No-FS}}$} & {FS Time} & {$\frac{\text{Aspen}}{\text{Ours}}$}\\

\midrule
\parbox[t]{2.5mm}{\multirow{3}{*}{\rotatebox[origin=c]{90}{\bf BFS}}}
 &  {\em LiveJournal}         & 21.7 & 3.82    & 19.8 & 17.5 & 1.13x & 1.38    & 1.24x \\
 &  {\em com-Orkut}           & 15.3 & 2.35    & 14.5 & 12.4 & 1.16x & 1.12    & 1.23x \\
 &  {\em Twitter}             & 138  & 37.8    & 125  & 112  & 1.11x & 12.5    & 1.23x \\

\midrule
\parbox[t]{2.5mm}{\multirow{3}{*}{\rotatebox[origin=c]{90}{\bf MIS}}}
& {\em LiveJournal}        & 55.3   & 3.82    & 72.0  & 45.7 & 1.57x  & 1.38  & 1.21x \\
& {\em com-Orkut}          & 70.2   & 2.35    & 96.9  & 69.2 & 1.40x  & 1.12  & 1.01x \\
& {\em Twitter}            & 1022   & 37.8    & 1190  & 971  & 1.22x  & 12.5  & 1.05x \\

\midrule
\parbox[t]{2.5mm}{\multirow{3}{*}{\rotatebox[origin=c]{90}{\bf BC}}}
& {\em LiveJournal}        & 74.6 & 3.82         & 82.1  & 72.3 & 1.13x  & 1.38  & 1.03x \\
& {\em com-Orkut}          & 76.3 & 2.35         & 88.6  & 78.2 & 1.13x  & 1.12  & 0.975x \\
& {\em Twitter}            & 1150 & 37.8         & 2735  & 1030 & 2.65x  & 12.5  & 1.11x \\

\end{tabular}
\caption{\small
\textbf{Parallel running times (in milliseconds) for \aspen{} and
our implementation. }
We show the algorithm performance
without flat snapshots (\textbf{No-FS}), with flat snapshots
(\textbf{FS}), and the time to computing the flat snapshot (\textbf{FS
Time}).}
\label{table:flat-snapshot-comparison}
\vspace{-1.5em}
\end{table}


\myparagraph{Graph Algorithm Performance.}
We study the performance of three fundamental graph kernels:
breadth-first search (BFS), single-source betweenness centrality (BC),
and maximal independent set (MIS). Our implementations are based on
those in Aspen. We study performance using our most space-efficient
version (\ourgraph{}). Following Aspen, our implementation also
supports the \emph{flat snapshot} object, which is an array storing
all vertices in the current graph.  The idea is that instead of
accessing edges for a vertex through the vertex tree (performing tree
traversal), algorithms directly access edge trees through the flat
snapshot.

\cref{table:flat-snapshot-comparison} shows performance results for
three of our graph datasets.
Across all three kernels our implementations are 1.12x faster than
Aspen's implementations on average. We observe that flat snapshots can
be generated 2.09--3.02x faster in \cpam{} due to \ourtree{s} requiring
fewer cache-misses to traverse than \ptree{s} when creating flat
snapshot array. We note that the implementation of edgeMap and other
primitives from Ligra (including constants and other tuning
parameters) are exactly the same in both \cpam{} and Aspen.  Aspen
also difference encodes in its edge trees (represented using
\ctree{s}). The performance improvements that we observe are therefore
a result of \ourtree{s} providing faster flat snapshots, and having
better balance in chunk sizes compared to the randomized approach used
in \ctree{s}.

\hide{
\begin{figure}[!t]
  \includegraphics[width=0.4\textwidth]{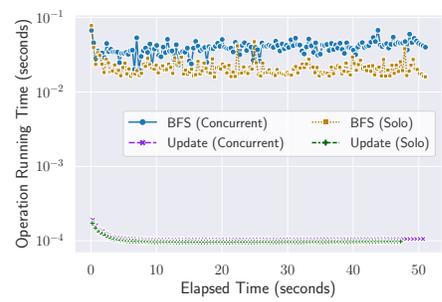}
    \caption{\small Time series plot illustrating operation running
    times when running BFS queries with batch-insertions
    of edges concurrently (Concurrent), and when queries and updates
    are run individually (Solo) on the LiveJournal graph.}\label{fig:concurrent_update_query}
  \vspace{-1em}
\end{figure}
}

\ifx\confversion\undefined
\begin{figure}[!t]
  \includegraphics[width=0.4\textwidth]{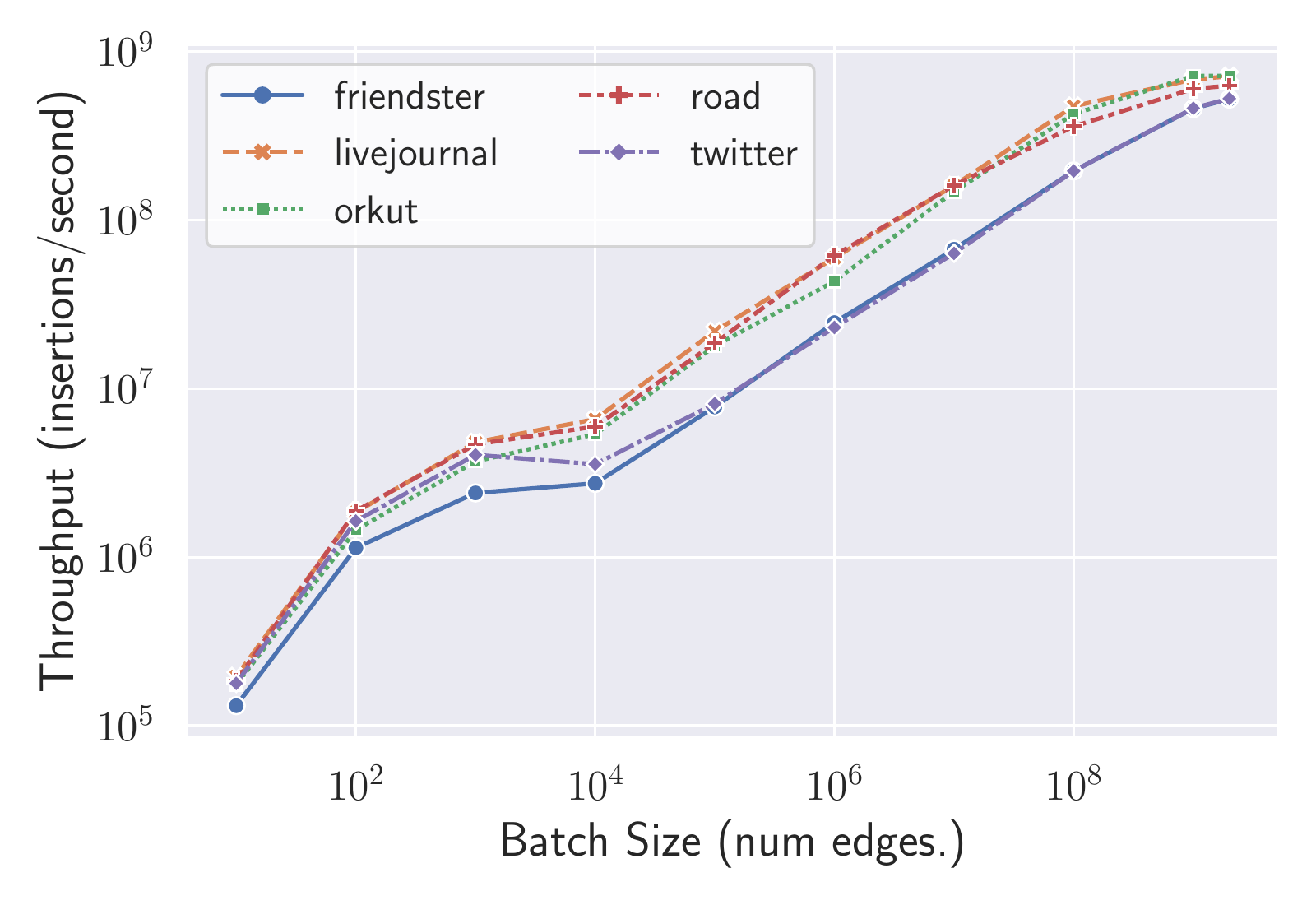}
    \caption{\small \revised{Edge insertion throughput (insertions per second)
    for our graph update algorithms as a function of the number of
    edge insertions in each batch. The throughput for batch deletions
    is similar to that of
    insertions.}}\label{fig:throughput_vs_batchsize}
  \vspace{-1em}
\end{figure}

\myparagraph{Graph Batch-Update Throughput}
In this section we study the performance of our graph representation
using \ourtree{s} when performing updates in varying size batches. We
focus on algorithms for inserting and deleting batches of edges, since
inserting vertices (along with incident edges) can easily be done
using the edge insert primitive, and vertex deletions simply use
\multidelete{} on \ourtree{s}. Our experiment follows the methodology
used in Aspen. To generate updates, we sample directed edges from an
rMAT generator~\cite{ChakrabartiZhanFaloutsos2004} with $a=0.5, b=c=0.1, d=0.3$.
We use the same generator for our concurrent update and query
experiment in Section~\ref{sec:graphs}.
For a batch of size $K$, we generate $K$ directed edge updates from
the stream (note that there can be duplicates) and repeatedly insert
the edges and delete the edges from the batch, reporting the median of
three trials. We note that the times we report include the time to
sort and remove duplicates from the batch.

\cref{fig:throughput_vs_batchsize} shows the throughput of
batch edge insertions (insertions per second) as a function of the
batch size.  We note that the throughput for deletions are close to
that of insertions (within 10\% across all graphs). To remove
clutter, we show results on the five largest graphs in our datasets.
We observe that the throughput of our graph representation improves
with increasing batch size; for the largest batch size, the algorithm
achieves a maximum throughput of between 719M edge insertions per
second for the com-Orkut graph, and a minimum of 527M edge insertions
per second for the Twitter graph. We compared these results with those
of Aspen on the same machine and find that we obtain 1.62x higher
throughput across the three graphs both systems consider in this
experiment, and an average throughput increase of 1.65x across these
graphs.
\fi

\myparagraph{Concurrent Updates and Queries.}
Our last experiment concurrent updates and queries on graphs. The
experiment performs $n$ undirected edge insertions drawn
from the rMAT generator 
\ifx\confversion\undefined
described above. 
\else
(details provided in the full version). 
\fi
We use a batch size of $5$ in the
updates ($10$ directed edges are inserted per batch). We then spawn
two parallel jobs, one performing the updates one batch after the
other, and the other performing BFS queries, one after the other. Both
the updates and queries are parallel (i.e., they internally make use
of parallelism).

\cref{fig:concurrent_update_query} shows the result of the experiment.
We find that the concurrent queries are 1.85x slower on average than
the queries in isolation, and that the concurrent updates are
1.07x slower on average than updates in isolation. In the concurrent
setting, the average latency to make one of the update batches visible
is 100 microseconds, and the updates achieve a throughput of 94,000
undirected edge updates per second.
%
We leave further optimizations and a more in depth study of the graph
setting for future work with our system.
%

\section{Conclusion}

We have presented \ourtree{}, a deterministic compressed ordered map
data structure and an implementation of the structure in a library
\cpam{}.  The important features of \ourtree{}s and its
implementation in \cpam{} include the following.
\setlength{\itemsep}{0pt}
\begin{itemize}[topsep=1.5pt, partopsep=0pt,leftmargin=*]
  \item It is purely functional allowing for persistent snapshots while
    updates are being made, and safe for parallelism.
  \item It supports sequences, ordered sets, ordered maps, and
    augmented maps, with a wide variety of functions on them.
  \item It provides theoretical bounds on work, span, and space.
  \item It achieves fast sequential time and gets up to 100x speedup on 72 cores
    with 144 hyperthreads.
  \item It achieves memory usage that is close to a compressed array and up to an order
    of magnitude smaller than \pam.
  \item It is internally memory manged using reference counting.
  \item It is backward compatible with \pam{}.
  \item It has been used to implement the full functionality of Aspen while
    improving runtime and/or space.
\end{itemize}
\revised{For future work, we are interested in extending \ourtree{s} to support
higher-fanout internal nodes, similar to $B$-trees, which would allow
users to improve query latency at the expense of increased work when
performing updates.} Other future work includes applying
\ourtree{s} to improve space utilization in databases, and to improve
the performance of collection-based applications using non-volatile
memory.

\section*{Acknowledgement}
This work was supported by the National Science Foundation
grants CCF-1901381, CCF-1910030, CCF-1919223, CCF-2103483, and CCF-2119352.

\bibliographystyle{ACM-Reference-Format}
\bibliography{../bibliography/strings,../bibliography/main}

%

\end{document}